\documentclass[aos]{imsart}

%% Packages
\RequirePackage{amsthm,amsmath,amsfonts,amssymb}
\RequirePackage[colorlinks,citecolor=blue,urlcolor=blue]{hyperref}
\RequirePackage[authoryear]{natbib}
\RequirePackage{graphicx}

\usepackage{mathtools}
\usepackage{tikz} 
\usepackage{amsthm}

\usepackage{hyperref}
\usepackage[capitalise]{cleveref}
\usepackage{multicol}
\usepackage{algorithm}
\usepackage[noend]{algorithmic}

\startlocaldefs
%%%%%%%%%%%%%%%%%%%%%%%%%%%%%%%%%%%%%%%%%%%%%%
%%                                          %%
%% Uncomment next line to change            %%
%% the type of equation numbering           %%
%%                                          %%
%%%%%%%%%%%%%%%%%%%%%%%%%%%%%%%%%%%%%%%%%%%%%%
\numberwithin{equation}{section}
%%%%%%%%%%%%%%%%%%%%%%%%%%%%%%%%%%%%%%%%%%%%%%
%%                                          %%
%% For Axiom, Claim, Corollary, Hypothesis, %%
%% Lemma, Theorem, Proposition              %%
%% use \theoremstyle{plain}                 %%
%%                                          %%
%%%%%%%%%%%%%%%%%%%%%%%%%%%%%%%%%%%%%%%%%%%%%%
\theoremstyle{plain}

\newtheorem{theorem}{Theorem}[section]
\newtheorem{lemma}[theorem]{Lemma}
\newtheorem*{problem}{Problem}
\newtheorem{assumption}{Assumption}
\newtheorem{corollary}[theorem]{Corollary}
%%%%%%%%%%%%%%%%%%%%%%%%%%%%%%%%%%%%%%%%%%%%%%
%%                                          %%
%% For Assumption, Definition, Example,     %%
%% Notation, Property, Remark, Fact         %%
%% use \theoremstyle{remark}                %%
%%                                          %%
%%%%%%%%%%%%%%%%%%%%%%%%%%%%%%%%%%%%%%%%%%%%%%
\theoremstyle{definition}
\newtheorem{definition}[theorem]{Definition}
\newtheorem{example}{Example}[section]
\newtheorem{remark}{Remark}[section]

%%%%%%%%%%%%%%%%%%%%%%%%%%%%%%%%%%%%%%%%%%%%%%
%% Please put your definitions here:        %%
%%%%%%%%%%%%%%%%%%%%%%%%%%%%%%%%%%%%%%%%%%%%%%

\endlocaldefs

%---Commands----

\newcommand{\LL}{\mathcal{L}}

\newcommand{\GG}{\mathcal{G}}

\newcommand{\MM}{\mathcal{M}}

\newcommand{\PP}{\mathcal{P}}

\newcommand{\R}{\mathbb{R}}

\usetikzlibrary{shapes,arrows,arrows.meta,positioning,calc,chains,shapes.symbols}
\usetikzlibrary{automata}

\newcommand{\indep}{\perp\negmedspace\!\!\perp}

\newcommand{\Ed}{E_{\rightarrow{}}}
\newcommand{\Eb}{E_{\leftrightarrow{}}}

\DeclareMathOperator{\rank}{rank}

\DeclareMathOperator{\HSIC}{HSIC}
\DeclareMathOperator{\trace}{tr}

\DeclareMathOperator*{\argmin}{arg\,min}
\DeclareMathOperator{\maxflow}{max-flow}
\DeclareMathOperator{\Sym}{Sym}

\def\newop#1{\expandafter\def\csname #1\endcsname{\mathop{\rm
#1}\nolimits}}

\newop{Inv}
\newop{conv}
\newop{pa}
\newop{Pa}
\newop{de}
\newop{nd}
\newop{GL}
\newop{O}
\newop{ch}
\newop{CS}
\newop{diag}
\newop{Var}
\newop{top}
\newop{sib}
\newop{Sib}
\newop{sp}
\newop{an}

\begin{document}

\begin{frontmatter}
\title{Parameter identification in linear non-Gaussian causal models under general confounding}
%\title{A sample article title with some additional note\thanksref{t1}}
\runtitle{Identification in linear non-Gaussian causal models}
%\thankstext{T1}{A sample additional note to the title.}

\begin{aug}
%%%%%%%%%%%%%%%%%%%%%%%%%%%%%%%%%%%%%%%%%%%%%%%
%% Only one address is permitted per author. %%
%% Only division, organization and e-mail is %%
%% included in the address.                  %%
%% Additional information can be included in %%
%% the Acknowledgments section if necessary. %%
%% ORCID can be inserted by command:         %%
%% \orcid{0000-0000-0000-0000}               %%
%%%%%%%%%%%%%%%%%%%%%%%%%%%%%%%%%%%%%%%%%%%%%%%
\author[A]{\fnms{Daniele}~\snm{Tramontano}\ead[label=e1]{daniele.tramontano@tum.de}},
\author[A]{\fnms{Mathias}~\snm{Drton}\ead[label=e2]{mathias.drton@tum.de}}
\and
\author[A]{\fnms{Jalal}~\snm{Etesami}\ead[label=e3]{j.etesami@tum.de}}
%%%%%%%%%%%%%%%%%%%%%%%%%%%%%%%%%%%%%%%%%%%%%%
%% Addresses                                %%
%%%%%%%%%%%%%%%%%%%%%%%%%%%%%%%%%%%%%%%%%%%%%%
\address[A]{Technical University of Munich; School of Computation, Information and Technology, Germany\printead[presep={,\ }]{e1,e2,e3}}
\end{aug}

\begin{abstract}
Linear non-Gaussian causal models postulate that each random variable is a linear function of parent variables and non-Gaussian exogenous error terms. We study identification of the linear coefficients when such models contain latent variables. Our focus is on the commonly studied acyclic setting, where each model corresponds to a directed acyclic graph (DAG).
For this case, prior literature has demonstrated that connections to overcomplete independent component analysis yield effective criteria to decide parameter identifiability in latent variable models. However, this connection is based on the assumption that the observed variables linearly depend on the latent variables. Departing from this assumption, we treat models that allow for arbitrary non-linear latent confounding. Our main result is a graphical criterion that is necessary and sufficient for deciding the generic identifiability of direct causal effects. Moreover, we provide an algorithmic implementation of the criterion with a run time that is polynomial in the number of observed variables. Finally, we report on estimation heuristics based on the identification result and explore a generalization to models with feedback loops.
\end{abstract}

\begin{keyword}
\kwd{Causal effect}
\kwd{graphical model}
\kwd{independent component analysis}
\kwd{latent variable model}
\kwd{structural causal model}
\end{keyword}

\end{frontmatter}
%%%%%%%%%%%%%%%%%%%%%%%%%%%%%%%%%%%%%%%%%%%%%%
%% Please use \tableofcontents for articles %%
%% with 50 pages and more                   %%
%%%%%%%%%%%%%%%%%%%%%%%%%%%%%%%%%%%%%%%%%%%%%%
%\tableofcontents

\section{Introduction}
Graphical models, directed or undirected, are ubiquitous in modern statistical practice for modeling multivariate distributions; see, e.g., \cite{lauritzen:1996, handbook}. In particular, structural equation models (SEMs) associated with directed \emph{acyclic} graphs (DAGs) provide a concise and effective way of stating the additional assumptions necessary to identify the causal parameters of interest \citep{pearl:2009,sprites:caus:pred}. As argued in \citet{pearl:2013}, understanding a causal phenomenon for linear SEMs is often a necessary step towards a generalized understanding of the same in a nonparametric framework.

The specific framework we focus on in this paper are linear non-Gaussian models that allow for general, possibly non-linear, confounding. Each such model is naturally associated to an acyclic directed mixed graph (ADMG); compare \cite{richardson:2023,zhao:2025} and references therein. Let $X=(X_v)_{v\in V}$ be a collection of observed random variables, and let $\GG=(V,\Ed,\Eb)$ be an ADMG whose vertices index the random variables $X_v$ and which contains two types of edge sets $\Ed,\Eb\subseteq~V\times~V\setminus\{(v,v): v\in V\}$. The edges in $\Ed$ are directed, and we depict them by $u\xrightarrow[]{} v$. Those in $\Eb$ are bidirected and depicted by $u\xleftrightarrow{}v$. The model encoded by $\GG$ posits that
\begin{equation}
\label{eq:admg:model}
X_v = \sum_{w\,:\,w\to v\in \Ed} \lambda_{wv}X_w + \varepsilon_v,\quad v\in V,
\end{equation}
where possible latent confounding is subsumed in the error variables $\varepsilon_v$, which are allowed to be dependent in accordance with the bidirected edges in $\Eb$.  In particular, if $v\xleftrightarrow{} w\in \Eb$, then the two errors $\varepsilon_v$ and $\varepsilon_w$ may be (arbitrarily) dependent.

\begin{example}
\label{ex:iv}
To illustrate the above definition, consider the graph from \cref{fig:IV}. It specifies an instrumental variable model for the joint distribution of three observed variables. The equations from~\eqref{eq:admg:model} take the form:
\begin{align}
X_1 &= \varepsilon_1, & 
X_2 &= \lambda_{12}X_1+\varepsilon_2, &
X_3 &= \lambda_{23}X_2+\varepsilon_3,
\end{align}
where $\varepsilon_1$ is independent of $(\varepsilon_2,\varepsilon_3)$ but $\varepsilon_2$ and $\varepsilon_3$ may be dependent.
\end{example}

\begin{figure}[t]
    \centering
    \scalebox{0.9}{
        \begin{tikzpicture}[scale = 0.3,->,>=triangle 45,shorten >=1pt,
            auto, thick,
            main node/.style={rectangle,draw,font=\sffamily\bfseries}] 
          
            \node[main node,rounded corners] (1) {$X_1:$ Tax Rate}; 
            \node[main node,rounded corners] (2) [right=1.5cm of 1] {$X_2:$ Mom's Smoking}; 
            \node[main node,rounded corners] (3) [right=1.5cm of 2] {$X_3:$ Baby's Weight};
            
            \path[every node/.style={font=\sffamily\small}] 
            (1) edge node[below] {$\lambda_{12}$} (2)
            (2) edge node[below] {$\lambda_{23}$} (3);

            \draw[dashed][<->] (2) to[bend left=40]node[main node,rounded corners,fill=none,above=0.2cm] {\color{black} Confounders} (3);
          \end{tikzpicture}
      }
    \caption{Instrumental variable graph based on \cite{evans:1999}.}
    \label{fig:IV}
\end{figure}

In this paper, we treat the problem of deciding which of the direct causal effects $\lambda_{wv}$ in~\eqref{eq:admg:model} can be identified from the joint distribution of the vector of observed variables $X$.  Our main results give a complete graphical characterization in terms of the ADMG $\GG$, an efficient algorithm to check the resulting identifiability criterion, and a simple practical estimation method that is based on empirical measures of dependences among estimates of the errors $(\varepsilon_v)_{v\in V}$.  We should highlight that our characterization targets \emph{generic} identifiability, which is the notion most suitable for problems such as the instrumental variable model from \cref{ex:iv}.  There, the key coefficient of interest $\lambda_{23}$ is identified as a ratio of covariances, $\text{Cov}[X_1, X_3]/\text{Cov}[X_1, X_2]$, but only if the denominator is nonzero which requires the genericity constraint that $\lambda_{12}\not=0$.  As part of our results, we also develop a framework for making genericity conditions for the infinite-dimensional set of non-Gaussian error distributions, which we justify via cumulants truncated at arbitrary order.

\subsection{Related Work}
For fully nonparametric SEMs, the ID algorithm \citep{shpitser:2006, kivva2022revisiting, shpitser:2023, kivva2023identifiability} is sound and complete for determining global identifiability in a given ADMG.  In contrast to the generic setting treated in this paper, global identifiability requires identifiability under every single distribution in the model. It follows from the work of \citet{drton:2011} that the  graphical criterion underpinning the ID algorithm also applies to \emph{global} identifiability within linear Gaussian models.  However, the graphical prerequisites for achieving global identification are frequently overly restrictive. For example, any ADMG containing a \emph{bow} (a pair of nodes, $u, v\in V$, such that $u\to v, u\xleftrightarrow{}v\in\GG$) would fail to meet the criteria for \emph{global} identifiability, thus overlooking significant scenarios such as the IV model illustrated in \cref{fig:IV}. Consequently, when studying linear SEMs, researchers have shifted their focus towards \emph{generic} identifiability results, for which much progress has been made recently, but for which a complete characterization is still lacking; see, e.g., \citet{kumor:2020, foygel:2022,sturma:2025}.

Non-Gaussianity of the error term has been extensively employed to achieve identifiability of the graphical structure of causal models; see \citet{shimizu:2022} for a recent account. In contrast, its application to causal effect identification has received little attention \citep{salehkaleybar:2020,kivva:2023,shuai:2023, tramontano:2024:icml,tramontano:2025}. All the works just mentioned explicitly model the confounding as linear. This approach, on the one hand, allows one to draw on the vast literature on overcomplete independent component analysis (OICA) in order to obtain stronger identifiability results \citep{eriksson:2004}.  On the other hand, however, it restricts the possible confounding structures. Moreover, as opposed to ICA in the fully observed case, overcomplete ICA is not separable \citep[Thm.~4]{eriksson:2004}, implying that the only algorithms for solving OICA that come with theoretical guarantees require making parametric distributional assumptions and using ad-hoc EM-type algorithms to solve the optimization problem; see, e.g., \cite{lewicki:2000}.

The work most similar to ours in terms of distributional assumptions is that of \citet{wang:2023}, which focuses on structural identifiability. \citet{wang:2023} note that bow-free acyclic graphs are identifiable from observational data and provide an estimation algorithm for such graphs. \citet{liu:2021} extended the algorithm to learn graphs with multi-directed edges.
\subsection{Organization of the Paper}

The rest of the paper is organized as follows. Section \ref{subsec:notation} contains standard graphical model notation used in the rest of the paper. In \cref{sec:lingam:mixed}, we formally define the identifiability problem we study. Section \ref{sec:id:result} contains the main results of our work; we provide a necessary and sufficient graphical condition for generic identifiability in the model under study. In \cref{sec:cert:ident}, we prove that our criterion can be certified in polynomial time in the size of the graph. Section \ref{sec:gen:ass} contains a detailed analysis of the genericity assumption.
% In \cref{sec:model}, we apply our results from \cref{sec:id:result} to the identifiability of the causal graph, providing new insights on the model equivalence of two ADMGs. 
In \cref{sec:cycles}, we provide partial results about the identification for cyclic models. In \cref{sec:comp}, we note that when the identification criterion is met, the parameters can be estimated as the solution to a suitable optimization problem, and we present a simulation study to assess the performance of the estimation method. In \cref{sec:conc}, we draw final conclusions and suggest future research directions. \citet{tramontano:2024:supp} contains further preliminary material and details of the proofs.

\subsection{Notation}
\label{subsec:notation}

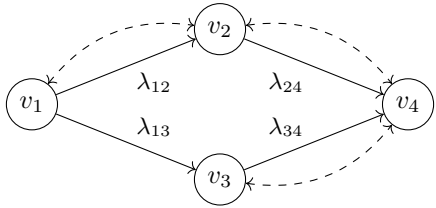
\begin{figure}[t]
    \centering
    \begin{tikzpicture}
            \node[draw, circle, inner sep=3pt] (A) at (-6,0) {$v_1$};
            \node[draw, circle, inner sep=3pt] (B) at (-3.5,1) {$v_2$};
            \node[draw, circle, inner sep=3pt] (C) at (-3.5,-1) {$v_3$};
            \node[draw, circle, inner sep=3pt] (D) at (-1,0) {$v_4$};
            
            \draw[->] (A) -- (B) node[pos = 0.7, below = 0.1] {\small $\lambda_{12}$};
            \draw[->] (A) -- (C) node[pos = 0.7, above = 0.1] {\small $\lambda_{13}$};
            \draw[->] (B) -- (D) node[pos = 0.3, below = 0.1] {\small $\lambda_{24}$};
            \draw[->] (C) -- (D) node[pos = 0.3, above = 0.1] {\small $\lambda_{34}$};

            \draw[dashed][<->] (A) to[bend left=30] (B);
            \draw[dashed][<->] (B) to[bend left=30] (D);
            \draw[dashed][<->] (C) to[bend right=30] (D);
        \end{tikzpicture}
        
    \caption{An acyclic directed mixed graphs (ADMG) with 4 nodes.}
    \label{fig:4:nodes:ex}
\end{figure}

A mixed graph is a triple $\GG=(V,\Ed,\Eb)$, where $\Ed,\Eb\subset~V\times~V\setminus\{(v,v): v\in V\}$.  We depict the pairs in $\Ed$ by $u\xrightarrow[]{} v$ and the ones in $\Eb$ by $u\xleftrightarrow{}v$; we refer to them as directed and bidirected edges, respectively.  For an example, see \cref{fig:4:nodes:ex}.

Let $u,v\in V$ be two vertices in $\GG$.
A \emph{directed} (respectively \emph{bidirected}) path from $u$ to $v$ is a sequence of nodes $\pi=(u=v_0,\dots, v_k=v)$ such that $(v_i, v_{i+1})\in\Ed$ (respectively $\Eb$) for all $i=0,\dots,k-1$.  This includes the case $k=0$, where $u=v$ and the path has no edges; we call such a path trivial.
We denote by $\mathcal{P}(u,v)$ the set of all \emph{directed} paths from $u$ to $v$.

A directed cycle is a \emph{non-trivial} directed path from a node $u$ to itself. The graph $\GG$ is acyclic if it contains no directed cycles; we refer to this class of graphs as acyclic directed mixed graphs (ADMG).  \cref{fig:4:nodes:ex} shows one instance. If the graph is acyclic, we can define a causal order on the nodes of $\GG$, that is, a total order $\le$ on $V$ such that $u\le v$ whenever $\PP(u,v)\not=\emptyset$. When considering parameter matrices associated to $\GG$, we will typically fix a causal order $\le$ and assume that the vertices in $V$ are enumerated as $v_1,\dots,v_p$ with $i\le j$ with $v_i\le v_j$; compare \cref{fig:4:nodes:ex}.

We will consider the following genealogical relations commonly used to indicate relationships between the vertices of an ADMG (parents, ancestors, children, descendants, siblings):
\begin{equation*}
    \begin{aligned}
        \pa(v) &:= \{u\in V\::\: u\to v\in\GG\}, \qquad&&\an(v) := \{u\in V\::\: \PP(u,v)\neq\emptyset\in\GG\}, \\
        \ch(v) &:= \{u\in V\::\: v\to u\in\GG\}, \qquad&&\de(v) := \{u\in V\::\: \PP(v,u)\neq\emptyset\in\GG\},\\
        \sib(v) &:= \{u\in V\::\: u\xleftrightarrow{} v\in\GG\},\qquad&& \Sib(v) := \sib(v)\cup \{v\}.
    \end{aligned}
\end{equation*} 
Note that $v\in\an(v)$ and $v\in\de(v)$, via trivial paths.  For a subset of vertices $U\subseteq V$, we define $\pa(U)=\cup_{u\in U}\pa(u)$ and make the analogous convention for the other relations.

Let $U=\{u_1,\dots,u_n\}$ and $W=\{w_1,\dots,w_n\}$ be two subsets of $V$ that have the same cardinality $n$, and for which we have fixed an ordering of their elements. Let $S_n$ be the symmetric group on $[n]=\{1,\dots,n\}$.
We say that $\Pi=(\pi_1,\dots,\pi_n)$ is a system of directed paths between $U$ and $W$, if there exists a permutation $\sigma_\Pi\in S_n$ such that $\pi_k\in\PP(u_k,w_{\sigma_\Pi(k)})$ for every $k\in[n]$. 
    We denote the set of all such systems by $\PP(U, W)$. A system $\Pi\in\PP(U,W)$ is called \emph{non-intersecting} if $\pi_k\cap \pi_l=\emptyset$ for $k\neq l$. The set of all non-intersecting systems in $\PP(U, W)$ is denoted by $\tilde{\PP}(U, W)$; see  \cref{fig:non-intersecting} for an example. 

    \begin{figure}[t]
        \centering
        (a)
                \begin{tikzpicture}
           \node[] (1) at (0,0.5) {$u_1$};
           \node[] (2) at (0,-0.5)   {$u_2$};
           \node[] (3) at (1.5,0.5)   {$w_1$};
           \node[] (4) at (1.5,-0.5)  {$w_2$};
           
           \draw[->] (1) to (3);
           \draw[->] (2) to (4);
        \end{tikzpicture}
            \hspace{1cm}
            (b)
        \begin{tikzpicture}
           \node[] (0) at (0,0) {$c$};
           \node[] (1) at (-1,0.5) {$u_1$};
           \node[] (2) at (-1,-0.5)   {$u_2$};
           \node[] (3) at (1,0.5)   {$w_1$};
           \node[] (4) at (1,-0.5)  {$w_2$};
           
           \draw[->] (1) to (0);
           \draw[->] (2) to (0);
           \draw[->] (0) to (3);
           \draw[->] (0) to (4);
        \end{tikzpicture}
        \caption{(a) A non-intersecting systems of directed paths from $\{u_1,u_2\}$ to $\{w_1,w_2\}$.  (b) Two intersecting directed paths with node $c$ in their intersection.}\label{fig:non-intersecting}
    \end{figure}
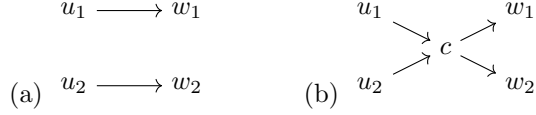

When connecting a graph $\GG$ to a statistical model, we will introduce a matrix of parameters whose entries act as weights on the directed edges. We will write $\R^{\GG_D}$ for the set of $p\times p$ real matrices $\Lambda = (\lambda_{uv})$ such that $\lambda_{uv}=0$ if $u\to v\notin \Ed$.
When $\GG$ is acyclic---as we assume throughout this work, the matrix $I-\Lambda$ is invertible for all $\Lambda\in\R^{\GG_D}$; here, $I$ denotes the identity matrix.  Indeed, when the nodes of $\GG$ are ordered according to a causal order, $(I-\Lambda)^T$ is lower triangular with all ones on the diagonal and $\det(I-\Lambda)=1$. We define $B_{\Lambda}:= (I-\Lambda)^{-T}$.  Later in Section \ref{sec:cycles}, we briefly discuss the identifiability problem in cyclic graphs, where the invertibility of $I-\Lambda$ becomes a modeling assumption.

Finally, let $U$ and $W$ be subsets of the row and column sets of a matrix $A$, respectively. We denote the submatrix containing only the rows in $U$ and the columns in $W$ as $A_{U,W}$.

\section{Linear Mixed Graph Models and Identifiability}
\label{sec:lingam:mixed}

Let $\GG=(V,\Ed,\Eb)$ be an ADMG, and let $\GG_D=(V,\Ed)$ and $\GG_B=(V,\Eb)$ be its directed and bidirected subgraphs, respectively. We say that a subset $C\subseteq V$ is \emph{connected} in the bidirected part $\GG_{B}$ if every pair of vertices $u,v\in C$ is joined by a \emph{bidirected} path in $\GG_{B}$, where every
vertex on the path is in $C$. On a fixed probability space, let $\varepsilon = (\varepsilon_1,\dots,\varepsilon_p)$ be a random vector taking values in $\R^p$ and satisfying the \emph{connected set Markov property} with respect to $\GG_{B}$ \citep{richardson:2003,drton:richardson:2008}, that is,
\[
\varepsilon_C\indep\varepsilon_{V\setminus\Sib(C)} \ \text{for all} \ \emptyset\neq C\subset V, \ C \text{ connected in } \GG_B.
\]
We denote the set of all such random vectors by $\mathcal{M}(\GG_B)$.  Note that the connected set Markov property implies but is generally stronger than requiring that $\varepsilon_u\indep\varepsilon_v$ for $u\xleftrightarrow{}v\notin\GG_B$.
\begin{definition}
\label{def:linearSEM}
The linear structural equation model $\mathcal{M}(\GG)$ corresponding to a mixed graph $\GG$ is the set of all $p$-variate real random vectors $X$ (on our fixed probability space) that solve the equation system
\begin{align*}
        X=\Lambda^T\cdot X+\varepsilon\iff X=(I-\Lambda)^{-T}\cdot\varepsilon=B_{\Lambda}\cdot\varepsilon,
\end{align*}
for a choice of $\Lambda\in\R^{\GG_D}$ and $\varepsilon\in\mathcal{M}(\GG_B)$.  The model $\mathcal{M}(\GG)$ is thus parametrized by the map
    \begin{equation*}
    \label{eq:phi:map}
        \begin{aligned}
            \Phi_{\GG}: \R^{\GG_D}\times\mathcal{M}(\GG_B)&\xrightarrow{}\MM(\GG)\\
             (\Lambda,\varepsilon) &\mapsto (I-\Lambda)^{-T}\varepsilon.
        \end{aligned}
    \end{equation*}
\end{definition}

\begin{example}
\label{ex:matrices}
    Let $\GG$ be the ADMG from \cref{fig:4:nodes:ex}. 
        The set $\mathcal{\MM}(\GG_B)$ contains all the random vectors $\varepsilon = (\varepsilon_1,\varepsilon_2,\varepsilon_3,\varepsilon_4)$ such that $(\varepsilon_1,\varepsilon_2)\indep \varepsilon_{3}$ and $\varepsilon_{1}\indep (\varepsilon_{3},\varepsilon_{4})$. 
        The space $\R^{\GG_D}$ is comprised of all matrices of the shape:
    $$
       \Lambda= 
       \setlength{\arraycolsep}{6pt}
       \begin{bmatrix}
            0 & \lambda_{12} & \lambda_{13} & 0 \\[3pt]
            0 & 0 & 0 & \lambda_{24}\\[3pt]
            0 & 0 & 0 & \lambda_{34}\\[3pt]
            0 & 0 & 0 & 0
        \end{bmatrix}.
    $$
    Accordingly, we have 
    $$
      B_{\Lambda}:=(I-\Lambda)^{-T}=  
      \setlength{\arraycolsep}{6pt}
      \begin{bmatrix}
            1 & 0 & 0 & 0 \\[3pt]
            \lambda_{12} & 1 & 0 & 0\\[3pt]
            \lambda_{13} & 0 & 1 & 0\\[3pt]
            \lambda_{12}\lambda_{24} + \lambda_{13}\lambda_{34} & \lambda_{24} & \lambda_{34} & 1
        \end{bmatrix}.
    $$
\end{example}

In this paper, we are concerned with parameter identifiability.  In other words, we ask under which conditions on $\GG$, the distribution of $X\in\mathcal{M}(\GG)$ uniquely determines entries of the coefficient matrix $\Lambda\in\R^{\GG_D}$ in the representation $X=(I-\Lambda)^{-T}\cdot\varepsilon$.
While we will not emphasize this in the sequel, the unique determination of all entries of $\Lambda$ also entails unique recovery of the distribution of $\varepsilon=(I-\Lambda)^TX$.  

As noted earlier, our interest is in a generic notion of identifiability, so we ask:

\begin{problem}
\label{prob:def}
    Under which graphical conditions on $\GG$ is a set of entries $\lambda_{uv}$ of the parameter matrix $\Lambda$ generically identifiable?
\end{problem}

To detail the problem, we make the following definition and then firm up the involved notion of genericity.

\begin{definition}
\label{def:id}
   We define the fiber of an element $X\in\MM(\GG)$ with respect to $\Phi_\GG$ as 
    \begin{equation}
    \label{eq:fibers}
        \Phi_\GG^{-1}(X):=\{(\Lambda, \varepsilon) \in \R^{\GG_D}\times\mathcal{M}(\GG_B) \;:\; \Phi_\GG(\Lambda, \varepsilon) \stackrel{d}{=} X\},
    \end{equation}
    where $\stackrel{d}{=}$ denotes equality in distribution.
    Let $\mathrm{Proj}_{\R^\GG}(\Phi_\GG^{-1}(X))$ be the projection of the set onto $\R^{\GG_D}$.  
    %The parameter $\lambda_{uv}$ associated to an edge %$u\to v\in\GG$ 
    A parameter given by a function $f(\Lambda)$ is \emph{generically identifiable} if any \emph{generic} choice of $(\Lambda, \varepsilon)$ yields a random vector $X=(I-\Lambda)^{-T}\varepsilon$ for which  it holds that
    \[
    f(\Tilde{\Lambda})=f(\Lambda) \ \text{for all} \  \tilde{\Lambda}\in\mathrm{Proj}_{\R^\GG}(\Phi^{-1}_{\GG}(X)).
    \]
\end{definition}

Requiring genericity of $\Lambda$ will mean that we exclude a fixed Lebesgue null set of $\mathbb{R}^{\GG_D}$. For instance, in the instrumental variable (IV) example depicted in \cref{fig:IV}, the unknown coefficients are $(\lambda_{12},\lambda_{23})$ and $\R^{\GG_D}\equiv\R^2$. The coefficient $\lambda_{23}$ is identifiable outside the null set given by $\lambda_{12}=0$, i.e., we exclude the case that the instrument ($X_1$) does not affect the exposure ($X_2$).

While excluding null sets of a finite-dimensional space is a standard approach in related literature \citep[\S9]{drton:2018}, speaking of a generic choice of $\varepsilon$ requires clarification as Definition~\ref{def:linearSEM} is nonparametric with respect to the distribution of the errors $\varepsilon$.  Indeed, our genericity concept has a very specific meaning, namely, that the distribution of $\varepsilon$ satisfies the following assumption.

\begin{assumption}
    \label{ass:error}
    Let $\varepsilon\in\MM(\GG_B)$. For every two vectors $a_1,a_2\in\mathbb{R}^V$, it holds that\\ $a_1^T\varepsilon\indep a_2^T\varepsilon$ implies that $a_{1u}\cdot a_{2v}=0$ whenever $u=v$ or $u\xleftrightarrow{} v\in\Eb$.
\end{assumption}

Our genericity assumption is natural in view of the Darmois-Skitovich theorem.  Indeed, this theorem amounts to exactly the statement that if $\GG_B$ is the empty graph (i.e., has no edges), then \cref{ass:error} holds for every random vector that has at most one normally distributed coordinate.  To further justify our assumption, we present in \cref{sec:gen:ass} a detailed study of two different classes of submodels for which we show that indeed only a lower-dimensional set of distributions is excluded by our assumption.  One class of submodels is built by assuming the existence of moments up to an arbitrary but fixed order.  The other class is built by assuming linearity of confounding.   

For the remainder of this work, whenever we use the term \emph{generic}, it is implied that the result holds for any matrix $\Lambda$ outside of a fixed Lebesgue measure zero subset of $\mathbb{R}^{\mathcal{G}_D}$ and for any $\varepsilon \in \mathcal{M}(\mathcal{G}_B)$ that satisfies \cref{ass:error}.

\section{Necessary and Sufficient Conditions for Generic Identifiability of Direct Causal Effects}
\label{sec:id:result}
Let  $\GG=(V,\Ed,\Eb)$ be an ADMG, and let $X$ be a random vector in the model $\MM(\GG)$, i.e., 
\[
X=\Phi_\GG(\Lambda, \varepsilon) \ \text{for} \ (\Lambda, \varepsilon)\in\R^{\GG_D}\times\mathcal{M}(\GG_B).
\]
Suppose $X$ can be generated using another pair $(\tilde{\Lambda}, \tilde{\varepsilon})\in\Phi_\GG^{-1}(X)$. From the definition of the fiber in \cref{eq:fibers}, one can see that  
\begin{equation}
\label{eq:A:matrix}
    \tilde{\varepsilon}\stackrel{d}{=}(I-\tilde{\Lambda})^{T}(I-\Lambda)^{-T}\varepsilon=(I-\tilde{\Lambda})^{T}B_\Lambda\varepsilon=: A\varepsilon.
\end{equation}
The next result shows that the entries of matrix $A=(I-\tilde{\Lambda})^{T}B_\Lambda$ can be fully specified as a function of both $\Tilde{\Lambda}$ and $B_{\Lambda}$ through the ancestral relations among the nodes of $\GG$. 
\begin{lemma}
\label{lem:A:matrix}
The entries of matrix $A$ defined in \cref{eq:A:matrix} can be written as
\begin{equation}
\label{eq:a:matrix}
    a_{vu}=b_{vu}-\sum_{w\in\\\pa(v)\cap\de(u)}\tilde{\lambda}_{wv}b_{wu}.
\end{equation}
In particular, we have $a_{vu}=0$ if $v\notin\de(u)$, and $a_{uu}=1$ for every $u\in V$.
\end{lemma}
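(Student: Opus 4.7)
The plan is to compute the entries of $A=(I-\tilde\Lambda)^{T}B_\Lambda$ by straightforward matrix multiplication and then prune the resulting sum using two structural facts: the sparsity pattern $\tilde\lambda_{wv}=0$ unless $w\to v\in\Ed$, and the standard path expansion of $(I-\Lambda)^{-1}$ on an ADMG. No deep new idea is required; the content of the lemma is really a translation of the product $(I-\tilde\Lambda)^T B_\Lambda$ into the genealogical language of $\GG$.

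First I would expand entrywise:
\[
a_{vu}=\sum_{w\in V}\bigl[(I-\tilde\Lambda)^{T}\bigr]_{vw}\,b_{wu}=\sum_{w\in V}\bigl(\delta_{wv}-\tilde\lambda_{wv}\bigr)b_{wu}=b_{vu}-\sum_{w\in V}\tilde\lambda_{wv}\,b_{wu}.
\]
Because $\tilde\Lambda\in\R^{\GG_D}$, only $w\in\pa(v)$ contribute a nonzero $\tilde\lambda_{wv}$, so the summation immediately restricts to $w\in\pa(v)$. Next, enumerating the vertices in a causal order makes $(I-\Lambda)^{T}$ lower triangular with unit diagonal, and the Neumann series (or simply induction on the causal order) gives the familiar trek/path expansion
\[
b_{wu}=\bigl[(I-\Lambda)^{-T}\bigr]_{wu}=\sum_{\pi\in\PP(u,w)}\prod_{x\to y\,\in\,\pi}\lambda_{xy}.
\]
Thus $b_{wu}$ vanishes (as a polynomial in the entries of $\Lambda$, hence in particular in the statement of the lemma) whenever $w\notin\de(u)$, and $b_{uu}=1$ courtesy of the trivial path. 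Substituting back yields exactly \eqref{eq:a:matrix}, with the summation range $w\in\pa(v)\cap\de(u)$.

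Finally, the two concluding assertions are quick consequences of acyclicity. If $v\notin\de(u)$, then $b_{vu}=0$; moreover any $w\in\pa(v)\cap\de(u)$ would concatenate a directed path $u\to\cdots\to w$ with the edge $w\to v$ to produce $v\in\de(u)$, contradicting the hypothesis. Hence the sum is empty too, and $a_{vu}=0$. For $v=u$ we have $b_{uu}=1$, while $w\in\pa(u)\cap\de(u)$ would give a directed cycle $u\to\cdots\to w\to u$, which is ruled out in an ADMG; so the sum is again empty and $a_{uu}=1$.

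The argument is essentially bookkeeping, so I would not anticipate a real obstacle. The only point worth flagging is that the path expansion of $(I-\Lambda)^{-T}$ is being used both as an algebraic identity and as a structural tool (locating the zero entries), which is what allows the unrestricted index set in the matrix multiplication to collapse to the subset $\pa(v)\cap\de(u)$.
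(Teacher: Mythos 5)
Your proof is correct and follows essentially the same route as the paper: expand the product $(I-\tilde\Lambda)^TB_\Lambda$ entrywise, restrict the sum via the sparsity of $\tilde\Lambda$ and the path-matrix property of $B_\Lambda$, and deduce the vanishing/unit-diagonal statements from acyclicity. Your write-up is slightly more explicit about why the sums are empty in the two concluding assertions, but the argument is the same.
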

\begin{proof}
    Writing the product of matrices explicitly, we get 
$$
    a_{vu}=b_{vu}-\sum_{w\in V}\tilde{\lambda}_{wv}b_{wu}.
$$
From the definition of $\R^{\GG_D}$, we know that $\tilde{\lambda}_{wv}=0$ if $v\notin\ch(w)$, while it holds that $b_{wu}=0$ if $w\notin\de(u)$, from which the claim follows.  To see that $b_{wu}=0$ if $w\notin\de(u)$, note that $B_{\Lambda}$ is a path matrix for $\GG_D$ as we detail in \citet[Lemma 1.2]{tramontano:2024:supp}.
\end{proof}

\begin{definition} 
The set of removable ancestors of a node $v\in V$ is defined as
\[
R_v:=\{u\in\an(v)\::\: \Sib(u)\setminus\Sib(v)\neq\emptyset\} = \Sib(V\setminus\Sib(v))\cap \an(v).
\]
Clearly, $v\notin R_v$.
\end{definition}

\begin{example}
\label{ex:fig2} 
Consider the graph in \cref{fig:4:nodes:ex}. In this graph, the only strict ancestor of $v_2$ is $v_1$, which has only $v_2$ as its sibling. Hence, $R_{v_2}=\Sib(v_1)\setminus\Sib(v_2)=\{v_1,v_2\}\setminus\{v_1,v_2,v_4\}=\emptyset$. 
On the other hand, $R_{v_4} = \{v_1,v_2\}$ because %$v_1\notin\sib(v_4)$, and $v_2\in\sib(v_4)$ but there is
$v_1$ belongs to both $\Sib(v_2)\setminus \Sib(v_4)$ and $\Sib(v_1)\setminus \Sib(v_4)$.
\end{example}

Using the concept of removable ancestors, the next result introduces a linear system of equations whose solution space fully characterizes the parameter matrices in $\mathrm{Proj}_{\R^\GG}(\Phi^{-1}_\GG(X))$.

\begin{lemma}
\label{lem:main}
Let $X=\Phi_\GG(\Lambda,\varepsilon)$ for a generic choice of parameters $(\Lambda,\varepsilon)\in\R^{\GG_D}\times\mathcal{M}(\GG_B)$. A matrix $\Tilde{\Lambda}\in\R^{\GG_D}$ belongs to $\mathrm{Proj}_{\R^\GG}(\Phi_{\GG}^{-1}(X))$ if and only if it is a solution to the following linear system of equations:
\begin{equation}
\label{eq:lambda:system}
    \underbrace{[(B_\Lambda)_{\pa(v),R_v}]^T}_{(B_\Lambda)^v}\cdot \Tilde{\lambda}_{\pa(v),v} =[(B_\Lambda)_{v,R_v}]^T,\quad \forall v\in V.
\end{equation}
%for every $v\in V$.
\end{lemma}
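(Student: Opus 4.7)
The plan is to convert the lemma into a statement about the matrix $A=(I-\tilde\Lambda)^TB_\Lambda$ and then work on both implications simultaneously. Since $\tilde\Lambda\in\mathrm{Proj}_{\R^\GG}(\Phi_\GG^{-1}(X))$ is equivalent to the existence of some $\tilde\varepsilon\in\MM(\GG_B)$ with $\tilde\varepsilon\stackrel{d}{=}A\varepsilon$, and since (by \cref{lem:A:matrix}) the system \eqref{eq:lambda:system} is obtained by imposing $a_{vu}=0$ for every $v\in V$ and $u\in R_v$ (the contribution of $b_{wu}$ vanishes for $w\notin\de(u)$, so one may sum over all of $\pa(v)$), the lemma boils down to the statement that $A\varepsilon\in\MM(\GG_B)$ if and only if $a_{vu}=0$ for every $v$ and every $u\in R_v$.

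For the ``only if'' direction, I will fix $v\in V$ and $u\in R_v$ and use the defining property of $R_v$: there exists $u'\in \Sib(u)\setminus\Sib(v)$. Since $u'\notin\Sib(v)$, the connected set Markov property for $\tilde\varepsilon=A\varepsilon$ applied to the singleton $\{v\}$ gives $(A\varepsilon)_v\indep(A\varepsilon)_{u'}$, i.e.\ $A_{v,:}^T\varepsilon\indep A_{u',:}^T\varepsilon$. Invoking \cref{ass:error} on the pair of coefficient vectors $(A_{v,:},A_{u',:})$ yields $a_{v w_1}\,a_{u' w_2}=0$ whenever $w_1=w_2$ or $w_1\xleftrightarrow{}w_2\in\Eb$. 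Taking $(w_1,w_2)=(u,u')$ (valid because either $u=u'$ or $u\xleftrightarrow{}u'$), and using $a_{u'u'}=1$ from \cref{lem:A:matrix}, forces $a_{vu}=0$, as required.

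For the ``if'' direction, assume $a_{vu}=0$ for all $v$ and $u\in R_v$, set $\tilde\varepsilon:=A\varepsilon$, and aim to verify $\tilde\varepsilon\in\MM(\GG_B)$; once this holds, $X=B_\Lambda\varepsilon=B_{\tilde\Lambda}\tilde\varepsilon$ implies $\tilde\Lambda\in\mathrm{Proj}_{\R^\GG}(\Phi_\GG^{-1}(X))$. The key structural observation is that, by \cref{lem:A:matrix} together with the hypothesis, $a_{vu}\neq 0$ forces $u\in\an(v)\setminus R_v$, i.e.\ $\Sib(u)\subseteq \Sib(v)$. Denoting the support of the $v$-th row of $A$ by $S_v$, this gives $S_v\subseteq\{u\in\an(v):\Sib(u)\subseteq\Sib(v)\}\subseteq\Sib(v)$. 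For any connected $C\subseteq V$ in $\GG_B$, it follows that $T_1:=\bigcup_{v\in C}S_v\subseteq\Sib(C)$, whereas for any $u\notin\Sib(C)$ one has $S_u\cap\Sib(C)=\emptyset$ (otherwise a $w\in S_u\cap\Sib(C)$ would yield some $c\in C$ with $c\in\Sib(w)\subseteq\Sib(u)$, contradicting $u\notin\Sib(C)$), and no bidirected edge can connect $T_1$ to $T_2:=\bigcup_{u\notin\Sib(C)}S_u$ by the same $\Sib(w)\subseteq\Sib(v)$ argument.

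The main obstacle will be the last step of the ``if'' direction, namely deducing $(A\varepsilon)_C\indep(A\varepsilon)_{V\setminus\Sib(C)}$ from the structural separation between $T_1$ and $T_2$ together with the connected set Markov property of $\varepsilon$. The reduction is to show $\varepsilon_{T_1}\indep\varepsilon_{T_2}$; the cleanest route is to apply the connected set Markov property of $\varepsilon$ to a connected set large enough to contain $T_1$ and sibling-disjoint from $T_2$, which is enabled precisely by the two facts just established. I expect this bookkeeping---possibly via an induction on the bidirected connected components, or by appealing to the equivalence between the connected set Markov property and the global (m-separation) Markov property for $\GG_B$ established in \cite{richardson:2003}---to require the most care, with the remaining algebraic identifications with \eqref{eq:lambda:system} being direct consequences of \cref{lem:A:matrix}.
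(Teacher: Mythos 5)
Your proposal follows essentially the same route as the paper: reduce membership in the fiber to the condition $a_{vu}=0$ for all $u\in R_v$ via \cref{lem:A:matrix}, prove the forward direction by applying \cref{ass:error} to the pair $(A_{v,:},A_{u',:})$ with $u'\in\Sib(u)\setminus\Sib(v)$ (your unified choice of $u'$ cleanly merges the paper's two cases $u\notin\sib(v)$ and $u\in\sib(v)$), and prove the converse by checking the connected set Markov property for $A\varepsilon$ using the support structure of the rows of $A$. The one step you defer --- deducing $\varepsilon_{T_1}\indep\varepsilon_{T_2}$ --- is not actually an obstacle given what you have already established: since $a_{vv}=1$ you have $C\subseteq T_1\subseteq\Sib(C)$ with every $w\in T_1$ lying in $\Sib(v)$ for some $v\in C$, so $T_1$ is itself connected in $\GG_B$ whenever $C$ is; moreover $T_2\cap T_1=\emptyset$ and your no-bidirected-edge argument gives $T_2\cap\sib(T_1)=\emptyset$, hence $T_2\subseteq V\setminus\Sib(T_1)$, and a single application of the connected set Markov property of $\varepsilon$ to the connected set $T_1$ finishes the proof. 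This is exactly the role played by the set $D(C)$ of non-removable ancestors in the paper's proof (your $T_1$ is the actual support, $D(C)$ the potential support; either works), so no appeal to the global Markov property or an induction on components is needed.
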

\begin{proof}
     We start by showing the direct implication. \cref{eq:A:matrix} shows that for every $v_0\in V\setminus\{v\}$ we can write $\tilde{\varepsilon}_{\{v,v_0\}}=A_{\{v,v_0\},V}\cdot\varepsilon$, where $A_{\{v,v_0\},V}$ denotes the rows of $A$ corresponding to nodes $\{v,v_0\}$.  Since $\tilde{\varepsilon}\in\mathcal{M}(\GG_B)$, it holds that $\tilde{\varepsilon}_v\indep \tilde{\varepsilon}_{u_0}$ for every $u_0\notin\sib(v)$ and, thus,  \cref{ass:error} implies
         \begin{align}
            \label{eq:A:sys:1}
             a_{v v_0}\cdot a_{u_0 v_0} &= 0, \qquad&\forall\, v_0\in V,\\
             \label{eq:A:sys:2}
             a_{v v_0}\cdot a_{u_0 v_1} &= 0, \qquad&\forall\, v_0\xleftrightarrow{}v_1\in\GG.
         \end{align}
     If $u\notin\sib(v)$, considering $u_0= v_0 = u$ in \cref{eq:A:sys:1} yields $a_{v u}\cdot a_{u u}=a_{v u}=0$, where we used the fact that $a_{u u}=1$ as a result of  \cref{lem:A:matrix}. Again, from \cref{lem:A:matrix},
    writing $a_{vu}$ explicitly, we get 
     \begin{equation}
     \label{eq:B:line}
         b_{vu}=\sum_{w\in\\\pa(v)\cap\de(u)}\tilde{\lambda}_{wv}b_{wu}=[(B_\Lambda)_{\pa(v),u}]^T\cdot \Tilde{\lambda}_{\pa(v),v}.
     \end{equation}
    Now, let $u\in\sib(v)$, and let $w\in\sib(u)\setminus\sib(v)$. Considering \cref{eq:A:sys:2} with $u_0=v_1=w$, and $v_0=u$ we get $a_{vu}\cdot a_{ww}=a_{vu}=0$. Proceeding as above, this yields that $u\in\an(v)$ leads to \cref{eq:B:line}, as claimed.

    For the reverse implication,  consider $\tilde{\Lambda}\in\R^{\GG_D}$ such that each one of its column vectors $\tilde{\lambda}_{\pa(v), v}$ is a solution of \cref{eq:lambda:system}. Define $\tilde{\varepsilon}:=A\cdot\varepsilon$, where the matrix $A$ is defined in \cref{eq:A:matrix}. By the definition of $\tilde{\varepsilon}$, we have $\Phi_\GG(\Lambda,\varepsilon)\stackrel{d}{=}\Phi_\GG(\Tilde{\Lambda},\tilde{\varepsilon})$, so it remains to prove  $\tilde{\varepsilon}\in\mathcal{M}(\GG_B)$, that is, $\tilde{\varepsilon}$ satisfies the connected set Markov property with respect to $\GG_{B}$.
    Let $C\subseteq V$. Since $A_{vu} = 0$ whenever $u\notin R_v$, we have
    % Then it is easy to see that 
    $$\tilde{\varepsilon}_C=A_{C,D(C)}\cdot\varepsilon_{D(C)},$$
    where $D(C)$ is the set of non-removable ancestors of $C$, so    $$\quad
    %%\text{and}
    \quad D(C):=C\cup\bigcup_{v\in C}(\an(v)\setminus R_v)\subseteq\Sib(C).$$
    Here, the last set inclusion comes from the definition of $R_v${, indeed if $u\in\an(v)$ and $u\notin\Sib(v)$ then $u$ is in $R_v$}. Finally, to prove that $\tilde{\varepsilon}$ satisfies the connected set Markov property, we need to show that $\tilde{\varepsilon}_C\indep \tilde{\varepsilon}_{V\setminus\Sib(C)}$ whenever $C$ is a connected subset of $\GG_B$. For this, it suffices to show that $\varepsilon_{D(C)}\indep \varepsilon_{D(V\setminus\Sib(C))}$.  We will argue that this is indeed the case for $C$ connected by showing i) $D(C)$ is connected and ii) $D(V\setminus\Sib(C))\subseteq V\setminus\Sib(D(C))$. The asserted result will then follow from the fact that $\varepsilon$ satisfies the connected set Markov property with respect to $\GG_{B}$. 
    
    i) To show that $D(C)$ is connected consider $u,v\in C$.  From the definition of $D(C)$, one can see that there are $u_0,v_0\in C$ such that $u_0\in\Sib(u)$ and $v_0\in\Sib(v)$. Because $C$ is connected, a bidirected path joining $u_0$ and $v_0$ exists over $C$, and we can extend the path to suitably join $u$ and $v$ as well.
    
    ii) Notice that $\Sib(D(C))\subseteq\Sib(C)$. This is because if $w\in\Sib(D(C))$, either $w\in\Sib(C)$ or there is $v\in C$ such that $w\in \Sib(\an(v)\setminus R_v)$ which again implies that $w\in\Sib(C)$ by the definition of $R_v$.
    This implies that in order to prove $D(V\setminus\Sib(C))\subseteq V\setminus\Sib(D(C))$, we only need to show  $D(V\setminus\Sib(C))\cap\Sib(C)=\emptyset$.
    Suppose there exists $u\in D(V\setminus\Sib(C))\cap\Sib(C)$, then there are $v\in C$ and $w\in V\setminus\Sib(C)$ such that $v\xleftrightarrow{} u\xleftrightarrow{} w\in\GG_B$, and $u\in R_w$. By the definition of $R_w$, this implies $v\in\sib(w)$ which is impossible as $w\in V\setminus\Sib(C)$. This concludes the proof.
\end{proof}

\begin{definition}
    Let $v\in V$, and let $Q\subseteq (\pa(v)\cup\{v\})$. We define the $v$-rank of $Q$ as
    \begin{equation}
    \label{eq:degree}
        r^v_Q:=\max_{1\leq k\leq |Q|}\{\, (I,P)\in2^{R_v}\times2^{Q}\::\: |I|=|P|=k, \,\, \tilde{\PP}(I,P)\neq\emptyset\},
    \end{equation}
    where $2^{S}$ denotes the power set of $S$.  Recall that $\tilde{\PP}(I,P)$ is the set of non-intersecting systems of directed paths from $I$ to $P$.
\end{definition}
Notice that from \cref{eq:degree} it is immediate that 
\begin{equation}
\label{eq:lowbound}
r^v_{\pa(v)\setminus Q}\geq r^v_{\pa(v)}-|Q|.
\end{equation}
{Theorem~\ref{thm:main:loc} below,} which constitutes our main identifiability result, shows that this lower bound for $r^v_{\pa(v)\setminus Q}$ is reached if and only if $\lambda_{Q,v}$ is generically identifiable.

{To prove our main theorem, we rely on a classical result from combinatorics—the Gessel-Viennot-Lindström lemma—which provides a combinatorial expression for subdeterminants of the matrix $B_{\Lambda}$. For the reader’s convenience, we present a reformulation of this result using our notation; cf.~\citet[Lem.~3.3]{sullivant:2010}, \citet[Supplement][Lem.~1]{foygel:2012}.}

\begin{lemma}[Gessel-Viennot-Lindström lemma]
{Let $\GG_D$ be any directed graph, and let $I, J$ be two subsets of $V$ of the same size. Then for every $\Lambda\in\R^{\GG_D}_{\mathrm{reg}}$ we have:
    $$\det\big((B_{\Lambda})_{I, J}\big) = \det\big((I-\Lambda)^{-1}_{I,J}\big)=\sum_{\Pi\in\mathcal{P}(I,J)}|\sigma_\Pi|\lambda^\Pi=\sum_{\Pi\in\tilde{\mathcal{P}}(I,J)}|\sigma_\Pi|\lambda^\Pi,$$
    where $|\sigma_\Pi|$ denotes the sign of the permutation and $\lambda^\Pi = \prod_{u\to v\in\Pi}\lambda_{uv}$. In particular, $\tilde{\PP}(I,J)=\emptyset$ implies $\det\big((B_{\Lambda})_{I, J}\big) = 0$. The reverse implication holds for a \emph{generic} choice of $\Lambda\in~\R^{\GG_D}_{\mathrm{reg}}$, i.e., for any $\Lambda$ outside a Lebesgue measure 0 subset of $\R^{\GG_D}_{\mathrm{reg}}$}.
\label{lem:gvl}
\end{lemma}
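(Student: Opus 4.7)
The plan is to prove the two equalities in three stages—first the path-sum formula for a single entry, then the Leibniz expansion of the minor, then the Lindström-type cancellation of intersecting systems—and finally to read off the genericity assertion from a non-vanishing-polynomial argument.

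First, I would recall (or reprove) the single-entry identity $(B_{\Lambda})_{u,v} = \sum_{\pi \in \mathcal{P}(u,v)} \lambda^{\pi}$, which is precisely the path-matrix statement invoked in the proof of \cref{lem:A:matrix}. In the acyclic case this follows from expanding $(I-\Lambda)^{-T} = \sum_{k \geq 0} (\Lambda^T)^{k}$ (a finite sum once one orders the vertices causally, since $\Lambda$ becomes strictly triangular and nilpotent) and matching the entry of $(\Lambda^T)^{k}$ at $(u,v)$ with the weighted count of directed walks of length $k$ from $u$ to $v$; in a DAG every walk is a path. Substituting this into the Leibniz formula $\det((B_\Lambda)_{I,J}) = \sum_{\sigma \in S_n} \mathrm{sgn}(\sigma)\, \prod_{k=1}^{n} (B_\Lambda)_{i_k, j_{\sigma(k)}}$ and distributing sums over products, each resulting term corresponds to choosing for every $k$ a directed path from $i_k$ to $j_{\sigma(k)}$, i.e.\ to a system $\Pi \in \mathcal{P}(I,J)$ with $\sigma_\Pi = \sigma$. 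This immediately yields the first equality $\det((B_\Lambda)_{I,J}) = \sum_{\Pi \in \mathcal{P}(I,J)} |\sigma_\Pi|\, \lambda^{\Pi}$.

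The central step, and the main obstacle, is the involution argument used to reduce the sum over $\mathcal{P}(I,J)$ to the sum over $\tilde{\mathcal{P}}(I,J)$. I would fix orderings of $I$ and $J$ and define an involution $\iota$ on the intersecting systems as follows: given $\Pi$, let $k$ be the smallest index such that $\pi_k$ meets some other path of $\Pi$; among such paths let $\pi_l$ be the one meeting $\pi_k$ at the earliest vertex $c$ along $\pi_k$. Define $\iota(\Pi)$ by swapping the portions of $\pi_k$ and $\pi_l$ after $c$. Then $\iota(\Pi) \in \mathcal{P}(I,J)$, $\sigma_{\iota(\Pi)} = \sigma_\Pi \circ (\sigma_\Pi(k),\sigma_\Pi(l))$ so that $|\sigma_{\iota(\Pi)}| = -|\sigma_\Pi|$, while the multiset of edges—and hence the monomial $\lambda^\Pi$—is preserved. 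The delicate part, which I would write out carefully, is verifying that $\iota$ is an involution: the tail-swap does not alter which pair of paths is first-to-intersect nor the first intersection vertex, so applying $\iota$ again restores $\Pi$. Consequently the contributions of intersecting systems cancel in pairs, establishing the second equality.

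Finally, for the genericity claim: if $\tilde{\mathcal{P}}(I,J) = \emptyset$, the sum has no terms and $\det((B_\Lambda)_{I,J})$ is identically zero. Conversely, assume $\tilde{\mathcal{P}}(I,J) \neq \emptyset$. I would argue that the polynomial $f(\Lambda) := \sum_{\Pi \in \tilde{\mathcal{P}}(I,J)} |\sigma_\Pi|\, \lambda^{\Pi}$ in the free entries of $\Lambda$ is not identically zero, so its zero set is a proper Zariski-closed, hence Lebesgue-null, subset of $\R^{\GG_D}_{\mathrm{reg}}$. Non-vanishing holds because the monomial $\lambda^{\Pi}$ of a non-intersecting system $\Pi$ is squarefree and its support is exactly the edge set of $\Pi$; since the paths in $\Pi$ are vertex-disjoint, at each vertex used by $\Pi$ at most one in-edge and one out-edge of $\Pi$ are present, so the edge set reconstructs $\Pi$ uniquely by starting at each $i_k$ and following the forced outgoing edges. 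Distinct non-intersecting systems thus produce distinct monomials, ruling out any cancellation in $f$.
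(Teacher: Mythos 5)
The paper does not actually prove this lemma: it is presented as a reformulation of a classical combinatorial result and is justified by citation (to Sullivant et al.\ and to the supplement of Foygel et al.), so there is no in-paper argument to compare against line by line. Your proposal supplies the standard Gessel--Viennot proof from scratch, and for \emph{acyclic} $\GG_D$ it is essentially correct and complete: the Neumann-series derivation of the single-entry path formula, the Leibniz expansion, the tail-swapping involution on intersecting systems, and the ``distinct non-intersecting systems give distinct squarefree monomials'' argument for the genericity claim are all the right ingredients, and your reconstruction-from-the-edge-set argument for the last step is a clean way to rule out cancellation. Two cosmetic points: the formula $\sigma_{\iota(\Pi)}=\sigma_\Pi\circ(\sigma_\Pi(k),\sigma_\Pi(l))$ has the transposition on the wrong side (it should be $\sigma_{\iota(\Pi)}=(\sigma_\Pi(k)\,\sigma_\Pi(l))\circ\sigma_\Pi$, equivalently $\sigma_\Pi\circ(k\,l)$), though the sign conclusion is unaffected; and your single-entry identity $(B_\Lambda)_{u,v}=\sum_{\pi\in\PP(u,v)}\lambda^\pi$ is transposed relative to the convention $(B_\Lambda)_{uv}=\sum_{\pi\in\PP(v,u)}\lambda^\pi$ used elsewhere in the paper --- a bookkeeping wobble that the lemma statement itself shares, and which washes out of the determinant identity, but you should fix the direction when matching $((\Lambda^T)^k)_{uv}$ to walks.

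The one genuine gap is that the lemma is stated for \emph{any} directed graph and is in fact invoked in the cyclic setting (the proof of the $k$-cycle identifiability result appeals to it for the $3$-cycle), whereas your proof uses acyclicity in an essential way at three places. First, $(I-\Lambda)^{-1}=\sum_{k\ge0}\Lambda^k$ is only a finite sum (or even convergent) because $\Lambda$ is nilpotent; on $\R^{\GG_D}_{\mathrm{reg}}$ with cycles one must work with formal power series or restrict to small $\Lambda$ and then extend by rationality. Second, with cycles the sets $\PP(u,v)$ as defined in the paper contain infinitely many walks with repeated vertices, so the distribution of the Leibniz expansion over path choices produces infinite sums and non-simple walks. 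Third, and most importantly, your verification that the tail swap returns a pair of genuine (self-avoiding) paths relies on the fact that no vertex can be both a strict ancestor and a strict descendant of the crossing vertex $c$ --- exactly the property that fails in a cyclic graph; the general-graph version of the cancellation requires the loop-erasure refinement of Fomin or Talaska's determinant formula rather than the plain involution. If you state up front that you prove the lemma for acyclic $\GG_D$ (which covers all uses in the main identifiability theorem) and cite the literature for the cyclic extension, the proposal stands as a valid, self-contained replacement for the paper's citation.
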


{We now use the Gessel-Viennot-Lindström lemma to characterize} the linear subspace of the solution set of \cref{eq:lambda:system}, which describes $\mathrm{Proj}_{\R^\GG}(\Phi_{-1}(X))$.
\begin{theorem}
\label{thm:main:loc}
Let $v\in V$, and let $Q\subseteq\pa(v)$. The vector $\lambda_{Q,v}$ is generically identifiable if and only if $r^v_{\pa(v)\setminus Q}=r^v_{\pa(v)}-|Q|$, where $r^v_Q$ is defined in \cref{eq:degree}.
\end{theorem}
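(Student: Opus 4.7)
The plan is to reduce generic identifiability of $\lambda_{Q,v}$ to a rank condition on the matrix $(B_\Lambda)^v$ appearing in~\cref{eq:lambda:system}, and then translate the ranks into the combinatorial quantities $r^v_S$ by means of the Gessel-Viennot-Lindström lemma (\cref{lem:gvl}).

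First, I would apply \cref{lem:main} to describe the $v$-th slice of $\mathrm{Proj}_{\R^\GG}(\Phi^{-1}_\GG(X))$ as the affine subspace $\lambda_{\pa(v),v} + \ker\bigl((B_\Lambda)^v\bigr) \subseteq \R^{\pa(v)}$. Consequently, $\tilde{\lambda}_{Q,v}$ takes the same value on every point of this subspace if and only if the coordinate projection $\ker\bigl((B_\Lambda)^v\bigr)\to\R^{Q}$ is the zero map. Setting $W:=\pa(v)\setminus Q$, the subspace $\ker\bigl((B_\Lambda)^v\bigr)\cap\{x_Q=0\}$ is canonically isomorphic to $\ker\bigl((B_\Lambda)^v_{*,W}\bigr)$, so an elementary dimension count yields that this projection vanishes precisely when
$$\rank\bigl((B_\Lambda)^v\bigr) \;=\; \rank\bigl((B_\Lambda)^v_{*,W}\bigr)+|Q|.$$

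Next, since transposition preserves rank, these two ranks coincide with $\rank\bigl((B_\Lambda)_{\pa(v),R_v}\bigr)$ and $\rank\bigl((B_\Lambda)_{W,R_v}\bigr)$, respectively. Applying \cref{lem:gvl} to each submatrix, for a generic $\Lambda$ and any $S\subseteq\pa(v)$ the rank $\rank\bigl((B_\Lambda)_{S,R_v}\bigr)$ equals the largest $k$ for which there exist $I\subseteq R_v$ and $P\subseteq S$ with $|I|=|P|=k$ and $\tilde{\mathcal{P}}(I,P)\neq\emptyset$; by definition this is $r^v_S$. Substituting $S=\pa(v)$ and $S=W$ converts the rank condition into $r^v_{\pa(v)} = r^v_{\pa(v)\setminus Q} + |Q|$, which is exactly the claimed criterion after rearranging. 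Together with the unconditional bound in~\cref{eq:lowbound}, the equivalence delivers both implications of the theorem simultaneously.

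The main obstacle is genericity bookkeeping. Each identity of the form $\rank\bigl((B_\Lambda)_{S,R_v}\bigr)=r^v_S$ holds outside a Lebesgue-null subvariety of $\R^{\GG_D}$ that depends on $S$, but only finitely many such $S$ are involved, so the relevant intersection of cofinite sets is cofinite. Combining this with the generic set on which \cref{lem:main} applies gives the single null set that must be excluded in order for \cref{def:id}'s notion of generic identifiability to be preserved throughout the argument.
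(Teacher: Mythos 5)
Your proposal is correct and follows essentially the same route as the paper: both reduce the question to whether the solution set of the linear system from \cref{lem:main} has constant $Q$-coordinates, express this as a rank condition on submatrices of $(B_\Lambda)_{\cdot,R_v}$ via a dimension count, and then invoke the Gessel--Viennot--Lindstr\"om lemma to replace generic ranks by the combinatorial quantities $r^v_S$. The only (immaterial) difference is that you phrase the dimension count through the kernel of $(B_\Lambda)^v$ and its intersection with $\{x_Q=0\}$, whereas the paper compares the affine solution spaces $S^v$ and $S^v_Q$ using an augmented matrix.
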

\begin{proof}
    The vector $\lambda_{Q,v}$ is identifiable if and only $\Tilde{\lambda}_{Q,v}=\lambda_{Q,v}$, for every $\Tilde{\Lambda}\in\mathrm{Proj}_{\R^\GG}(\Phi^{-1}_{\GG}(X))$. We know from \cref{lem:main} that $\tilde{\lambda}_{\pa(v), v}$ is a solution of the linear system given in \cref{eq:lambda:system} for every such matrix $\Tilde{\Lambda}$. Hence, if we define
    \begin{equation*}
    \begin{aligned}
        S^{v}&:=\{\tilde{\lambda}_{\pa(v), v}\in\R^{|\pa(v)|}\::\: [(B_\Lambda)_{\pa(v),R_v}]^T\cdot \Tilde{\lambda}_{\pa(v),v} =[(B_\Lambda)_{v,R_v}]^T\}, \\
        S^v_{Q}&:=\{\tilde{\lambda}_{\pa(v), v}\in S^v\::\: \tilde{\lambda}_{Q,v}=\lambda_{Q,v}\},
    \end{aligned}
    \end{equation*} then $\lambda_{Q,v}$ is identifiable if and only if $S^v_Q=S^v$. By definition, $S^v_{Q}$ is a linear subspace of $S^v$, so the two are equal if and only if they have the same dimension.
    
    We can write $S^v_{Q}$ as the solution space of the following linear system
    \begin{equation}
    \label{eq:lin:sys:1nod}
        \underbrace{\begin{bmatrix}
            (I_p)_{Q,[p]}\\[3pt]
            (B_\Lambda)^v\\
            \end{bmatrix}}_{(B_\Lambda)^v_Q}\cdot \Tilde{\lambda}_{\pa(v),v} =\begin{bmatrix}
            \lambda_{Q,v}\\[3pt]
            [(B_\Lambda)_{v,R_v}]^T\\
        \end{bmatrix},
    \end{equation}
    where $I_p$ is the $p\times p$ identity matrix, and $(B_\Lambda)^v$ is defined in \cref{eq:lambda:system}. We know that the solution space of \cref{eq:lin:sys:1nod} is not empty since $\lambda_{Q,v}$ belongs to it. Hence, we have $\dim(S^v_{Q})=|\pa(v)|-\rank((B_\Lambda)^v_Q)$, which implies  $$\dim(S^v_{Q})=\dim(S^v)\iff\rank((B_\Lambda)^v_Q)=\rank((B_\Lambda)^v).$$ From the definition of $(B_\Lambda)^v_Q$ in \cref{eq:lin:sys:1nod}  one can 
    % easily
    see that $$\rank((B_\Lambda)^v_Q)=\rank([(B_\Lambda)^v_{R_v, \pa(v)\setminus Q}])+|Q|=\rank([(B_\Lambda)_{\pa(v)\setminus Q,R_v}]^T)+|Q|.$$ 
    Finally, we have 
    \begin{equation*}
        \begin{aligned}
        \dim(S^v_{Q})=\dim(S^v)\iff\rank([(B_\Lambda)_{\pa(v)\setminus\ Q,R_v}]^T)=\rank([(B_\Lambda)_{\pa(v),R_v}]^T)-|Q|,
        \end{aligned}
    \end{equation*}
    which concludes the proof by noticing that from \cref{lem:gvl}, we have $r^v_Q$ is \emph{generically} equal to $\rank([(B_\Lambda)_{Q,R_v}]^T)$ for every $Q\subseteq\pa(v)$.
\end{proof}

\begin{example}
\label{ex:fig2-again}
Consider again the graph in \cref{fig:4:nodes:ex}, as in \cref{ex:fig2}. We have $R_{v_2} = \emptyset$, implying that the parameter $\lambda_{v_1v_2}$ is not  identifiable. In contrast $R_{v_4} = \{v_1, v_2\}$, and there is a system of \emph{non-intersecting} directed paths from $R_{v_4}$ to $\pa(v_4)=\{v_2, v_3\}$ given by $\pi_1 = (v_1,  v_3)$ and $\pi_2 = (v_2, v_2)$. This implies that the vector $\lambda_{\pa(v), v}$ is identifiable.
\end{example}

\begin{example}
\label{ex:par:id}
\begin{figure}[h]
    \begin{tikzpicture}
            \node[draw, circle, inner sep=3pt] (A) at (-4.5,0) {$v_1$};
            \node[draw, circle, inner sep=3pt] (B) at (-2,1) {$v_2$};
            \node[draw, circle, inner sep=3pt] (C) at (-2,-1) {$v_3$};
            \node[draw, circle, inner sep=3pt] (D) at (0.5,0) {$v_4$};
            
            \draw[->] (B) -- (D);
            \draw[->] (C) -- (D);

            \draw[dashed][<->] (A) to (B);
            \draw[dashed][<->] (B) to[bend left=30] (D);
            \draw[dashed][<->] (C) to[bend right=30] (D);            
        \end{tikzpicture}
        
    \caption{{An ADMG for which only one of the causal effects is identifiable.}}
    \label{fig:par:id}
\end{figure}
Consider the graph in \cref{fig:par:id}. We have $R_{v_4} = \{v_2\}$, which has cardinality 1, hence $r^{v_4}_{\pa(v_4)} = 1$. This implies that $\lambda_{\pa(v_4), v_4}$ is not identifiable. However, since there are no directed paths from $v_2$ to $v_3,$ we have $r^{v_4}_{\pa(v_4)\setminus\{v_2\}} = 0 = r^{v_4}_{\pa(v_4)} - |\{v_2\}|$, proving that the parameter $\lambda_{v_2, v_4}$ is identifiable.
\end{example}
As a consequence of an intervention (or of previous knowledge) on the system, it is sometimes possible to assume that some of the parameters of the matrix $\Lambda$ are already known. In the next theorem, we characterize which parameters can be identifiable, assuming partial knowledge of the matrix $\Lambda$.

\begin{theorem}
\label{thm:part:kn}
Let $v\in V$, and let $Q, K\subseteq \pa(v)$. The vector $\lambda_{Q,v}$ is generically identifiable assuming knowledge of $\lambda_{K, v}$ if and only if $r^v_{\pa(v)\setminus (K\cup Q)} = r^v_{\pa(v)\setminus Q} - (|K| - |Q\cup K|).$
\end{theorem}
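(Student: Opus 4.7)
The plan is to mimic the proof of \cref{thm:main:loc}, augmenting the linear system in \cref{eq:lambda:system} with the additional constraints carried by the prior knowledge of $\lambda_{K,v}$. Define
\[
S^v_K := \bigl\{\tilde\lambda_{\pa(v),v}\in S^v \;:\; \tilde\lambda_{K,v}=\lambda_{K,v}\bigr\},
\qquad
S^v_{K\cup Q} := \bigl\{\tilde\lambda_{\pa(v),v}\in S^v_K \;:\; \tilde\lambda_{Q,v}=\lambda_{Q,v}\bigr\},
\]
where $S^v$ is the solution set to \cref{eq:lambda:system} introduced in the proof of \cref{thm:main:loc}. By \cref{lem:main}, the subset of $\mathrm{Proj}_{\R^\GG}(\Phi_\GG^{-1}(X))$ compatible with the known coordinates $\lambda_{K,v}$ is exactly the affine space $S^v_K$. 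Hence $\lambda_{Q,v}$ is identifiable given $\lambda_{K,v}$ if and only if $S^v_K=S^v_{K\cup Q}$, which, since both are affine subspaces containing the true $\lambda_{\pa(v),v}$, reduces to equality of their dimensions.

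Following the template of \cref{eq:lin:sys:1nod}, I would express $S^v_K$ and $S^v_{K\cup Q}$ as solution sets of linear systems obtained by stacking identity rows indexed by $K$ (respectively $K\cup Q$) on top of $(B_\Lambda)^v$; call the resulting coefficient matrices $(B_\Lambda)^v_K$ and $(B_\Lambda)^v_{K\cup Q}$. Since both systems are consistent, the equality of dimensions is equivalent to
\[
\rank\bigl((B_\Lambda)^v_{K\cup Q}\bigr) = \rank\bigl((B_\Lambda)^v_K\bigr).
\]
Decomposing each rank along the identity block and the remaining columns exactly as in the proof of \cref{thm:main:loc} yields
\[
\rank\bigl((B_\Lambda)^v_K\bigr) = |K| + \rank\bigl([(B_\Lambda)_{\pa(v)\setminus K,R_v}]^T\bigr),
\]
and an analogous identity for $K\cup Q$ with $\pa(v)\setminus(K\cup Q)$ in place of $\pa(v)\setminus K$.

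Substituting these decompositions into the rank equality, and invoking \cref{lem:gvl} generically so that $\rank([(B_\Lambda)_{U,R_v}]^T)=r^v_U$ for every $U\subseteq\pa(v)$, produces the criterion in the statement, with the difference in cardinalities $|Q\cup K|-|K|=|Q\setminus K|$ emerging from the imbalance between the two identity blocks. The main bookkeeping obstacle is the case $Q\cap K\neq\emptyset$: one must verify that requiring $\tilde\lambda_{Q\cap K,v}=\lambda_{Q\cap K,v}$ is already implied by the knowledge of $\lambda_{K,v}$, so that $S^v_{K\cup Q}$ is precisely the subset of $S^v_K$ in which the entries $\tilde\lambda_{Q\setminus K,v}$ have been fixed. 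Specializing to $K=\emptyset$ recovers \cref{thm:main:loc}, serving as a consistency check.
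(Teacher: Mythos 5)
Your proposal follows the paper's proof essentially verbatim: the paper likewise defines $S^v_K$ and $S^v_{Q,K}$ (your $S^v_{K\cup Q}$), reduces identifiability given $\lambda_{K,v}$ to equality of their dimensions, and reads the dimensions off the augmented linear system of \cref{eq:lin:sys:1nod} via the rank decomposition and \cref{lem:gvl}. If anything, yours is the more careful write-up: you correctly equate $\dim(S^v_{K\cup Q})$ with $\dim(S^v_K)$, which yields $r^v_{\pa(v)\setminus(K\cup Q)} = r^v_{\pa(v)\setminus K} - |Q\setminus K|$, whereas the criterion as printed in the theorem (featuring $r^v_{\pa(v)\setminus Q}$ and the term $|K|-|Q\cup K|$) matches neither your derivation nor the paper's own displayed formulas and appears to contain a typo that your argument implicitly corrects.
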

\begin{proof}
We proceed as in the proof of \cref{thm:main:loc}. Define 
    \begin{equation*}
    \begin{aligned}
        S^{v}_{K}&:=\{\tilde{\lambda}_{\pa(v), v}\in\R^{|\pa(v)|}\::\: [(B_\Lambda)_{\pa(v),R_v}]^T\cdot \Tilde{\lambda}_{\pa(v),v} =[(B_\Lambda)_{v,R_v}]^T, \tilde{\lambda}_{K,v}=\lambda_{K,v}\}, \\
        S^v_{Q, K}&:=\{\tilde{\lambda}_{\pa(v), v}\in S^v_{K}\::\: \tilde{\lambda}_{Q,v}=\lambda_{Q,v}\}.
    \end{aligned}
    \end{equation*}
    Then, $\lambda_{Q, v}$ is identifiable using the knowledge of $\lambda_{K, v}$ if and only if $\dim(S^v_{Q, K}) = \dim(S^v_{K})$. From \cref{eq:lin:sys:1nod}, we know that 
    \begin{equation*}
            \dim(S^v_{Q, K}) = r^v_{\pa(v)\setminus{Q\cup K}} + |Q\cup K|,\quad
            \dim(S^v_{Q}) = r^v_{\pa(v)\setminus{Q}} + |Q|.        
    \end{equation*}
    The result follows by equating the right-hand sides of the equations above.
\end{proof}

The following theorem characterizes the situations in which the matrix $\Lambda$ is identifiable.
\begin{theorem}
\label{thm:main:glob}
    The matrix $\Lambda$ is generically identifiable if and only if for every node $v\in V$, there is a subset $I_v$ of $R_v$ of size $|\pa(v)|$ such that there is a system of \emph{non-intersecting} directed paths from $I_v$ to $\pa(v)$.
\end{theorem}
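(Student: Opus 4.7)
The plan is to reduce the global identifiability statement to $|V|$ applications of the local criterion in Theorem \ref{thm:main:loc}. The matrix $\Lambda$ is generically identifiable if and only if, for every $v\in V$, every coordinate $\lambda_{uv}$ with $u\in\pa(v)$ is generically identifiable, which is equivalent to the whole vector $\lambda_{\pa(v),v}$ being generically identifiable. Thus we may fix an arbitrary $v\in V$ and work node by node.

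Next, I would apply Theorem \ref{thm:main:loc} to the single node $v$ with the choice $Q=\pa(v)$. The theorem then asserts that $\lambda_{\pa(v),v}$ is generically identifiable if and only if
\[
r^v_{\pa(v)\setminus\pa(v)} \;=\; r^v_{\pa(v)} - |\pa(v)|,
\]
i.e.\ $r^v_\emptyset = r^v_{\pa(v)} - |\pa(v)|$. Since the maximum in the definition of $r^v_Q$ runs over $1\le k\le|Q|$ and is attained by matched non-intersecting path systems of cardinality $k$, the empty index set gives $r^v_\emptyset = 0$. Hence the criterion collapses to $r^v_{\pa(v)}=|\pa(v)|$.

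I would then unpack this last equality via the definition of the $v$-rank in \eqref{eq:degree}: it holds if and only if there exist $I\subseteq R_v$ and $P\subseteq\pa(v)$ with $|I|=|P|=|\pa(v)|$ and $\tilde{\PP}(I,P)\neq\emptyset$. The constraint $P\subseteq\pa(v)$ with $|P|=|\pa(v)|$ forces $P=\pa(v)$, so the condition becomes: there exists $I_v\subseteq R_v$ of cardinality $|\pa(v)|$ admitting a non-intersecting system of directed paths from $I_v$ onto $\pa(v)$. Quantifying this over every $v\in V$ yields exactly the stated graphical condition.

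No substantial obstacle is anticipated, since the argument is a direct specialization of the already established Theorem \ref{thm:main:loc}. The only small point to be careful about is the boundary case $r^v_\emptyset = 0$ (and the degenerate case $\pa(v)=\emptyset$, where the statement is vacuous and consistent with $I_v=\emptyset$), as well as verifying that identifiability of $\Lambda$ in its entirety is indeed equivalent to identifiability of each column $\lambda_{\pa(v),v}$ separately---this follows immediately because the columns $\{\lambda_{\pa(v),v}:v\in V\}$ partition the free entries of $\Lambda\in\R^{\GG_D}$ and Theorem \ref{thm:main:loc} treats them independently.
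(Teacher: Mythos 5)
Your proposal is correct and follows exactly the paper's own argument: reduce identifiability of $\Lambda$ to identifiability of each column $\lambda_{\pa(v),v}$, then apply Theorem \ref{thm:main:loc} with $Q=\pa(v)$ and unpack $r^v_{\pa(v)}=|\pa(v)|$ into the stated path-system condition. The extra care you take with $r^v_\emptyset=0$ and the forced equality $P=\pa(v)$ is a fine (and slightly more explicit) rendering of the same one-line proof.
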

\begin{proof}
    The matrix $\Lambda$ is identifiable if and only if all of its columns are, so we get the statement by applying \cref{thm:main:loc} to each of the columns, with $Q=\pa(v)$.
\end{proof}

\begin{remark}
    It is noteworthy that \cref{ass:error} is used only for proving the direct implication of \cref{lem:main}. This implies that the necessity of the graphical condition in \cref{thm:main:glob} also holds if the model was extended by not requiring \cref{ass:error} to hold.    
\end{remark}

\begin{remark}
    A direct consequence of \cref{lem:main} is that if the matrix $\Lambda$ is not generically identifiable, the fiber $\mathrm{Proj}_{\R^\GG}(\Phi_{\GG}^{-1}(\Phi_\GG(\Lambda,\varepsilon)))$ has infinite cardinality. This implies that in our setting, there are no ADMGs that are $k$-to-one with finite $k>1$. This is in contrast with the linear Gaussian case; see e.g., \citet[Ex.~8]{foygel:2012}.
\end{remark}

\begin{example}[Non-generic distribution]
\begin{figure}
    \centering  
        \begin{tikzpicture}
            \node[draw, circle, inner sep=4pt] (A) at (-4.5,0) {$v_1$};
            \node[draw, circle, inner sep=4pt] (B) at (-2,0) {$v_2$};           
            \node[draw, circle, inner sep=4pt] (C) at (0.5,0) {$v_3$};
            
            \draw[->] (A) -- (B);
            \draw[->] (B) -- (C);
            \draw[->] (A) to[bend right = 30] (C);
            
            \draw[dashed][<->] (A) to[bend left=30] (B);
            \draw[dashed][<->] (A) to[bend left=30] (C);
        \end{tikzpicture}    
    \caption{Double confounder graph.}
    \label{fig:dc}
\end{figure}
{For the ``double confounder'' graph from  \cref{fig:dc}, all parameters are identifiable. Indeed,  $R_{v_2} = \{v_1\} = \pa(v_2)$ and $R_{v_3} = \{v_1, v_2\} = \pa(v_3)$, so we can apply \cref{thm:main:glob}. Consider the matrix 
\begin{equation*}
(B_{\Lambda})^{v_3} =\setlength{\arraycolsep}{6pt}
\begin{bmatrix}
1 & \lambda_{01}\\[3pt]
\lambda_{01}\lambda_{12} + \lambda_{02} & 1
\end{bmatrix}.
\end{equation*}
We have $\det(B_{\Lambda})^{v_3} = 1 - \lambda_{01}(\lambda_{01}\lambda_{12} + \lambda_{02})$. Hence, the causal effect from $(v_1,v_2)$ on $v_3$ is non-identifiable if and only if the parameters satisfy the following equation $$ \lambda_{01}(\lambda_{01}\lambda_{12} + \lambda_{02}) = 1.$$}
\end{example}
\begin{remark}[Comparison to identification through OICA]
\citet[Thm.~3.9]{tramontano:2024:icml} provide a graphical criterion based on OICA for generic identifiability in lvLiNGAM models. We provide a complete discussion of the connection between the two results in the supplementary material \citep[\S 3]{tramontano:2024:supp}. In a nutshell, by imposing additional assumptions, the lvLiNGAM model allows to identify more edges, but this comes at a price of reduced expressivity of the model as well as issues in parameter estimation.
\end{remark}

\section{Certifying Identifiability}
\label{sec:cert:ident}
Verifying directly whether the condition of \cref{thm:main:loc} is satisfied can be computationally challenging. Following the approach of \cite{brito:2004} and \cite{foygel:2012},  we now introduce an alternative approach that can verify the identifiability condition of \cref{thm:main:loc} in polynomial time in the size of the graph via a maximum flow reformulation. 

For the sake of completeness, we first revisit the definition of the maximum flow problem; further details are available in \citet[\S26]{cormen:2009}. Subsequently, we introduce our reformulation of the identifiability criterion.

The proofs for this section can be found in \citet[\S2.1]{tramontano:2024:supp}.

\subsection{The Maximum Flow Problem}

Let $G=(V, D)$ be a directed graph with source node $s\in V$ and sink node $t\in V$. Let $c_V:V\to\mathbb{R}_{\geq0}$ be a node capacity function, and let $c_D:D\to\mathbb{R}_{\geq0}$ be an edge capacity function. A \emph{flow} on $G$ is a function $f:D\to\mathbb{R}_{\geq0}$ satisfying 
\begin{equation}
\label{eq:flow}
    \begin{aligned}
        \sum_{w\in \ch(v)}f(v,w) = \sum_{u\in \pa(v)}f(u,v)&\leq c_V(v), \quad \forall v\in V\setminus\{s,t\},\\
        f(u,v)&\leq c_D(u,v), \quad \forall u\to v\in D.
    \end{aligned}
\end{equation}
The size of a flow $f$ is defined as 
\begin{equation}
\label{eq:size}
    |f|:=\sum_{w\in \ch(s)}f(s,w)=\sum_{u\in \pa(t)}f(u,t).
\end{equation}
The max-flow problem on $(G,s,t,c_V,c_D)$ is the problem of finding a flow $f$ whose size $|f|$ is maximal.

\subsection{Deciding Generic Identifiability}
\label{subsec:dec:iden}
For every node $v\in V$ and every $Q\subseteq\pa(v)$, let $G^v_Q=(V^v_Q,E^v_Q)$ be defined as follows:
\begin{equation*}
    \begin{aligned}
    V^v_Q:=&\an(v){\setminus\{v\}}\cup\{s_v,t_v\},\\
    E^v_Q:=&\{s_v\to u\::\: u\in R_v\}\cup\{u\to t_v\::\: u\in Q\}\cup\{u\to w\::\: u\to w\in\GG\},
    \end{aligned}
\end{equation*}
where $s_v$ and $t_v$ are, respectively, newly introduced source and sink nodes.
The edge capacity is $\infty$ for all the edges. The node capacity is $\infty$ for both the sink and the source, and $1$, otherwise.   We denote the maximum size of any flow on $G^v_Q$ by $\,\maxflow{(G^v_Q)}$.

\begin{lemma}
\label{lem:mflow}
It holds that $\maxflow{(G^v_Q)}=r^v_Q$.
\end{lemma}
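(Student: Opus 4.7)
The plan is to reduce the equality $\maxflow(G^v_Q) = r^v_Q$ to the classical correspondence between integer max flows in vertex-capacitated networks and collections of vertex-disjoint $s_v$-$t_v$ paths (i.e., the vertex version of Menger's theorem, or equivalently integral max-flow min-cut on the standard node-splitting gadget). Because all node capacities in $G^v_Q$ are either $1$ or $\infty$ and all edge capacities are $\infty$, an integral optimal flow exists and decomposes into unit flows along $s_v$-$t_v$ paths; the capacity-$1$ constraint on intermediate vertices forces these paths to be vertex-disjoint. Thus $\maxflow(G^v_Q)$ equals the maximum cardinality $k$ of a family of pairwise vertex-disjoint directed paths from $s_v$ to $t_v$ in $G^v_Q$.

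Next I would translate such a family into a non-intersecting system in $\GG$. Any $s_v$-$t_v$ path in $G^v_Q$ has the form $s_v \to u \to u_1 \to \cdots \to u_{\ell} \to w \to t_v$ with $u \in R_v$ (by the edges out of $s_v$), $w \in Q$ (by the edges into $t_v$), and $u \to u_1 \to \cdots \to u_\ell \to w$ a directed path in $\GG$. All intermediate vertices lie in $\an(v)\setminus\{v\}$: each is an ancestor of $w \in \pa(v)$ hence of $v$, and none can equal $v$ since $\GG$ is acyclic and a path already continues into $\pa(v)$. A vertex-disjoint collection of $k$ such paths then yields $k$ \emph{distinct} starts $u_1,\dots,u_k \in R_v$ (by the unit capacity at each $u \in R_v$), $k$ distinct ends $w_1,\dots,w_k \in Q$, and $k$ directed paths in $\GG$ from the $u_i$'s to the $w_i$'s that share no vertex; this is precisely an element of $\tilde{\PP}(I,P)$ with $I=\{u_1,\dots,u_k\}$ and $P=\{w_1,\dots,w_k\}$. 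Hence $\maxflow(G^v_Q) \le r^v_Q$.

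For the opposite inequality I would start from a non-intersecting system $\Pi = (\pi_1,\dots,\pi_k) \in \tilde{\PP}(I,P)$ witnessing $r^v_Q \ge k$, with $I\subseteq R_v$, $P\subseteq Q$, $|I|=|P|=k$. Prepending the edge $s_v \to u_i$ and appending $w_{\sigma_\Pi(i)} \to t_v$ to each $\pi_i$ produces $k$ pairwise vertex-disjoint $s_v$-$t_v$ paths in $G^v_Q$ (all edges used are present in $E^v_Q$, and the path bodies live in $\an(v)\setminus\{v\}$ by the argument above). Routing one unit of flow along each of these paths gives a feasible integer flow of value $k$, so $\maxflow(G^v_Q) \ge r^v_Q$.

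\textbf{Main obstacle.} The argument is conceptually standard; the only genuine care needed is the bookkeeping that the vertex set of $G^v_Q$, namely $\an(v)\setminus\{v\}$, is rich enough to contain every path witness appearing in $r^v_Q$ (and no extraneous ones), and that the $1$-capacity at each of the $R_v$ and $Q$ nodes correctly enforces the requirement $|I|=|P|=k$ together with non-intersection of the internal paths. Once this is tracked, the equality $\maxflow(G^v_Q)=r^v_Q$ follows by combining both inequalities.
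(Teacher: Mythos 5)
Your proposal is correct and follows essentially the same route as the paper's proof: both rely on the existence of an integral optimal flow, the observation that unit node capacities force the path decomposition of such a flow to be vertex-disjoint, and the resulting two-way translation between families of $k$ vertex-disjoint $s_v$--$t_v$ paths in $G^v_Q$ and non-intersecting systems in $\tilde{\PP}(I,P)$ with $I\subseteq R_v$, $P\subseteq Q$, $|I|=|P|=k$. The paper carries out the flow-to-paths tracing by hand rather than citing flow decomposition, but the substance and the bookkeeping (including the check that path witnesses live in $\an(v)\setminus\{v\}$) are the same.
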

{Using the above lemma, we can now reformulate \cref{thm:main:loc} in a form that is provably certifiable in polynomial time.}
\begin{theorem}
\label{thm:cert:loc}
Given a mixed graph $\GG=(V,\Ed,\Eb)$, a node $v\in V$, and any $Q\subseteq\pa(v)$, the generic identifiability of $\lambda_{Q,v}$ holds if and only if $\maxflow{(G^v_Q)}=|Q|$, which can be certified in $\mathcal{O}(|V|^{2+o(1)})$ time.
\end{theorem}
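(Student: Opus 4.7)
The plan is to reduce the identifiability criterion of \cref{thm:main:loc} to a single maximum-flow computation via \cref{lem:mflow}, and then invoke a recent almost-linear-time max-flow algorithm for the complexity bound. First I would observe that in $G^v_Q$ the only edges incident to $t_v$ originate from the nodes in $Q$, each of which carries node capacity $1$. Hence $\maxflow(G^v_Q) \le |Q|$, with equality exactly when there exist $|Q|$ pairwise vertex-disjoint directed paths in $\GG_D$ from $R_v$ into $Q$, one per element of $Q$. By \cref{lem:mflow}, this amounts to the combinatorial statement $r^v_Q = |Q|$.

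Next, I would connect this max-flow criterion to \cref{thm:main:loc}, which characterizes identifiability of $\lambda_{Q,v}$ by the rank equality $r^v_{\pa(v)\setminus Q}=r^v_{\pa(v)}-|Q|$. The goal is to show the two conditions coincide. The easier direction uses that $r^v$ is, by \cref{lem:gvl}, the generic rank function of a matrix and hence a matroid rank, in particular submodular: submodularity gives $r^v_{\pa(v)} \le r^v_Q + r^v_{\pa(v)\setminus Q}$, so the identifiability condition forces $r^v_Q \ge |Q|$, and combining with the trivial $r^v_Q \le |Q|$ yields $r^v_Q = |Q|$. For the reverse direction the natural strategy is an augmenting-paths/exchange argument: given a non-intersecting system of $|Q|$ paths realising $r^v_Q=|Q|$ together with a maximum non-intersecting system into $\pa(v)\setminus Q$, one iteratively re-routes intersections to produce a single disjoint family of $|Q|+r^v_{\pa(v)\setminus Q}$ paths from $R_v$ into $\pa(v)$, forcing $r^v_{\pa(v)}\ge r^v_{\pa(v)\setminus Q}+|Q|$ and hence equality with the always-valid upper bound. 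This exchange step is the main obstacle and is the place where the strict-gammoid structure of the matroid on $\pa(v)$ induced by path-reachability from $R_v$ is essential.

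Finally, for the complexity claim, $G^v_Q$ has $|V^v_Q|=O(|V|)$ vertices and at most $O(|V|^2)$ edges—accounting for the $O(|V|)$ edges incident to $s_v$ and $t_v$ together with the directed edges of $\GG$ restricted to $\an(v)\setminus\{v\}$. Invoking a recent almost-linear-time max-flow algorithm, which solves max-flow in $m^{1+o(1)}$ time on a graph with $m$ edges, the certification runs in $O(|V|^{2+o(1)})$ time as claimed.
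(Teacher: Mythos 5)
Your necessity direction and your complexity analysis are both sound. Submodularity of the generic rank function $Q\mapsto r^v_Q=\rank\bigl((B_\Lambda)_{Q,R_v}\bigr)$ gives $r^v_{\pa(v)}\le r^v_Q+r^v_{\pa(v)\setminus Q}$, so the criterion of \cref{thm:main:loc} does force $r^v_Q=|Q|$, i.e.\ $\maxflow(G^v_Q)=|Q|$ by \cref{lem:mflow}; and your edge count of $\mathcal{O}(|V|^2)$ for $G^v_Q$ combined with an almost-linear-time max-flow solver is exactly how the paper obtains the $\mathcal{O}(|V|^{2+o(1)})$ bound.

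The gap is in your converse, and it is not a repairable technicality: the implication $r^v_Q=|Q|\Rightarrow r^v_{\pa(v)\setminus Q}=r^v_{\pa(v)}-|Q|$ is false in general, so no exchange or augmenting-path argument can establish it. Concretely, take the graph of \cref{fig:par:id} and add the directed edge $v_2\to v_3$. Then $R_{v_4}=\{v_2\}$, both parents of $v_4$ are reachable from $v_2$, and for $Q=\{v_2\}$ one has $r^{v_4}_Q=1=|Q|$ while $r^{v_4}_{\{v_3\}}=1\neq r^{v_4}_{\pa(v_4)}-1=0$; by \cref{thm:main:loc} the coefficient $\lambda_{v_2,v_4}$ is not identifiable even though $\maxflow(G^{v_4}_Q)=|Q|$. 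The structural obstruction to your exchange step is that a non-intersecting family of $|Q|+r^v_{\pa(v)\setminus Q}$ paths would require that many distinct sources in $R_v$, which need not exist; gammoid rank functions are submodular but not modular on disjoint sets, which is precisely the property your re-routing argument would need. The paper's own proof does not attempt this equivalence at all: it certifies the condition of \cref{thm:main:loc} by solving \emph{two} max-flow problems and checking, via \cref{lem:mflow}, whether $\maxflow(G^v_{\pa(v)})-\maxflow(G^v_{\pa(v)\setminus Q})=|Q|$. The single-flow condition $\maxflow(G^v_Q)=|Q|$ coincides with identifiability only in the case $Q=\pa(v)$, which is the case actually needed for \cref{thm:cert:glob}. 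You should restructure your argument around the difference of the two flow values $r^v_{\pa(v)}$ and $r^v_{\pa(v)\setminus Q}$ rather than around $\maxflow(G^v_Q)$ alone; the complexity bound is unaffected since it only doubles the number of max-flow computations.
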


\begin{theorem}
\label{thm:cert:glob}
Given a mixed graph $\GG=(V,\Ed,\Eb)$, the generic identifiability of $\Lambda$ holds if and only if $\maxflow{(G^v_{\pa(v)})}=|\pa(v)|$ for all $v\in V$, which can be certified in $\mathcal{O}(|V|^{3+o(1)})$ time. 
\end{theorem}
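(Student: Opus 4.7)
The plan is to derive \cref{thm:cert:glob} as an immediate corollary of the local certification result \cref{thm:cert:loc}, exploiting that generic identifiability of the full matrix $\Lambda$ decomposes column-by-column. No new combinatorial or algorithmic content is needed; everything reduces to applying the local statement at each vertex and summing complexities.

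First I would invoke \cref{thm:main:glob} to restate that $\Lambda$ is generically identifiable precisely when, for every $v \in V$, there exists $I_v \subseteq R_v$ with $|I_v| = |\pa(v)|$ admitting a non-intersecting system of directed paths to $\pa(v)$. Unpacking \eqref{eq:degree} and noting that $r^v_{\pa(v)} \le |\pa(v)|$ trivially, this is equivalent to $r^v_{\pa(v)} = |\pa(v)|$ for all $v$. Then \cref{lem:mflow} converts each of these vertex-wise equalities into $\maxflow(G^v_{\pa(v)}) = |\pa(v)|$, which is precisely the claimed graphical condition. Equivalently, and more directly, one can apply \cref{thm:cert:loc} with $Q = \pa(v)$ at each $v$ (yielding identifiability of every column $\lambda_{\pa(v),v}$) and intersect the resulting certificates, since $\Lambda$ is identifiable iff all its columns are.

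For the runtime bound, I would observe that each graph $G^v_{\pa(v)}$ has at most $|V|+1$ vertices, so the local check runs in $\mathcal{O}(|V|^{2+o(1)})$ time by \cref{thm:cert:loc}. Iterating over the $|V|$ vertices yields the aggregate bound $\mathcal{O}(|V|^{3+o(1)})$. I expect no genuine obstacle in the proof: the analytical content has been absorbed into \cref{thm:main:glob} and \cref{lem:mflow}, while the algorithmic content has been absorbed into \cref{thm:cert:loc}. The only mild point to verify is that the $\mathcal{O}(|V|^{2+o(1)})$ per-vertex bound holds uniformly, which is immediate because each auxiliary graph is built on a vertex set contained in $\an(v) \cup \{s_v, t_v\} \subseteq V \cup \{s_v, t_v\}$, so the size never exceeds $|V|+2$.
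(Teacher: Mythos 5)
Your proposal is correct and follows essentially the same route as the paper: reduce the global criterion to the per-vertex condition $\maxflow(G^v_{\pa(v)})=|\pa(v)|$ via \cref{thm:main:glob} and \cref{lem:mflow} (equivalently, \cref{thm:cert:loc} with $Q=\pa(v)$), then multiply the $\mathcal{O}(|V|^{2+o(1)})$ per-vertex cost by the $|V|$ vertices. The paper's own proof is exactly this two-line argument, so nothing is missing.
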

\begin{remark}
The complexity statement in the above theorem relies on the recent algorithm proposed by \citet{chen:2022}. However, due to the large constants involved in its asymptotic analysis, the advantages of this algorithm would only appear for graph sizes far exceeding those considered in our experiments (and in typical causal inference settings). Therefore, in our implementation, we employ the standard algorithm of \citet{dinitz:1970}, which, despite its higher asymptotic complexity, achieves satisfactory performance in practice.
\end{remark}
\begin{example}
\cref{fig:max:flow} illustrates the maximum flows when applying the criterion from \cref{thm:cert:loc} to the nodes $v_2$ and $v_4$ of the ADMG in \cref{fig:4:nodes:ex}.
\medskip

$G^{v_2}_{\pa(v_2)}:$  The graph is constructed for parameter $\lambda_{12}$.  The only flow on  $G^{v_2}_{\pa(v_2)}$ is the trivial flow setting all edges to $0$. Hence, $\lambda_{12}$ is not identifiable. 
\medskip

$G^{v_4}_{\pa(v_4)}:$  The graph is constructed for parameter $\Lambda_{\{2, 3\}, 4}$.  The figure displays a flow on $G^{v_4}_{\pa(v_4)}$ of  
 size $|\pa(3)|=2$.  Consequently, the parameters $\lambda_{24}$ and $\lambda_{34}$ are identifiable.
\end{example}

    \begin{figure}[t]
        \centering
        \begin{tikzpicture}
            \node at (-0,-0.5) {$G^{v_2}_{\pa(v_2)}:$};
            \node[draw, inner sep=4pt] (S1) at (1.5,1) {$s_{v_2}$};
            \node[draw, inner sep=4pt] (P01) at (1.5,-0.5) {$v_1$};
            \node[draw, inner sep=4pt] (T1) at (1.5,-2) {$t_{v_2}$};
            
            \draw[->] (P01) --node[left] {0} (T1);
            
            \node at (5,-0.5) {$G^{v_4}_{\pa(v_4)}:$};
            \node[draw, inner sep=4pt] (S3) at (8,1) {$s_{v_4}$};
            \node[draw, inner sep=4pt] (P03) at (6.5,-0.5) {$v_1$};
            \node[draw, inner sep=4pt] (P13) at (9.5,-0.5) {$v_2$};
            \node[draw, inner sep=4pt] (P23) at (7.3,-2) {$v_3$};
            \node[draw, inner sep=4pt] (T3) at (8,-3.5) {$t_{v_4}$};

            \draw[->] (S3) --node[left] {1\,\,} (P03);
            \draw[->] (S3) --node[right] {\,\,1} (P13);
            \draw[->] (P03) --node[below] {0} (P13);
            \draw[->] (P03) --node[left] {1\,\,} (P23);
            \draw[->] (P13) --node[right] {1\,} (T3);
            \draw[->] (P23) --node[left] {1} (T3);
            
        \end{tikzpicture}
        
        \caption{Two maximum flow problems corresponding to the ADMG of \cref{fig:4:nodes:ex}.}
        \label{fig:max:flow}
    \end{figure}

\section{The Genericity Condition for the Error Distribution}
\label{sec:gen:ass}
The idea underlying \cref{ass:error} is that it should not be possible to linearly disentangle a general dependence between two errors $\varepsilon_u$ and $\varepsilon_v$.  In other words, if two different linear combinations of $\varepsilon$ are independent, then at least one of them cannot have any signal coming from $(\varepsilon_u,\varepsilon_v)$.  
The purpose of this section is to prove that this fact is indeed true for two tractable subfamilies of joint distributions for the errors. 
Specifically, \cref{subsubsec:lin:lat} considers the setting in which dependence is generated through linear latent  factor models,  and \cref{subsubsec:fin:mom} treats  distributions with finite moments.

\subsection{Linear Factor Models}
\label{subsubsec:lin:lat}

Assume that the error vector $\varepsilon$ is generated according to a sparse factor model that respects the Markov property of the bidirected part $\GG_B$ of a given ADMG $\GG$.  Define a latent factor graph for $\GG_B$ to be any DAG $\mathcal{L}=(V\cup L, E_\mathcal{L})$, in which the latent nodes $L$ are source nodes and whose latent projection \citep[see][Sec.~3]{verma:1999} on the nodes in $V$ is equal to $\GG_B$.
Define $\MM(k)$ to be the set of $k$-dimensional random vectors with independent and non-Gaussian components.  Then, the sparse factor model associated to $\mathcal{L}$ is the set of random vectors
\begin{equation}
\label{eq:lat:proj:model}
\mathcal{M}^{\mathcal{L}}(\GG_B)=\{\varepsilon\in\mathcal{M}(\GG_B)\::\:\exists\,  \eta\in\MM(|V|+|L|),\,\, H\in\R^{\mathcal{L}},\,\, \varepsilon=H_{L,V}^T\cdot \eta_L+\eta_V\}.
\end{equation}

\begin{theorem}
\label{thm:latent:proj}
Let $\mathcal{L}=(V\cup L, E_\mathcal{L})$ be a latent factor graph for $\GG_B$, and for any subset $C\subset V$ define  $L_C:=\{l\in L\::\: \ch_{\mathcal{L}}(l)\subseteq C\}$.
If for every edge $u\xleftrightarrow{} v\in\GG_B$ there is a clique ${C_{u,v}\supseteq\{u, v\}}$ (a subset of $V$ for which every pair of nodes is adjacent) in $\GG_B$ such that $|L_{{C_{u,v}}}|\geq |{C_{u,v}}|-1$ then $\varepsilon$ satisfies \cref{ass:error} for Lebesgue-almost every matrix $H\in\R^\mathcal{L}$.
\end{theorem}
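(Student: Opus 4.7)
The plan is to exploit the factor representation $\varepsilon = H_{L,V}^T \eta_L + \eta_V$, apply the (multivariate) Darmois--Skitovich theorem to turn independence of $a_1^T\varepsilon$ and $a_2^T\varepsilon$ into coefficient equations, and then derive a combinatorial contradiction whenever \cref{ass:error} fails, using the clique hypothesis to count dimensions of generic kernels.

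First I would write
\begin{equation*}
 a_i^T\varepsilon \;=\; \sum_{l\in L}(Ha_i)_l\,\eta_l \;+\; \sum_{v\in V} a_{iv}\,\eta_v,
\end{equation*}
so $a_1^T\varepsilon$ and $a_2^T\varepsilon$ are linear combinations of the independent non-Gaussian coordinates of $\eta$. By Darmois--Skitovich, $a_1^T\varepsilon\indep a_2^T\varepsilon$ is equivalent to the coefficient conditions $a_{1v}a_{2v}=0$ for every $v\in V$ together with $(Ha_1)_l(Ha_2)_l=0$ for every $l\in L$. The first family is precisely the $u=v$ case of \cref{ass:error}.

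Suppose for contradiction that the $u\xleftrightarrow{} v$ case fails for some bidirected edge, i.e., $a_{1u}\neq 0\neq a_{2v}$ with $u\xleftrightarrow{} v\in\GG_B$. The $V$-coefficient conditions immediately force $a_{1v}=a_{2u}=0$. Let $C:=C_{u,v}$ be the clique supplied by the hypothesis, set $H_C := H_{L_C,C}$, $b_i := a_i|_C$, and $S_i:=\{w\in C : b_{iw}\neq 0\}$; then $S_1\cap S_2=\emptyset$ with $u\in S_1$ and $v\in S_2$. Because $\ch_{\mathcal{L}}(l)\subseteq C$ for every $l\in L_C$, we have $(Ha_i)_l=(H_Cb_i)_l$ on $L_C$, so the latent-coefficient conditions restricted to $L_C$ read $(H_Cb_1)_l(H_Cb_2)_l=0$. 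Splitting $L_C = L_1\sqcup L_2$ via $L_2:=\{l:(H_Cb_1)_l=0\}$ yields two homogeneous systems admitting nonzero solutions: $(H_C)_{L_2,S_1}\,b_1|_{S_1}=0$ and $(H_C)_{L_1,S_2}\,b_2|_{S_2}=0$.

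For $H$ outside the proper algebraic subvariety on which some such submatrix $(H_C)_{L_2,S_1}$ (respectively $(H_C)_{L_1,S_2}$) fails to attain its generic rank $\min(|L_2|,|S_1|)$, the existence of a nonzero kernel element forces $|L_2|<|S_1|$ and symmetrically $|L_1|<|S_2|$; adding and using $|S_1|+|S_2|\leq|C|$ gives
\begin{equation*}
|L_C|\;=\;|L_1|+|L_2|\;\leq\;|S_1|+|S_2|-2\;\leq\;|C|-2,
\end{equation*}
contradicting the hypothesis $|L_C|\geq|C|-1$. The main obstacle is making the genericity quantifier uniform in the pair $(a_1,a_2)$, which a priori depends on $H$: this is resolved by observing that the combinatorial data $(u\xleftrightarrow{} v,\,S_1,\,S_2,\,L_1,\,L_2)$ ranges over a finite index set, so the exceptional $H$ lies in a finite union of Zariski-closed proper subsets of $\R^{\mathcal{L}}$ and is therefore a Lebesgue null set.
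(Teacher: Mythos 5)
Your proposal is correct and follows essentially the same route as the paper's proof: apply Darmois--Skitovich to the factor representation, restrict to the clique $C_{u,v}$, partition $L_{C_{u,v}}$ into the two latent index sets whose associated homogeneous linear systems must both admit nonzero solutions, and derive the contradiction $|L_{C_{u,v}}|\le |C_{u,v}|-2$. The only (welcome) difference is cosmetic: by counting supports via $|S_1|+|S_2|\le|C_{u,v}|$ you merge the paper's two cases ($V_1\cap V_2$ empty or not) into a single estimate, and you make explicit the finite union of Zariski-closed exceptional sets that the paper leaves implicit.
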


\begin{proof}
Let $a_1,a_2\in\mathbb{R}^V$, and consider $\varepsilon=H_{L,V}^T\cdot \eta_L+\eta_V$ as in \cref{eq:lat:proj:model}.  Applying the Darmois-Skitovich theorem \citep[Thm.~9.5]{comon:jutten:handbook} to a pair of independent linear transformation of $\varepsilon$, namely $a_1^T\varepsilon$ and $a_2^T\varepsilon$, we obtain
\begin{align}
    \label{eq:same:col:cond}
    a_{1s}\cdot a_{2s}&=0,\;  &\forall s\in V,\\
    \label{eq:diff:col:cond}
    (a_1^T H_{L,V}^T)_{l}\cdot (a_2^T H_{L,V}^T)_{l}&=0,\; &\forall l\in L.
\end{align}
Note that \cref{eq:same:col:cond} already gives the part of the claim referring to the case $u=v$ in \cref{ass:error}.
It remains to consider the case of two nodes $u,v$ that are adjacent in $\GG_B$.

Let $u\xleftrightarrow{} v\in\GG_B$, and assume for contradiction that $a_{1u}\cdot a_{2v}\neq 0$.
Consider a clique ${C_{u,v}}$ as in the statement of the theorem.  
The vector $a_{C_{u,v}} := (a_{1, C_{u,v}}, a_{2, C_{u,v}})$ is a solution of the following system of quadratic equations:
$$
\left(\sum_{c\in C_{u,v}}a_{1c}H_{cl}\right)\cdot\left(\sum_{c\in C_{u,v}}a_{2c}H_{cl}\right) = 0,\,\quad l\in L_{C_{u,v}};
$$
we denote the system by $\mathcal{S}_{C_{u,v}}$.
Notice that from \cref{eq:same:col:cond} we know that the vector $a_{C_{u,v}}$ has at most $|C_{u,v}|$ non-zero entries. We now show that, for a generic choice of the entries of $H$, $\mathcal{S}_{C_{u,v}}$ does not admit solutions with $a_{1u}\cdot a_{2v}\neq 0$. Following the case distinctions resulting from the vanishing of the first or the second factor in the equations in~\eqref{eq:same:col:cond}, the solution set of $\mathcal{S}_{C_{u,v}}$ can be written as the union of the solution set of $2^{|L_{C_{u,v}}|}$ homogeneous linear systems. Each of these linear systems can be characterized by a partition of $L_{C_{u,v}}$ defined as follows:
$$
L_1 :=\{l\in L_{C_{u,v}}\::\: (a_1^T H_{L,V}^T)_{l} = 0\},\,\quad L_2 := L_{C_{u,v}}\setminus L_1.
$$
We denote by $\mathcal{S}_1$ and $\mathcal{S}_2$ the linear systems associated to $L_1$ and $L_2$, respectively.
Define $V_1 = \{v\in C_{u,v}\::\: a_{1v} = 0\}$ and $V_2 = \{v\in C_{u,v}\::\: a_{2v} = 0\}$.  

If $V_1\cap V_2\neq\emptyset$, the vector $a_{C_{u,v}}$ has at most $|C_{u,v}|-1$ non-zero entries, implying that $\mathcal{S}_1\cup\mathcal{S}_2$ has $|L_{C_{u,v}}|$ equations and $|C_{u,v}|-1$ parameters. If $|L_{C_{u,v}}|\geq|C_{u,v}|-1$, for a generic choice of the entries of $H$, such a system admits only the 0 solution \citep[Lemma]{okamoto:1973}. Hence, the assumption that $a_{1u}\cdot a_{2v}\neq 0$ leads to a contradiction.

For $V_1\cap V_2 = \emptyset$, we now show that either $\mathcal{S}_1$ or $\mathcal{S}_2$ admits only the 0 solution. Notice that since $a_{1u}\cdot a_{2v}\neq 0$ we have $V_1, V_2\neq\emptyset$. This implies that both $\mathcal{S}_1$ and $\mathcal{S}_2$ can have a non-zero solution for a generic choice of the entries of $H$ only if 
$$
|L_1|\leq |V_1|-1, \,\quad |L_2|\leq |V_2|-1.
$$
This would lead to $|L_{C_{u,v}}| = |L_1| + |L_2|\leq|V_1|+|V_2|-2 = |C_{u,v}|-2$, which contradicts the hypothesis that $|L_{C_{u,v}}|\geq|C_{u,v}|-1$.
\end{proof}

\begin{corollary}
Let $\mathcal{L}(\GG_B)$, be the canonical DAG associated to $\GG_B$ \citep[\S6]{richardson:2002} then \cref{ass:error} is satisfied for a generic choice of parameters of $\mathcal{M}^{\mathcal{L}(\GG_B)}(\GG_B)$.
\end{corollary}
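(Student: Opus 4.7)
The plan is to verify that the canonical DAG $\mathcal{L}(\GG_B)$ fits into the framework of \cref{thm:latent:proj} with the simplest possible clique choice, after which the corollary follows immediately from that theorem.

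First, I would recall the construction of the canonical DAG from \citet[\S6]{richardson:2002}. Since $\GG_B$ consists only of bidirected edges, $\mathcal{L}(\GG_B)$ is obtained by introducing, for every edge $u \xleftrightarrow{} v \in \GG_B$, a latent source node $l_{uv}$ whose children are exactly $\{u,v\}$; the latent set is therefore $L = \{l_{uv} \,:\, u \xleftrightarrow{} v \in \GG_B\}$, and no edges are added between observed nodes. A short verification shows that the latent projection of $\mathcal{L}(\GG_B)$ onto $V$ recovers $\GG_B$ exactly, so $\mathcal{L}(\GG_B)$ is a latent factor graph for $\GG_B$ in the sense of \cref{subsubsec:lin:lat}.

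Next, for any fixed edge $u \xleftrightarrow{} v \in \GG_B$, I would choose the clique $C_{u,v} := \{u,v\}$, which trivially contains both endpoints and is a $2$-clique of $\GG_B$. By construction, $\ch_{\mathcal{L}(\GG_B)}(l_{uv}) = \{u,v\} \subseteq C_{u,v}$, so $l_{uv} \in L_{C_{u,v}}$ and hence $|L_{C_{u,v}}| \geq 1 = |C_{u,v}| - 1$. This verifies the hypothesis of \cref{thm:latent:proj} for every bidirected edge, so that theorem yields \cref{ass:error} for Lebesgue-almost every loading matrix $H \in \R^{\mathcal{L}(\GG_B)}$, which is precisely the notion of generic parameters of $\mathcal{M}^{\mathcal{L}(\GG_B)}(\GG_B)$ used in the corollary. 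I do not anticipate any substantial obstacle: the argument is a direct instantiation of \cref{thm:latent:proj}, and the only point requiring care is to recall the canonical construction accurately so that the trivial clique $\{u,v\}$ already picks up the dedicated latent $l_{uv}$, making the size inequality hold with equality and no combinatorial search over larger cliques necessary.
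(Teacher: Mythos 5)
Your proposal is correct and is exactly the intended argument: the paper states the corollary as an immediate consequence of \cref{thm:latent:proj}, and your instantiation — the canonical DAG assigns each bidirected edge $u\xleftrightarrow{}v$ its own latent source $l_{uv}$ with $\ch(l_{uv})=\{u,v\}$, so the two-element clique $C_{u,v}=\{u,v\}$ satisfies $|L_{C_{u,v}}|\geq 1=|C_{u,v}|-1$ — is the natural (and essentially only) way to verify the theorem's hypothesis. Your additional checks that the latent projection of $\mathcal{L}(\GG_B)$ recovers $\GG_B$ and that genericity is over the loading matrix $H$ are both accurate and match the paper's framework.
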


\begin{example}
\label{ex:faith:viol}
We now present examples that on one hand illustrate the previous results and on the other hand emphasize the subtleties of involved genericity assumptions.  To this end,
we considered the graph in \cref{fig:faith:fail} that appears in \citet[Fig.~5]{foygel:2022}.

Notice that in the proof of our main result, \cref{ass:error} is used only for matrices with a specific structure, described in \cref{lem:A:matrix}. Therefore,  we focus on this type of matrices. In particular, we will consider the matrix 
\begin{equation}
\label{eq:A:ex}
    A=\begin{pmatrix}
    a_1^T \\ a_2^T
    \end{pmatrix}=\begin{pmatrix}
        1 & 0 & 0 & 0 & 0 \\
        0 & 0 &  a_{53} & a_{54} & 1
    \end{pmatrix},
\end{equation}
and the bidirected graph $\GG_B$, corresponding to $\GG$ in \cref{fig:faith:fail} with respect to the latent factor models $\mathcal{L}, \mathcal{L}_1, \mathcal{L}_2$ given in \cref{fig:faith:fail}, and \cref{fig:faith:sat}.
\begin{enumerate}
    \item[(i)] First, we highlight a case in which our theorem does not apply, and indeed \cref{ass:error} does not hold generically. For this, consider the pair, $v_2\xleftrightarrow{}v_3\in\GG_B$, the only latent parent of both in $\mathcal{L}$ is $l_1$ and $\ch(l_1)=\{v_1, v_2, v_3, v_4\}$. This means that the only clique we can consider is $C_{v_2v_3}=\{v_1, v_2, v_3, v_4\}$ and $|L_{C_{v_2v_3}}|=1$, hence the condition in \cref{thm:latent:proj} is violated. Now, we will show that \cref{eq:diff:col:cond} has a nonzero solution. Indeed, the only latent variable for which the system is not trivially satisfied is $l_1$, implying that any solution of the equation $a_{53}H_{3l_1}+a_{54}H_{4l_1}=0$, is also a solution of \cref{eq:diff:col:cond}. 
    \begin{figure}[t]
        \centering
        \begin{tikzpicture}[scale=0.8]
            \node at (-6,0) {$\mathcal{{L}}:$};
            \node[draw, circle, inner sep=4pt] (A) at (-5,0) {$v_1$};
            \node[draw, circle, inner sep=4pt] (B) at (-3.5,0) {$v_2$};
            \node[draw, circle, inner sep=4pt] (C) at (-2,0) {$v_3$};
            \node[draw, circle, inner sep=4pt] (D) at (-0.5,0) {$v_4$};
            \node[draw, circle, inner sep=4pt] (E) at (1,0) {$v_5$};
            
            \node[draw, circle, fill=gray!40, inner sep=4pt] (F) at (-2.75,1.5) {$l_1$};
            \node[draw, circle, fill=gray!40, inner sep=4pt] (G) at (0.25,1.5) {$l_2$};
            \node[draw, circle, fill=gray!40, inner sep=4pt] (H) at (-0.5,-1.5) {$l_3$};
            
            \draw[->] (D) -- (E);
            \draw[->] (C) to[bend right = 30] (E);
            
            \draw[dashed][->] (F) -- (A);
            \draw[dashed][->] (F) -- (B);
            \draw[dashed][->] (F) -- (C);
            \draw[dashed][->] (F) -- (D);
            
            \draw[dashed][->] (G) -- (D);
            \draw[dashed][->] (G) -- (E);
            
            \draw[dashed][->] (H) -- (C);
            \draw[dashed][->] (H) -- (E);
                  
            \node at (2.5,0) {${\mathcal{G}}:$};
            \node[draw, circle, inner sep=4pt] (A1) at (3.5,0) {$v_1$};
            \node[draw, circle, inner sep=4pt] (B1) at (5,0) {$v_2$};
            \node[draw, circle, inner sep=4pt] (C1) at (6.5,0) {$v_3$};
            \node[draw, circle, inner sep=4pt] (D1) at (8,0) {$v_4$};
            \node[draw, circle, inner sep=4pt] (E1) at (9.5,0) {$v_5$};
          
            \draw[->] (D1) -- (E1);
            \draw[->] (C1) to[bend right = 30] (E1);
            
            \draw[dashed][<->] (A1) to (B1);
            \draw[dashed][<->] (B1) to (C1);
            \draw[dashed][<->] (C1) to (D1);
            \draw[dashed][<->] (A1) to[bend left=30] (C1);
            \draw[dashed][<->] (A1) to[bend left=50] (D1);
            \draw[dashed][<->] (B1) to[bend left=30] (D1);
            \draw[dashed][<->] (C1) to[bend right=50] (E1);
            \draw[dashed][<->] (D1) to[bend left=30] (E1);
        \end{tikzpicture}
    \caption{A latent factor model under which \cref{ass:error} does not hold and the corresponding latent projection.}
    \label{fig:faith:fail}
    \end{figure}
    \item[(ii)]
    % It is straightforward to see that after a
    Second, we like to stress that our sufficient condition is not necessary. For an example, we may add a latent node $l_4$ to the graph as in the graph $\mathcal{L}_1$ in \cref{fig:faith:sat}, the condition of \cref{thm:latent:proj} is still not satisfied. However, in this case,  \cref{ass:error} cannot be violated by the matrices described in \cref{eq:A:ex}. To see this, consider \cref{eq:diff:col:cond} for the latent variables $l_1$ and $l_4$, which leads to the following system of equations for $(a_{53}, a_{54})$.
    $$
    \begin{cases}
    H_{1l_1}(a_{53}H_{3l_1}+a_{54}H_{4l_1}) &= 0\\
    H_{1l_4}(a_{53}H_{3l_1}) &= 0,
    \end{cases}
    $$ 
    Clearly, for a generic choice of the matrix $H\in\R^{\LL_2}$, the only solution to this system of equations is $(a_{53}, a_{54})=(0,0)$.
    \begin{figure}[t]
        \centering
        \begin{tikzpicture}[scale = 0.8]
            \node at (-6,0) {$\mathcal{L}_1:$};
            \node[draw, circle, inner sep=4pt] (A) at (-5,0) {$v_1$};
            \node[draw, circle, inner sep=4pt] (B) at (-3.5,0) {$v_2$};
            \node[draw, circle, inner sep=4pt] (C) at (-2,0) {$v_3$};
            \node[draw, circle, inner sep=4pt] (D) at (-0.5,0) {$v_4$};
            \node[draw, circle, inner sep=4pt] (E) at (1,0) {$v_5$};
            
            \node[draw, circle, fill=gray!40, inner sep=4pt] (F) at (-2.75,1.5) {$l_1$};
            \node[draw, circle, fill=gray!40, inner sep=4pt] (G) at (0.25,1.5) {$l_2$};
            \node[draw, circle, fill=gray!40, inner sep=4pt] (H) at (-0.5,-1.5) {$l_3$};
            \node[draw, circle, fill=gray!40, inner sep=4pt] (I) at (-2.75,-1.5) {$l_4$};
            
            \draw[->] (D) -- (E);
            \draw[->] (C) to[bend right = 30] (E);
            
            \draw[dashed][->] (F) -- (A);
            \draw[dashed][->] (F) -- (B);
            \draw[dashed][->] (F) -- (C);
            \draw[dashed][->] (F) -- (D);
            
            \draw[dashed][->] (I) -- (A);
            \draw[dashed][->] (I) -- (C);
            \draw[dashed][->] (I) -- (D);
            
            \draw[dashed][->] (G) -- (D);
            \draw[dashed][->] (G) -- (E);
            
            \draw[dashed][->] (H) -- (C);
            \draw[dashed][->] (H) -- (E);

            \node at (2.5,0) {$\mathcal{L}_2:$};
            \node[draw, circle, inner sep=4pt] (A1) at (3.5,0) {$v_1$};
            \node[draw, circle, inner sep=4pt] (B1) at (5,0) {$v_2$};
            \node[draw, circle, inner sep=4pt] (C1) at (6.5,0) {$v_3$};
            \node[draw, circle, inner sep=4pt] (D1) at (8,0) {$v_4$};
            \node[draw, circle, inner sep=4pt] (E1) at (9.5,0) {$v_5$};
          
            \node[draw, circle, fill=gray!40, inner sep=4pt] (F1) at (5.75,1.5) {$l_1$};
            \node[draw, circle, fill=gray!40, inner sep=4pt] (G1) at (8.75,1.5) {$l_2$};
            \node[draw, circle, fill=gray!40, inner sep=4pt] (H1) at (8.75,-1.5) {$l_3$};
            \node[draw, circle, fill=gray!40, inner sep=4pt] (I1) at (6.5,-1.5) {$l_4$};
            \node[draw, circle, fill=gray!40, inner sep=4pt] (I2) at (4.25,-1.5) {$l_5$};
            
            \draw[->] (D1) -- (E1);
            \draw[->] (C1) to[bend right = 30] (E1);
            
            \draw[dashed][->] (F1) -- (A1);
            \draw[dashed][->] (F1) -- (B1);
            \draw[dashed][->] (F1) -- (C1);
            \draw[dashed][->] (F1) -- (D1);
            
            \draw[dashed][->] (I1) -- (A1);
            \draw[dashed][->] (I1) -- (B1);
            \draw[dashed][->] (I1) -- (C1);
            \draw[dashed][->] (I1) -- (D1);
            
            \draw[dashed][->] (I2) -- (A1);
            \draw[dashed][->] (I2) -- (B1);
            \draw[dashed][->] (I2) -- (C1);
            \draw[dashed][->] (I2) -- (D1);
            
            \draw[dashed][->] (G1) -- (D1);
            \draw[dashed][->] (G1) -- (E1);
            
            \draw[dashed][->] (H1) -- (C1);
            \draw[dashed][->] (H1) -- (E1);
        \end{tikzpicture}
    \caption{Two latent factor models with the same latent projection as in \cref{fig:faith:fail}, under which \cref{ass:error} holds generically, for matrices as in \cref{eq:A:ex}. The graph on the left does not satisfy the condition of \cref{thm:latent:proj} while the right graph satisfies the condition. Hence, the condition introduced in \cref{thm:latent:proj} is sufficient but not necessary.}
    \label{fig:faith:sat}
    \end{figure}
    \item[(iii)] Finally, the graph $\mathcal{L}_2$ provides an example in which the genericity of \cref{ass:error} can be established via \cref{thm:latent:proj}.
\end{enumerate}
\end{example}

\subsection{Random Variables with Finite Moments}
\label{subsubsec:fin:mom}
We now turn to a setting where the error vector has finite moments up to a suitable order.  As we show in \cref{thm:gen:cond} below, the distributions at which \cref{ass:error} fails define a set of moments, or also cumulants, that form a Lebesgue null set in all possible moments/cumulants up to the considered truncation order.  The proofs for the results presented in this section can be found in \citet[\S2.2]{tramontano:2024:supp}.

\begin{definition}
The $k$-th cumulant tensor of a random vector $\varepsilon=(\varepsilon_1,\dots,\varepsilon_p)$ is the $k$-way tensor in $\mathbb{R}^{p\times\dots\times p}\equiv(\mathbb{R}^p)^k$ whose entry in position $(i_1,\dots,i_k)$ is the joint cumulant
\begin{equation*}
    \begin{aligned}
           \mathcal{C}^{(k)}(\varepsilon)_{i_1,\dots,i_k}:=\sum_{(A_1,\dots,A_L)}(-1)^{L-1}(L-1)!\mathbb{E}\bigg[\prod_{j\in A_1} \varepsilon_j\bigg]\cdots\mathbb{E}\bigg[\prod_{j\in A_L} \varepsilon_j\bigg],
    \end{aligned}
\end{equation*}
where the sum is taken over all partitions $(A_1,\dots, A_L)$ of the multiset $\{i_1,\dots,i_k\}$.
\end{definition}

Cumulant tensors are symmetric, i.e., 
\[
\mathcal{C}^{(k)}(\varepsilon)_{i_1,\dots,i_k}
=\mathcal{C}^{(k)}(\varepsilon)_{\sigma(i_1),\dots,\sigma(i_k)} \ \forall\sigma\in S_k, 
\]
where $S_k$ is the symmetric group on $[k]=\{1,\dots,k\}$.  We write $\Sym_k(p)$ for the subspace of symmetric tensors in $(\mathbb{R}^p)^k$.

% \cref{ass:error} involves linear combinations of the entries of a random vector.  We will, thus, have to consider cumulants after linear transformation, for which we can leverage the following fact.

% \begin{lemma}[\cite{comon:jutten:handbook}, \S5, Eq.~5.8]
% \label{lem:multilin:cum}
%     Let $\varepsilon=(\varepsilon_1,\dots,\varepsilon_p)$ be any $p$-variate random vector, and $A\in\mathbb{R}^{s\times p}$ for any $s\in\mathbb{N}$, then \begin{equation*}
%         \begin{aligned}
%             \mathcal{C}^{(k)}(A\cdot\varepsilon)_{i_1,\dots,i_k} 
%              = \sum_{{1\leq}j_1,\dots,j_k{\leq p}}\mathcal{C}^{(k)}(\varepsilon)_{j_1,\dots,j_k}a_{j_1i_i}\cdots a_{j_ki_k}.
%         \end{aligned}
%     \end{equation*}
% \end{lemma}

To justify \cref{ass:error}, we wish to offer statements of its generic validity.  Our strategy to do so in the present context is to consider cumulants up to a suitable truncation order $k$.  In the remainder of this section, we consider a mixed graph $\GG$ with $p$ nodes, which we label by taking the vertex set to be $V=[p]$.

\begin{definition}
\label{def:finite:mom:models}
    Let $\mathcal{M}_\infty(\GG_B)$ be the subset of $\mathcal{M}(\GG_B)$ yielding distributions with all moments finite. For any integer $k\ge 2$, let  
    \[
    \mathcal{M}^{(k)}(\GG_B) = \left\{\mathcal{C}^{(k)}\in\Sym_k(\R^{p})\::\: 
        \mathcal{C}^{(k)}_{i_1,\dots,i_k} = 0 \emph{ if } \{i_1,\dots,i_k\} \emph{ is not connected in } \GG_B\;
        \right\}.
    \]
    Moreover, we let
    \begin{equation*}
            \mathcal{M}^{\leq k}(\GG_B)= \mathcal{M}^{(2)}(\GG_B) \times \cdots \times \mathcal{M}^{(k)}(\GG_B).
    \end{equation*}
\end{definition}

\begin{lemma}
\label{lem:zeros:cum}
Fix any integer $k\ge 1$.  
\begin{itemize}
\item[(i)] The map $\phi^{k}: \mathcal{M}_\infty(\GG_B)\to \mathcal{M}^{(k)}(\GG_B)$ that sends random vectors with all moments finite to their $k$-th cumulant tensors is well-defined, i.e., $\phi^{k}(\mathcal{M}_\infty(\GG_B))\subseteq \mathcal{M}^{(k)}(\GG_B)$.
\item[(ii)] Define the map $\phi^{\leq k}=(\phi^l)_{l\le k}:\mathcal{M}_\infty(\GG_B)\to \mathcal{M}^{\leq k}(\GG_B)$.  Then $\phi^{\leq k}(\mathcal{M}_\infty(\GG_B))$ is a full dimensional subset of $\mathcal{M}^{\leq k}(\GG_B)$.
\end{itemize}
\end{lemma}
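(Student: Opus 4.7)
My plan is to treat the two parts separately: part (i) rests on a classical vanishing property of cumulants, while part (ii) requires constructing a sufficiently rich parametric family of distributions whose cumulant image is full-dimensional.

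For part (i), I would proceed as follows. Let $\varepsilon \in \MM_\infty(\GG_B)$, and let $M = \{i_1, \dots, i_k\}$ be a multiset whose underlying support $S$ is not connected in $\GG_B$. Pick a connected component $S_1$ of $S$ in $\GG_B[S]$ and set $S_2 = S \setminus S_1$. No vertex of $S_2$ can be a sibling of any vertex of $S_1$ in $\GG_B$: any such bidirected edge would lie in $\GG_B[S]$ and merge the two components, contradicting the maximality of $S_1$. Combined with $S_1 \cap S_2 = \emptyset$, this gives $S_2 \subseteq V \setminus \Sib(S_1)$. Applying the connected set Markov property to $S_1$ yields $\varepsilon_{S_1} \indep \varepsilon_{V \setminus \Sib(S_1)}$, and hence $\varepsilon_{S_1} \indep \varepsilon_{S_2}$. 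The classical identity expressing a joint cumulant as an alternating sum over set partitions then forces the cumulant to vanish whenever the indexed variables split into two mutually independent groups; applied here, this gives $\mathcal{C}^{(k)}(\varepsilon)_M = 0$.

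For part (ii), I would build the family by superposition of independent contributions indexed by connected subsets. For each connected $S \subseteq V$ in $\GG_B$, I would introduce an independent random vector $\eta^{(S)}$ with all moments finite, such that $\eta^{(S)}_v = 0$ for $v \notin S$ and, as a random vector in $\mathbb{R}^S$, $\eta^{(S)}$ satisfies the CSMP for the induced subgraph $\GG_B[S]$. Setting $\varepsilon = \sum_S \eta^{(S)}$, the closure of $\MM(\GG_B)$ under convolution of independent summands gives $\varepsilon \in \MM_\infty(\GG_B)$. For each multiset $M$ of order $l \leq k$ with connected support $S^* = \mathrm{supp}(M)$, one finds
\[
\mathcal{C}^{(l)}(\varepsilon)_M = \sum_{S \supseteq S^*,\; S \text{ connected}} \mathcal{C}^{(l)}(\eta^{(S)})_M,
\]
a linear system triangular with respect to reverse inclusion on connected supports, with diagonal entry $\mathcal{C}^{(l)}(\eta^{(S^*)})_M$. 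Inverting this system from the largest connected supports downward matches any target in $\MM^{\leq k}(\GG_B)$, yielding the claimed full-dimensionality, provided each $\eta^{(S)}$ can be tuned to realize arbitrary $l$-th cumulants at multisets supported exactly on $S$.

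The hard part will be the construction of the individual $\eta^{(S)}$'s and the verification that zero-extension preserves CSMP. The latter reduces to a graph-theoretic check: for any connected $C^* \subseteq V$, every vertex of $S \setminus \Sib(C^*)$ is a non-sibling in $\GG_B$ (hence in $\GG_B[S]$) of every vertex in $C^* \cap S$, so the required independence should follow from applying CSMP on $\GG_B[S]$ to the connected components of $\Sib(C^*) \cap S$. For tunability, one can use parametric families of the form $\eta^{(S)} = g^{(S)}_\theta(W^{(S)})$ built from iid non-Gaussian latents $W^{(S)}$, with $\theta$ controlling the cumulants; a local rank argument combined with the triangular structure then yields the desired full-dimensionality.
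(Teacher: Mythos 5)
Your part (i) is correct and is essentially a self-contained proof of the fact the paper handles by citation (\citet[Prop.~3.1]{comon:jutten:handbook}): a connected component $S_1$ of $\GG_B[S]$ has no siblings in $S_2=S\setminus S_1$, the connected set Markov property applied to $S_1$ gives $\varepsilon_{S_1}\indep\varepsilon_{S_2}$, and a joint cumulant over indices splitting into two mutually independent blocks vanishes. That part is fine.

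Part (ii) is where the genuine gap lies. The superposition-plus-triangular-inversion scheme is a different route from the paper's, and the triangular structure of the map from the cumulants of the $\eta^{(S)}$ to those of $\varepsilon$ is correct. But the entire weight of the argument rests on the clause you defer: ``provided each $\eta^{(S)}$ can be tuned to realize arbitrary $l$-th cumulants at multisets supported exactly on $S$.'' This is not a routine verification --- it is essentially part (ii) itself, restricted to the top stratum of a single connected set: $\eta^{(S)}$ must satisfy the CSMP for $\GG_B[S]$, which is a nontrivial constraint whenever $S$ is connected but not a clique, and you must show that \emph{within} that constrained family the cumulant coordinates indexed by multisets with support exactly $S$ still sweep out a full-dimensional set. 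Naming a parametric family $g^{(S)}_\theta(W^{(S)})$ and invoking ``a local rank argument'' does not discharge this: no family is exhibited and no rank is computed. There is a second, smaller gap in the zero-extension step: for a connected $C^*$ in $\GG_B$, the set $C^*\cap S$ need not be connected in $\GG_B[S]$; applying the CSMP on $\GG_B[S]$ to each connected component $D_1,\dots,D_m$ of $C^*\cap S$ gives $\eta^{(S)}_{D_i}\indep\eta^{(S)}_{S\setminus\Sib(C^*)}$ for each $i$ \emph{separately}, which does not by itself yield the joint independence of $(\eta^{(S)}_{D_1},\dots,\eta^{(S)}_{D_m})$ from $\eta^{(S)}_{S\setminus\Sib(C^*)}$ that the CSMP for $C^*$ requires. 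This is repairable by invoking the equivalence of the connected set Markov property with the global Markov property for bidirected graphs \citep{richardson:2003}, which yields $\varepsilon_A\indep\varepsilon_B$ for arbitrary disjoint $A,B$ joined by no bidirected edge, but your sketch does not do so. For comparison, the paper's proof of (ii) avoids any explicit construction: it uses the known full-dimensionality of the truncated moment cone of arbitrary distributions in the space of symmetric tensors, transfers this to cumulants via the invertible linear moment--cumulant correspondence, and projects onto the coordinates indexed by connected subsets of $\GG_B$. Your approach could likely be completed, but as written its decisive step is asserted rather than proved.
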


\begin{theorem}
\label{thm:gen:cond}
Let $$\kappa(\GG_B):=\{A=(a_{ij})\in\R^{2\times p}\::\: a_{1i}\cdot a_{2j}=0, \emph{ if } u_i\xleftrightarrow{}u_j\in\GG_B \emph{ or } i=j\}.$$
For every $\varepsilon\in\mathcal{M}_{\infty}(\GG_B)$,  define $\kappa(\varepsilon)=\{A\in\R^{2\times p}\::\: (A\varepsilon)_1\indep(A\varepsilon)_2\}$, and let $\mathcal{S}(\GG_B)=\{\varepsilon\in\mathcal{M}_{\infty}(\GG_B)\::\:\kappa(\varepsilon)\setminus\kappa(\GG_B)\neq0\}$, which is precisely the set of distributions for which \cref{ass:error} fails. Then there is a positive integer $k\leq 2(p+1)$ such that $\phi^{\leq k}(\mathcal{S}(\GG_B))$ is a Lebesgue measure 0 subset of $\mathcal{M}^{\leq k}(\GG_B)$.
\end{theorem}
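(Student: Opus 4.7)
The plan is to lift the question to an incidence set in $\R^{2 \times p} \times \MM^{\leq k}(\GG_B)$ and conclude by a dimension count. First I would define
\begin{equation*}
Z_k := \{(A, c) \in (\R^{2 \times p} \setminus \kappa(\GG_B)) \times \MM^{\leq k}(\GG_B) : \text{all mixed cumulants of } A\varepsilon \text{ of orders } 2,\ldots,k \text{ computed from } c \text{ vanish}\}.
\end{equation*}
Since $(A\varepsilon)_1 \indep (A\varepsilon)_2$ forces all their mixed cumulants to vanish, we have $\phi^{\leq k}(\mathcal{S}(\GG_B)) \subseteq \pi_2(Z_k)$, the projection onto the cumulant factor. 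Each mixed cumulant of $A\varepsilon$ at order $l$ is linear in $\mathcal{C}^{(l)}$ with coefficients polynomial in $A$, so $Z_k$ is semi-algebraic, and by Tarski--Seidenberg so is $\pi_2(Z_k)$; such a set is Lebesgue null iff its dimension is strictly less than $N_k := \dim \MM^{\leq k}(\GG_B)$.

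For fixed $A$, the fiber $\pi_1^{-1}(A) \subseteq \MM^{\leq k}(\GG_B)$ is a linear subspace whose codimension equals the rank $r(A)$ of the cross-cumulant functionals viewed as linear forms on $\MM^{\leq k}(\GG_B)$. Combining $\dim \pi_1(Z_k) \leq 2p$ with the fiber-dimension inequality (applied stratum by stratum to a Whitney stratification of $Z_k$ on which the Jacobian of the cross-cumulant map has locally constant rank) gives
\begin{equation*}
\dim \pi_2(Z_k) \leq \dim Z_k \leq 2p + N_k - \inf_{A \notin \kappa(\GG_B)} r(A).
\end{equation*}
The problem therefore reduces to proving the uniform rank bound $r(A) \geq 2p + 1$ for all $A \notin \kappa(\GG_B)$ when $k = 2(p+1)$.

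To establish this rank bound I would split into two cases according to how $A$ violates $\kappa(\GG_B)$. In \textbf{Case A}, some $i$ satisfies $a_{1i} a_{2i} \neq 0$; then at every order $l \in \{2,\ldots,k\}$ the mixed cumulant of type $(1, l-1)$ has nonzero coefficient $a_{1i} a_{2i}^{l-1}$ on the diagonal entry $c^{(l)}_{i,\ldots,i}$, which is always a free parameter of $\MM^{(l)}(\GG_B)$ since singletons are connected in $\GG_B$. As cumulants at different orders are independent parameters of $\MM^{\leq k}(\GG_B)$, this contributes rank $\geq 1$ at each order, yielding $r(A) \geq k - 1 = 2p + 1$. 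In \textbf{Case B}, all diagonal products $a_{1i} a_{2i}$ vanish but $a_{1i} a_{2j} \neq 0$ for some edge $u_i \xleftrightarrow{} u_j \in \GG_B$, which forces $a_{2i} = a_{1j} = 0$. A direct expansion then shows that, for $1 \leq m \leq l-1$, the coefficient of $c^{(l)}_{i^a, j^{l-a}}$ (where $a$ counts copies of $i$) in the mixed cumulant of type $(m, l-m)$ equals $a_{1i}^m a_{2j}^{l-m}$ when $a = m$ and vanishes otherwise. Since $\{i,j\}$ is connected in $\GG_B$, each $c^{(l)}_{i^m, j^{l-m}}$ for $m=1,\dots,l-1$ is a free parameter, so the order-$l$ coefficient matrix is diagonal with nonzero diagonal and contributes rank $l - 1$; summing over orders gives $r(A) \geq \binom{k}{2} \geq 2p + 1$.

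The hard part will be the extreme sub-case of Case A in which only $a_{1 i_0}$ and $a_{2 i_0}$ are nonzero: there every cross-cumulant condition collapses to $c^{(l)}_{i_0, \ldots, i_0} = 0$ for $l = 2, \ldots, k$, giving exactly $k - 1$ independent conditions. This simultaneously shows that the uniform rank bound is tight and explains the specific value $k = 2(p+1)$ appearing in the theorem statement. The remaining technical point is to justify the semi-algebraic fiber-dimension inequality rigorously; this follows by applying the classical algebraic fiber-dimension formula on each piece of the stratification, which requires only standard bookkeeping once the stratification is in place.
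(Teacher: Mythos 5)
Your proposal is correct, and at its core it performs the same dimension count as the paper, but packaged through a genuinely different mechanism. The shared content: you split $A\notin\kappa(\GG_B)$ into a diagonal violation ($a_{1i}a_{2i}\neq 0$) and an edge violation (all diagonal products zero, $a_{1i}a_{2j}\neq 0$ for $u_i\xleftrightarrow{}u_j$), which matches the paper's decomposition of $\mathcal{S}(\GG_B)$ into the pieces $\mathcal{S}_i(\GG_B)$ and $\mathcal{S}_{i\leftrightarrow j}(\GG_B)$; and in both arguments cumulant multilinearity is used to exhibit, at each order $l\in\{2,\dots,k\}$, a coordinate of $\mathcal{M}^{(l)}(\GG_B)$ (the entry indexed by $(i,\dots,i)$, resp.\ by copies of $i$ and $j$) carrying a nonzero coefficient, so that the budget is $k-1$ constrained coordinates against the $2p$ free entries of $A$, forcing $k=2(p+1)$. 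Where you differ is in converting this count into a null-set statement: the paper \emph{solves} each vanishing mixed cumulant for the distinguished coordinate and places $\phi^{\leq k}(\mathcal{S}_i(\GG_B))$ inside the image of an explicit rational map from a space of dimension $N_k-(k-1)+2p$, needing only that rational images cannot gain dimension; you instead keep the constraints implicit, form the incidence set $Z_k$, and combine a uniform rank bound $r(A)\geq 2p+1$ with the semialgebraic fiber-dimension inequality and Tarski--Seidenberg. Your version buys a more uniform argument (a single bound $\dim Z_k\leq 2p+N_k-\inf_A r(A)$, with the two cases confined to the rank estimate), a stronger-than-needed rank $\binom{k}{2}$ in the edge case, and a clean explanation of why $k=2(p+1)$ is tight via the sub-case where only $a_{1i_0},a_{2i_0}$ are nonzero; the price is importing the fiber-dimension theorem for semialgebraic sets (or carrying out the rank stratification you sketch), which the paper's construction avoids. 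Your rank computations check out --- in Case B the forced vanishing $a_{2i}=a_{1j}=0$ does make the restriction of the type-$(m,l-m)$ forms to the coordinates $c^{(l)}_{i^a,j^{l-a}}$ diagonal with entries $a_{1i}^{m}a_{2j}^{l-m}$ --- and the inclusion $\phi^{\leq k}(\mathcal{S}(\GG_B))\subseteq\pi_2(Z_k)$ is immediate since independence implies vanishing of all mixed cumulants, so I see no gap beyond the standard facts you explicitly flag as remaining to be cited.
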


We remark that, for simplicity, we stated \cref{thm:gen:cond} for distributions with finite moments of any order. However, we only needed the first $2(p+1)$  moments to be finite.

\begin{example}
\label{ex:faith:viol:2}
    One simple type of exceptional distribution for which \cref{ass:error} fails to hold is provided by distributions that are obtained as linear transformations of independent non-Gaussian variables. For example, let $U_1, U_2$ be two independent, standard univariate normal distributions, let $V_i = \sqrt[3]{U_i}$ for $i = 1,2$, and  let $X = B\cdot (V_1, V_2)$ for any invertible 2 by 2 matrix $B$.  Then $X\in\MM_{\infty}(\GG_B)$ for $\GG_B=\{\{1,2\},  \{1\xleftrightarrow{} 2\}\}$, but by construction  $V=B^{-1}\cdot X$, and the fact that $V_1\indep V_2$ implies $X\in \mathcal{S}(\GG_B)$.  As noted in \cite{schkoda:2025}, linear transformations of independent components form a null set already when considering cumulants of order up to $k\le 3$.
\end{example}

\begin{remark}
\cref{thm:gen:cond} is of independent interest given the recent scholarly attention to generalizations of ICA that can deal with dependent error terms; see, e.g., \cite{mesters:2022, garrotelopez:2024, wang:2024}. Indeed, if we consider $\GG_B$ to be the empty graph, then \cref{thm:gen:cond} reduces to a generic version of the classical Darmois-Skitovich theorem that underlies ICA theory \citep[Thm.~9.5]{comon:jutten:handbook}.  From this perspective, \cref{thm:gen:cond} provides a generic generalization of the Darmois-Skitovich theorem to the case where the independence structure of the sources is more complex. A consequence of Examples \ref{ex:faith:viol}-\ref{ex:faith:viol:2} is that a global generalization of the Darmois-Skitovich theorem, i.e., one that holds for every non-Gaussian distribution, cannot be achieved.
\end{remark}

\section{Cyclic Graphs}
\label{sec:cycles}
Up to this point, we have exclusively studied acyclic models.
This assumption has allowed us to obtain a complete characterization of the identifiable parameters. 
In this section, we relax this assumption and show that the graphical criterion proposed in \cref{sec:id:result} remains a necessary condition but is no longer sufficient. 
Moreover, we will provide a complete characterization of the identifiable parameters for a special subclass of cyclic graphs.

The first issue that one encounters when dealing with cyclic models is that the matrix $(I-\Lambda)$ might not be invertible. This implies that the assignment $X = \Lambda^T\cdot X + \varepsilon$ does not induce a unique solution for $X$. Hence, we need to restrict our attention to a subset of $\R^{\GG_D}$, namely, the set $\R^{\GG_D}_{\mathrm{reg}}:=\{\Lambda\in\R^{\GG_D}\::\:\det(I-\Lambda)\neq0\}$. In this section, we focus on the identifiability of the matrix $\Lambda$.
\begin{definition}
\label{prob:def:cycles}
    Define the parametrization map
    \begin{equation*}
    \label{eq:phi:map:cycle}
        \begin{aligned}
            \Phi_{{\GG}_{\mathrm{reg}}}: \R^{\GG_D}_{\mathrm{reg}}\times\mathcal{M}(\GG_B)&\xrightarrow{}\MM(\GG)\\
             (\Lambda,\varepsilon) &\mapsto (I-\Lambda)^{-T}\varepsilon,
        \end{aligned}
    \end{equation*}
    and for every $X\in\MM(\GG)$, let the fiber of $X$ with respect to $\Phi_\GG$ be 
    \begin{equation}
    \label{eq:fibers:cycle}
        \Phi_{{\GG}_{\mathrm{reg}}}^{-1}(X):=\{(\Lambda, \varepsilon) \in \R^{\GG_D}_{\mathrm{reg}}\times\mathcal{M}(\GG_B) \::\: \Phi_{{\GG}_{\mathrm{reg}}}(\Lambda, \varepsilon) \stackrel{d}{=} X\}.
    \end{equation}
    For any \emph{generic} choice of $(\Lambda, \varepsilon)$, let  $X=\Phi_\GG(\Lambda, \varepsilon)$.
    We say that the graph $\GG$ is generically identifiable if $\mathrm{Proj}_{\R^\GG_{\mathrm{reg}}}(\Phi_{{\GG}_{\mathrm{reg}}}^{-1}(X)) = \{\Lambda\}$.
\end{definition}

\begin{lemma}
\label{lem:main:cycles}
Let $X=\Phi_{{\GG}_{\mathrm{reg}}}(\Lambda,\varepsilon)$ for a generic choice of parameters $(\Lambda,\varepsilon)\in\R^{\GG_D}_{\mathrm{reg}}\times\mathcal{M}(\GG_B)$. The matrix $\Tilde{\Lambda}\in\R^{\GG_D}_{\mathrm{reg}}$ belongs to $\mathrm{Proj}_{\R^\GG_{\mathrm{reg}}}(\Phi_{{\GG}_{\mathrm{reg}}}^{-1}(X))$ if it is a solution to the following linear system of equations:
\begin{equation}
\label{eq:lambda:system:cycle}
    \underbrace{[(B_\Lambda)_{\pa(v),R_v}]^T}_{(B_\Lambda)^v}\cdot \Tilde{\lambda}_{\pa(v),v} =[(B_\Lambda)_{v,R_v}]^T,\quad \forall v\in V.
\end{equation}
\end{lemma}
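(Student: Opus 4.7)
The plan is to mirror the reverse implication of the proof of \cref{lem:main}, taking care to identify where acyclicity was used and replacing those steps with arguments that remain valid when $\GG_D$ contains cycles.

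Starting from a matrix $\tilde{\Lambda}\in\R^{\GG_D}_{\mathrm{reg}}$ that satisfies \cref{eq:lambda:system:cycle}, I would define $A := (I-\tilde{\Lambda})^T B_\Lambda$ and $\tilde{\varepsilon} := A\varepsilon$. The equality $\Phi_{\GG_{\mathrm{reg}}}(\tilde{\Lambda}, \tilde{\varepsilon}) \stackrel{d}{=} X$ is then immediate from $(I-\tilde{\Lambda})^{-T}\tilde{\varepsilon} = B_\Lambda\varepsilon = X$, and the only thing to check is that $\tilde{\varepsilon}$ lies in $\MM(\GG_B)$, i.e., satisfies the connected set Markov property.

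Before that, I would re-establish \cref{lem:A:matrix} in the cyclic setting. The only step there that used acyclicity is the claim that $b_{wu}=0$ for $w\notin\de(u)$, which was previously justified via the path-matrix interpretation of $B_\Lambda$. In the cyclic case, I would argue instead algebraically: ordering the vertices so that $\de(u)$ precedes $V\setminus\de(u)$, the block $\Lambda_{\de(u),V\setminus\de(u)}$ vanishes because every child of a descendant of $u$ is again a descendant of $u$; consequently $I-\Lambda$ and $(I-\Lambda)^{-1}$ are block lower triangular, yielding $b_{wu} = ((I-\Lambda)^{-1})_{uw} = 0$ for $w\notin\de(u)$. Combined with \cref{eq:lambda:system:cycle}, this gives $a_{vu}=0$ whenever $u\in R_v$ or $u\notin\an(v)$, so that $\tilde{\varepsilon}_C = A_{C,D(C)}\cdot\varepsilon_{D(C)}$ for $D(C) = C\cup\bigcup_{v\in C}(\an(v)\setminus R_v)\subseteq\Sib(C)$.

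From this point the argument runs exactly as in the acyclic case: one verifies that (i) $D(C)$ is connected in $\GG_B$ whenever $C$ is, and (ii) $D(V\setminus\Sib(C))\subseteq V\setminus\Sib(D(C))$, both of which depend only on the bidirected graph $\GG_B$ and the definition of $R_v$ and are therefore unaffected by directed cycles. The connected set Markov property of $\varepsilon$ then transfers to $\tilde{\varepsilon}$ via $\varepsilon_{D(C)}\indep\varepsilon_{D(V\setminus\Sib(C))}$. I expect the main obstacle to be a careful audit of the acyclic proof to confirm that no step implicitly invoked a causal order; the lemma deliberately states only the ``if'' direction because the converse, which in the acyclic case relied on \cref{ass:error} together with the precise zero pattern of $A$, is known to fail in general cyclic ADMGs---consistent with the paper's broader observation that the criterion from \cref{sec:id:result} is only necessary, not sufficient, outside the acyclic setting.
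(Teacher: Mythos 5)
Your proposal is correct and follows essentially the same route as the paper, which simply observes that the reverse implication of the proof of Lemma 3.3 never used acyclicity and therefore carries over verbatim. Your extra step of re-justifying $b_{wu}=0$ for $w\notin\de(u)$ via block-triangularity is a sound (and slightly more careful) substitute for the path-matrix argument, which in any case already holds for arbitrary directed graphs by the cited result of Sullivant et al.
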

\begin{proof}
    It suffices to notice that for the reverse implication of the proof of \cref{lem:main}, we never used the acyclicity of the graph. Hence, the same proof applies.
\end{proof}

\begin{theorem}
\label{thm:main:glob:cycle}    
    If a mixed graph $\GG$ is identifiable, then for every $v\in V$, there exists a subset $I_v$ of $R_v$ with the size $|\text{pa}(v)|$, such that there is a system of non-intersecting directed paths from $I_v$ to $\text{pa}(v)$.
\end{theorem}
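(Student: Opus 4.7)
The plan is to prove the contrapositive. Assume there is a vertex $v \in V$ for which no size-$|\pa(v)|$ subset of $R_v$ admits a non-intersecting system of directed paths into $\pa(v)$; the goal is then to exhibit infinitely many distinct matrices in $\mathrm{Proj}_{\R^{\GG}_{\mathrm{reg}}}(\Phi^{-1}_{\GG_{\mathrm{reg}}}(X))$. Since \cref{lem:main:cycles} supplies the ``reverse'' implication (every regular solution of the linear system lies in the fiber), it suffices to produce infinitely many $\tilde{\Lambda} \in \R^{\GG_D}_{\mathrm{reg}}$ satisfying \eqref{eq:lambda:system:cycle}.

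By \cref{lem:gvl} applied to each $|\pa(v)|\times|\pa(v)|$ minor of the coefficient matrix $(B_\Lambda)^v = [(B_\Lambda)_{\pa(v), R_v}]^T$, the graphical failure at $v$ forces $\rank((B_\Lambda)^v) < |\pa(v)|$ for a generic choice of $\Lambda$. The $v$-th block of \eqref{eq:lambda:system:cycle} is consistent (it has $\lambda_{\pa(v),v}$ as a solution), so its solution space in $\R^{|\pa(v)|}$ has dimension at least $1$. Because \eqref{eq:lambda:system:cycle} is block-diagonal with respect to the columns of $\tilde{\Lambda}$, its full solution set $\mathcal{S} \subseteq \R^{\GG_D}$ is therefore an affine subspace through $\Lambda$ of positive dimension.

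It remains to restrict to the regular locus. The set $\R^{\GG_D}_{\mathrm{reg}}$ is the complement of the polynomial hypersurface $\{\Lambda:\det(I-\Lambda)=0\}$ in $\R^{\GG_D}$, and hence Zariski open. Restricting this polynomial to $\mathcal{S}$, its zero locus on $\mathcal{S}$ is either all of $\mathcal{S}$ or a proper algebraic subset of $\mathcal{S}$; the former is excluded since $\Lambda \in \mathcal{S} \cap \R^{\GG_D}_{\mathrm{reg}}$. Consequently, $\mathcal{S} \cap \R^{\GG_D}_{\mathrm{reg}}$ is a nonempty Zariski open subset of the positive-dimensional real affine space $\mathcal{S}$, hence uncountable. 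By \cref{lem:main:cycles}, every element of this intersection belongs to $\mathrm{Proj}_{\R^{\GG}_{\mathrm{reg}}}(\Phi^{-1}_{\GG_{\mathrm{reg}}}(X))$, contradicting the identifiability of $\GG$.

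The step I expect to be most delicate is the invocation of the Gessel--Viennot--Lindstr\"om lemma in the cyclic regime: when $\GG_D$ contains directed cycles, the entries of $B_\Lambda$ are rational rather than polynomial functions of $\Lambda$, and minors a priori receive contributions from cycle covers in addition to non-intersecting path systems. Because $R_v \subseteq \an(v)$ and the relevant paths all terminate in $\pa(v)$, one can hope to localize the rank argument to the ancestor subgraph or appeal to a cyclic variant of the identity; verifying that the generic non-vanishing criterion ``non-empty $\tilde{\PP}(I,J)$ iff non-trivial minor'' survives this extension is the technical core of the argument.
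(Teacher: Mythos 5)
Your proposal is correct and follows essentially the same route as the paper: argue the contrapositive, use the Gessel--Viennot--Lindstr\"{o}m lemma (which the paper states for arbitrary directed graphs and $\Lambda\in\R^{\GG_D}_{\mathrm{reg}}$, so the cyclic-regime concern you flag is already absorbed into \cref{lem:gvl} as cited) to conclude that $(B_\Lambda)^v$ is rank deficient, and then invoke \cref{lem:main:cycles} to place any other solution of \eqref{eq:lambda:system:cycle} in the fiber. The only substantive difference is that you explicitly verify, via the Zariski-openness argument on the affine solution space, that a non-trivial solution can be chosen inside $\R^{\GG_D}_{\mathrm{reg}}$ --- a point the paper's one-line proof leaves implicit but which is needed for \cref{lem:main:cycles} to apply.
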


\begin{proof}
    If there is a $v\in V$ that satisfies the assumptions of the theorem, then the matrix $(B_\Lambda)^v$ is rank deficient; hence there is $\Tilde{\lambda}_{\pa(v),v}\neq\lambda_{\pa(v),v}$ that solves \cref{eq:lambda:system:cycle}. The matrix $\tilde{\Lambda}$ obtained from $\Lambda$ by substituting the column corresponding to $v$ with $\Tilde{\lambda}_{\pa(v),v}$, belongs to $\mathrm{Proj}_{{\R^\GG}_{\mathrm{reg}}}(\Phi_{\GG}^{-1}(X))$ according to \cref{lem:main:cycles}. Hence, the matrix $\Lambda$ is not identifiable.
\end{proof}
\begin{example}[A non-identifiable cyclic graph]
    For the graph in \cref{fig:non:id:cycle}, we have $\pa(v_2) = \{v_1, v_3\}$, and $I_{v_2} = \{v_1\}$. Considering $v = v_2$ in \cref{thm:main:glob:cycle}, we can see that the matrix $\Lambda$ is not identifiable for this cyclic graph.
    \begin{figure}[t]
        \centering
        \begin{tikzpicture}            
            \node[draw, circle, inner sep=4pt] (A) at (0,0) {$v_1$};
            \node[draw, circle, inner sep=4pt] (B) at (2,0) {$v_2$};
            \node[draw, circle, inner sep=4pt] (C) at (4,0) {$v_3$};
            
            \draw[->] (A) to[bend left = 0] (B);
            \draw[->] (B) to[bend left = 25] (C);
            \draw[->] (C) to[bend left = 25] (B);
            \draw[dashed][<->] (B) to[bend left=60] (C);
            
        \end{tikzpicture}
        \caption{A non-identifiable cyclic graph.}
        \label{fig:non:id:cycle}
    \end{figure}
\end{example}

\begin{example}[Non-sufficiency of the graphical criterion]
\label{ex:non:suff}
    Let $\GG_2$ be the 2-cycle in \cref{fig:two:cycle}. The matrix $A$ of \cref{eq:A:matrix} will have the following form:
    \begin{equation*}
        A = \setlength{\arraycolsep}{6pt} % Adjust horizontal spacing
        \begin{bmatrix}
    b_{v_1v_1}-\tilde{\lambda}_{v_2v_1}b_{v_2v_1} & b_{v_1v_2}-\tilde{\lambda}_{v_2v_1}b_{v_2v_2} \\[4pt] % Adjust vertical spacing
    b_{v_2v_1}-\tilde{\lambda}_{v_1v_2}b_{v_1v_1} & b_{v_2v_2}-\tilde{\lambda}_{v_1v_2}b_{v_1v_2}
    \end{bmatrix}
        % \begin{bmatrix}
        %     b_{v_1v_1}-\tilde{\lambda}_{v_2v_1}b_{v_2v_1} & b_{v_1v_2}-\tilde{\lambda}_{v_2v_1}b_{v_2v_2}\\
        %     b_{v_2v_1}-\tilde{\lambda}_{v_1v_2}b_{v_1v_1} & b_{v_2v_2}-\tilde{\lambda}_{v_1v_2}b_{v_1v_2}
        % \end{bmatrix}
        .
    \end{equation*}
    From \cref{ass:error} and the fact that there are no bidirected edges in the graph, we have
    \begin{equation*}
        a_{v_1v_1}\cdot a_{v_2v_1} = a_{v_1v_2}\cdot a_{v_2v_2} = 0.
    \end{equation*}
    Since the graph has cycles, we cannot rule out the possibility that the diagonal entries of $A$ are equal to zero. Hence, a valid solution is
    \begin{equation*}
        \tilde{\lambda}_{v_1v_2} = \frac{b_{v_2v_2}}{b_{v_1v_2}} = \frac{1}{\lambda_{v_2v_1}}, \quad
        \tilde{\lambda}_{v_2v_1} = \frac{b_{v_1v_1}}{b_{v_2v_1}} = \frac{1}{\lambda_{v_1v_2}}.
    \end{equation*}
    This implies that the observed vector $X = (X_1, X_2)$ can be written in at least two ways,
    \begin{equation*}
        \setlength{\arraycolsep}{6pt}
        \begin{bmatrix}
            1 & -\lambda_{v_2v_1}\\
            -\lambda_{v_1v2} & 1
        \end{bmatrix}\begin{bmatrix}
            \varepsilon_1\\
            \varepsilon_2
        \end{bmatrix}  ,\quad
        \begin{bmatrix}
            1 & -1/\lambda_{v_1v_2}\\
            -1/\lambda_{v_2v_1} & 1
        \end{bmatrix}\begin{bmatrix}
            (-1/\lambda_{v_1v_2})\varepsilon_2\\
            (-1/\lambda_{v_2v_1})\varepsilon_1
        \end{bmatrix},
    \end{equation*}
    that are both compatible with the graph $\GG_2$. In other words, the matrix $\Lambda$ is not identifiable.
    \begin{figure}[t]
    \centering
    \begin{tikzpicture}        
    
        \node[minimum size = 1cm] (O) at (-1,0) {$\GG_2:$};
        \node[draw, circle, minimum size = 1cm] (A) at (0,0) {$v_1$};
        \node[draw, circle, minimum size = 1cm] (B) at (2,0) {$v_2$};           
            
            \draw[->] (A) to[bend left = 30] (B);
            \draw[->] (B) to[bend left = 30] (A); 
    \begin{scope}[xshift = 5cm, xscale = 1.3]
        \node[minimum size = 1cm] (O) at (-0.75,0) {$\GG_k:$};
        \node[draw, circle, minimum size = 1cm] (A) at (0,0) {$v_1$};
        \node[draw, circle, minimum size = 1cm] (B) at (1,1) {$v_2$};
        \node[] (C) at (2,1) {$\cdots$};
        \node[draw, circle, minimum size = 1cm] (D) at (3,1) {$v_{k-1}$};
        \node[draw, circle, minimum size = 1cm] (E) at (4,0) {$v_k$};
            
            \draw[->] (A) to[bend left = 25] (B);
            \draw[->] (B) to[bend left = 0] (C);
            \draw[->] (C) to[bend left = 0] (D);
            \draw[->] (D) to[bend left = 25] (E);
            \draw[->] (E) to[bend left = 15] (A); 
    \end{scope}
    \end{tikzpicture}
    \caption{On the left, a 2-cycle. On the right, a $k$-cycle}
    \label{fig:two:cycle}
\end{figure}
\end{example}

\begin{lemma}
\label{lem:k:cycle}
    Let $\GG_k$ be the $k$-cycle depicted in \cref{fig:two:cycle}. Then $\GG_k$ is generically identifiable if and only if $k\geq3$.
\end{lemma}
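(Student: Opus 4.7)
My plan is to handle the two directions of the biconditional separately. For the \emph{only if} direction, \cref{ex:non:suff} already exhibits an explicit second preimage $(\tilde{\Lambda},\tilde{\varepsilon})\in\Phi_{\GG_{\mathrm{reg}}}^{-1}(X)$ with $\tilde{\Lambda}\neq\Lambda$ for $\GG_2$, so no additional work is needed and it suffices to prove generic identifiability when $k\geq 3$.

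The strategy for $k\geq 3$ is to apply \cref{ass:error} directly to $A=(I-\tilde{\Lambda})^T B_\Lambda$ from \cref{eq:A:matrix}. Because $\GG_k$ has no bidirected edges, the Markov property of $\tilde{\varepsilon}=A\varepsilon\in\mathcal{M}(\GG_B)$ reduces to mutual independence of all components, and pairwise application of \cref{ass:error} forces, for every column index $u$, at most one of $a_{v_1 u},\ldots,a_{v_k u}$ to be nonzero. Since $(I-\tilde{\Lambda})^T$ and $B_\Lambda$ are both invertible on $\R^{\GG_D}_{\mathrm{reg}}$, $A$ is invertible, upgrading ``at most one'' to ``exactly one''. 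Thus $A$ must be a generalized permutation (monomial) matrix.

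The key computation is to write $A$ explicitly. Setting $\mu_i=\lambda_{v_i v_{i+1}}$ (indices mod $k$) and $\rho=\mu_1\cdots\mu_k$, regularity gives $\rho\neq 1$, and a geometric-series argument yields $b_{v_i v_j}=w_{j,i}/(1-\rho)$, where $w_{j,i}$ is the product of edge weights along the simple forward path from $v_j$ to $v_i$. Using the identity $w_{j,i}=w_{j,i-1}\mu_{i-1}$, valid for $j\neq i$, every entry of row $i$ of $A$ becomes an affine function of the single unknown $\tilde{\mu}_{i-1}$. A short calculation shows that the unique root of $a_{v_i v_j}$ is $\tilde{\mu}_{i-1}=\mu_{i-1}$ whenever $j\neq i$, while the root of $a_{v_i v_i}$ is $\tilde{\mu}_{i-1}=\mu_{i-1}/\rho$.

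Finally, I combine the two ingredients. Row $i$ can have at most one nonzero entry, so $\tilde{\mu}_{i-1}$ must simultaneously annihilate at least $k-1$ of the $k$ entries in that row. Since generically $\mu_{i-1}\neq \mu_{i-1}/\rho$, the only choice that annihilates $k-1$ entries is $\tilde{\mu}_{i-1}=\mu_{i-1}$, which zeros every off-diagonal entry; the alternative $\tilde{\mu}_{i-1}=\mu_{i-1}/\rho$ annihilates only the diagonal and leaves the $k-1\geq 2$ off-diagonal entries nonzero, contradicting monomiality when $k\geq 3$. Applying this to every row forces $\tilde{\Lambda}=\Lambda$, proving identifiability. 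I expect the main obstacle to be the careful tracking of genericity conditions (ensuring $\rho\neq 1$ and all relevant path weights $w_{j,i-1}$ are nonzero outside a common Lebesgue-null set in $\R^{\GG_D}_{\mathrm{reg}}$), though these verifications are routine once the explicit form of $A$ is in hand.
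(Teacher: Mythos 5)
Your proposal is correct, and for the necessity direction it coincides with the paper (both simply invoke \cref{ex:non:suff} for $\GG_2$). For sufficiency, the overall strategy is the same as the paper's---apply \cref{ass:error} to the mixing matrix $A=(I-\tilde{\Lambda})^T B_\Lambda$ of \cref{eq:A:matrix} and use the absence of bidirected edges to force $A$ to have pairwise disjointly supported rows columnwise---but the execution is genuinely different. The paper writes out $A$ only for $k=3$ and runs a propagation argument: assume one off-diagonal entry is nonzero, deduce the diagonal entry in that column vanishes, solve for the corresponding $\tilde{\lambda}$, substitute back, and show via the Gessel--Viennot--Lindstr\"om lemma that the resulting $2\times 2$ minors are generically nonzero, cascading to a contradiction; the case of general $k$ is asserted to be ``similar.'' You instead exploit the closed form $b_{v_iv_j}=w_{j,i}/(1-\rho)$ available for a cycle, observe that row $i$ of $A$ is affine in the single unknown $\tilde{\mu}_{i-1}$ with only two distinct roots among its $k$ entries ($\mu_{i-1}$ with multiplicity $k-1$ off the diagonal, $\mu_{i-1}/\rho$ on the diagonal), upgrade the column constraint to a row constraint via invertibility of $A$, and finish by counting. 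This buys a uniform proof for all $k\ge 3$ rather than an argument-by-analogy, and it makes transparent exactly why $k=2$ fails (killing only the diagonal leaves a single off-diagonal entry, which is still a generalized permutation matrix). The only points you should make explicit in a write-up are (i) that the affine coefficients $w_{j,i-1}/(1-\rho)$ and $\rho/(\mu_{i-1}(1-\rho))$ are generically nonzero so that ``the unique root'' is well defined, and (ii) that $\tilde{\Lambda}\in\R^{\GG_D}_{\mathrm{reg}}$ (built into \cref{prob:def:cycles}) is what licenses the invertibility of $A$; both are routine, as you anticipate.
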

    
\begin{proof}
    From \cref{ex:non:suff}, we already know that $\GG_2$ is not identifiable. Hence, it is only left to show that $\GG_k$ is identifiable for $k\geq 3$. Herein, we present a proof for the case $k = 3$, as for larger cases, a similar argument would hold.

    The matrix $A$ of \cref{eq:A:matrix}, will have the following shape:
    \begin{equation}
    \label{eq:a:matr:3:cycle}
        A = 
        \setlength{\arraycolsep}{6pt} % Adjust horizontal spacing
        \begin{bmatrix}
            b_{v_1v_1}-\tilde{\lambda}_{v_3v_1}b_{v_3v_1} & b_{v_1v_2}-\tilde{\lambda}_{v_3v_1}b_{v_3v_2} & b_{v_1v_3}-\tilde{\lambda}_{v_3v_1}b_{v_3v_3}\\[4pt]
            b_{v_2v_1}-\tilde{\lambda}_{v_1v_2}b_{v_1v_1} & b_{v_2v_2}-\tilde{\lambda}_{v_1v_2}b_{v_1v_2} & b_{v_2v_3}-\tilde{\lambda}_{v_1v_2}b_{v_1v_3}\\[4pt]
            b_{v_3v_1}-\tilde{\lambda}_{v_2v_3}b_{v_2v_1} & b_{v_3v_2}-\tilde{\lambda}_{v_2v_3}b_{v_2v_2} & b_{v_3v_3}-\tilde{\lambda}_{v_2v_3}b_{v_2v_3}\\[4pt]
        \end{bmatrix}.
    \end{equation}
    From \cref{ass:error} and the fact that there are no bidirected edges in the graph, we have
    \begin{equation}
    \label{eq:a:system:cycle}
        \begin{cases}
            & a_{v_1v_1}\cdot a_{v_2v_1} = a_{v_1v_1}\cdot a_{v_3v_1} = 0 \\
            & a_{v_1v_2}\cdot a_{v_2v_2} = a_{v_3v_2}\cdot a_{v_2v_2} = 0 \\
            & a_{v_1v_3}\cdot a_{v_3v_3} = a_{v_2v_3}\cdot a_{v_3v_3} = 0 \\
            & a_{v_1v_3}\cdot a_{v_2v_3} = a_{v_1v_2}\cdot a_{v_3v_2} = a_{v_2v_1}\cdot a_{v_3v_1} = 0.
        \end{cases}
    \end{equation}    
    If all the non-diagonal entries of $A$ are set to zero, we find $\tilde{\Lambda} = \Lambda$. We now show that this is the only solution for the system. Assume $a_{v_2v_1}\neq 0$, this implies $a_{v_1v_1} = 0$. This leads to 
    $$
    \tilde{\lambda}_{v_3v_1} = \frac{b_{v_1v_1}}{b_{v_3v_1}},
    $$
    plugging this value in \cref{eq:a:matr:3:cycle}, we obtain 
    \begin{equation}
    \label{eq:a:matr:3:cycle:2}
        A = \setlength{\arraycolsep}{6pt}% Adjust horizontal spacing
            \begin{bmatrix}
            0 & b_{v_1v_2}-({b_{v_1v_1}}/{b_{v_3v_1}})b_{v_3v_2} & b_{v_1v_3}-({b_{v_1v_1}}/{b_{v_3v_1}})b_{v_3v_3}\\[4pt]
            b_{v_2v_1}-\tilde{\lambda}_{v_1v_2}b_{v_1v_1} & b_{v_2v_2}-\tilde{\lambda}_{v_1v_2}b_{v_1v_2} & b_{v_2v_3}-\tilde{\lambda}_{v_1v_2}b_{v_1v_3}\\[4pt]
            b_{v_3v_1}-\tilde{\lambda}_{v_2v_3}b_{v_2v_1} & b_{v_3v_2}-\tilde{\lambda}_{v_2v_3}b_{v_2v_2} & b_{v_3v_3}-\tilde{\lambda}_{v_2v_3}b_{v_2v_3}\\[4pt]
        \end{bmatrix}.
    \end{equation}
    Writing explicitly the first row of the matrix in \cref{eq:a:matr:3:cycle:2}, one can see that 
    \begin{equation*}
        a_{v_1v_2} = \frac{\det\big((B_\Lambda)_{\{1, 3\}, \{1, 2\}}\big)}{b_{v_3v_2}}, \quad a_{v_1v_3} = \frac{\det\big((B_\Lambda)_{\{1, 3\}, \{1, 3\}}\big)}{b_{v_3v_3}},
    \end{equation*} 
    and both these quantities are different from zero for a generic choice of parameters in $\R^{\GG_3}_{\mathrm{reg}}$, see \cref{lem:gvl}. Having $a_{v_1v_2}\neq 0$, \cref{eq:a:system:cycle} implies that $a_{v_2v_2}=0$ and following the same argument as above, we conclude that $a_{v_2v_3}\neq0$. 
    Finally, we have $a_{v_1v_3}\cdot a_{v_2v_3} \neq 0$ since both terms are non-zero, which contradicts \cref{eq:a:system:cycle}. This proves that for a generic choice of parameters, the only solution of \cref{eq:a:system:cycle} is given by $\tilde{\Lambda} = \Lambda $. In other words, the matrix $\Lambda$ is generically identifiable.
\end{proof}

\begin{remark}
    It is noteworthy that \citet[Lemma 9]{drton:2011} proves that $k$-cycles with $k\geq 3$ are not generically identifiable from the covariance matrix alone. 
\end{remark}

\begin{theorem}
\label{thm:cycle}
    Let $\GG = (V, \Ed, \Eb = \emptyset)$ be a directed graph, such that $V = C_1\Dot{\cup}\cdots \Dot{\cup}C_n$, with $C_i$ being a $k_i$-cycle, and $\pa(C_i)\subseteq\bigcup_{j = 0}^i C_j$. Then $\GG$ is generically identifiable if and only if for every cycle $C = \{v_1, v_2\}$ of size $2$, we have $$\pa(C)\setminus C = \pa(v_1)\setminus C = \pa(v_2)\setminus C.$$
\end{theorem}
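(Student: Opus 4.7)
The plan is to reduce identifiability of the full matrix $\Lambda$ to a cycle-by-cycle inductive argument, leveraging the topological decomposition $V = C_1\Dot{\cup}\cdots\Dot{\cup}C_n$ together with the assumption $\pa(C_i)\subseteq\bigcup_{j\le i}C_j$. Processing the cycles in this order, at step $i$ I would assume, by induction, that the parameters on edges with heads in $C_1\cup\cdots\cup C_{i-1}$ have already been identified, so that it suffices to identify the remaining columns of $\Lambda$ indexed by $C_i$. The sub-problem at step $i$ then involves only the single cycle $C_i$ together with its now-identified external parents $\pa(C_i)\setminus C_i$, which decouples the global argument into a sequence of local ones.

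For a cycle of size $k_i\ge 3$, the argument of Lemma~\ref{lem:k:cycle} extends essentially unchanged. Writing $A = (I-\tilde{\Lambda})^T B_\Lambda$ as in~\eqref{eq:A:matrix}, the constraints from Assumption~\ref{ass:error} propagate around the cycle and force the restriction of $A$ to $C_i$ to be the identity, yielding $\tilde{\lambda}_{\pa(v),v}=\lambda_{\pa(v),v}$ for every $v\in C_i$. The inductive hypothesis pins the columns of $A$ indexed by earlier cycles, so the presence of external parents does not disturb the cascade used in the proof of Lemma~\ref{lem:k:cycle}.

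The crux is the case $k_i=2$, say $C=\{v_1,v_2\}$. Following Example~\ref{ex:non:suff}, the cycle admits a distinguished candidate alternative, the \emph{reciprocal swap} $\tilde{\lambda}_{v_1 v_2}=1/\lambda_{v_2 v_1}$ and $\tilde{\lambda}_{v_2 v_1}=1/\lambda_{v_1 v_2}$. Extending this candidate to a full matrix in $\R^{\GG_D}_{\mathrm{reg}}$ forces, through the requirement $\tilde{\varepsilon}\in\MM(\GG_B)$, specific values $\tilde{\lambda}_{u v_j}$ for each external parent $u\in\pa(C)\setminus C$; a computation of the form carried out in~\eqref{eq:a:matr:3:cycle} expresses these values in terms of $\lambda_{u v_1}$, $\lambda_{u v_2}$ and the cycle coefficients. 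Whether the swap yields a genuine second element of the fiber then reduces to whether those forced values are compatible with the zero-pattern of $\R^{\GG_D}$, and this compatibility is captured exactly by the external-parent condition $\pa(C)\setminus C=\pa(v_1)\setminus C=\pa(v_2)\setminus C$ stated in the theorem.

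The main obstacle I foresee is showing that on a 2-cycle the reciprocal swap is in fact the \emph{only} non-trivial alternative in the fiber, so that ruling it out is already enough to conclude identifiability of the cycle's parameters. In the acyclic setting this would follow immediately from the relation $a_{uu}=1$ of Lemma~\ref{lem:A:matrix}, but in the cyclic setting the diagonal entries of $A$ can vanish generically and other solutions to the system analogous to~\eqref{eq:a:system:cycle} are a priori possible. Closing this gap will require mirroring the row-by-row case analysis from the proof of Lemma~\ref{lem:k:cycle}, now carried out with additional columns of $A$ indexed by external parents and subject to constraints both from Assumption~\ref{ass:error} and from the inductive identification of edges into earlier cycles.
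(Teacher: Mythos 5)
Your proposal follows essentially the same route as the paper's proof: the paper reduces global identifiability to column-wise identifiability along the topological order of the cycles (via the block-triangular structure of the matrix $A$), invokes \cref{lem:k:cycle} for the cycles of length at least three, and for a $2$-cycle $C=\{v_1,v_2\}$ asks whether the reciprocal swap of \cref{ex:non:suff} extends to a full element of the fiber, converting this into the solvability of a linear system in $\tilde{\lambda}_{\pa(v_1)\setminus\{v_2\},v_1}$ forcing $A_{C,\pa(C)\setminus C}=0$, whose solvability is a row-space/rank condition on submatrices of $B_\Lambda$ that \cref{lem:gvl} translates into the stated external-parent condition. The one gap you flag---that the reciprocal swap is the only nontrivial alternative on the $2$-cycle---closes quickly and is handled implicitly by the paper: the mixed cases $a_{v_1v_1}=a_{v_1v_2}=0$ or $a_{v_2v_1}=a_{v_2v_2}=0$ force $\det((B_\Lambda)_{C,C})=0$, which fails generically by \cref{lem:gvl}, so only the identity solution and the swap survive.
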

\begin{proof}
    \citet[\S2.3]{tramontano:2024:supp}.
\end{proof}
\section{Computational Experiments}
\label{sec:comp}
\subsection{Certifying Identifiability}
\label{subsec:cert}
We implemented the criterion from \cref{thm:cert:glob} using the algorithm of \cite{dinitz:1970} to solve the maximum flow problem. It operates with a complexity of $\mathcal{O}(|V|^4)$. Consequently, the algorithm we implemented has a complexity of $\mathcal{O}(|V|^5)$. 
We then determine the proportion of identifiable randomly sampled ADMGs of size $p = 25, 50$ and varying edge density. For each setup, we randomly sampled $5000$ graphs. More details on how the graphs were generated are provided in \citet[\S4.1.1]{tramontano:2024:supp}.

The proportions are displayed in Fig.~\ref{fig:identifiable:admg}. We observe that for the given sampling scheme, most graphs with an edge density below $0.4$ yield identifiable models, while for denser graphs, it becomes harder to find identifiable models. This follows the intuition that a very dense graph will have more confounders, so finding identifiable models is harder. The proportion of identifiable models remains similar across the two considered dimensions.

\begin{figure}[ht]
    \centering
    \includegraphics[scale = 0.45]{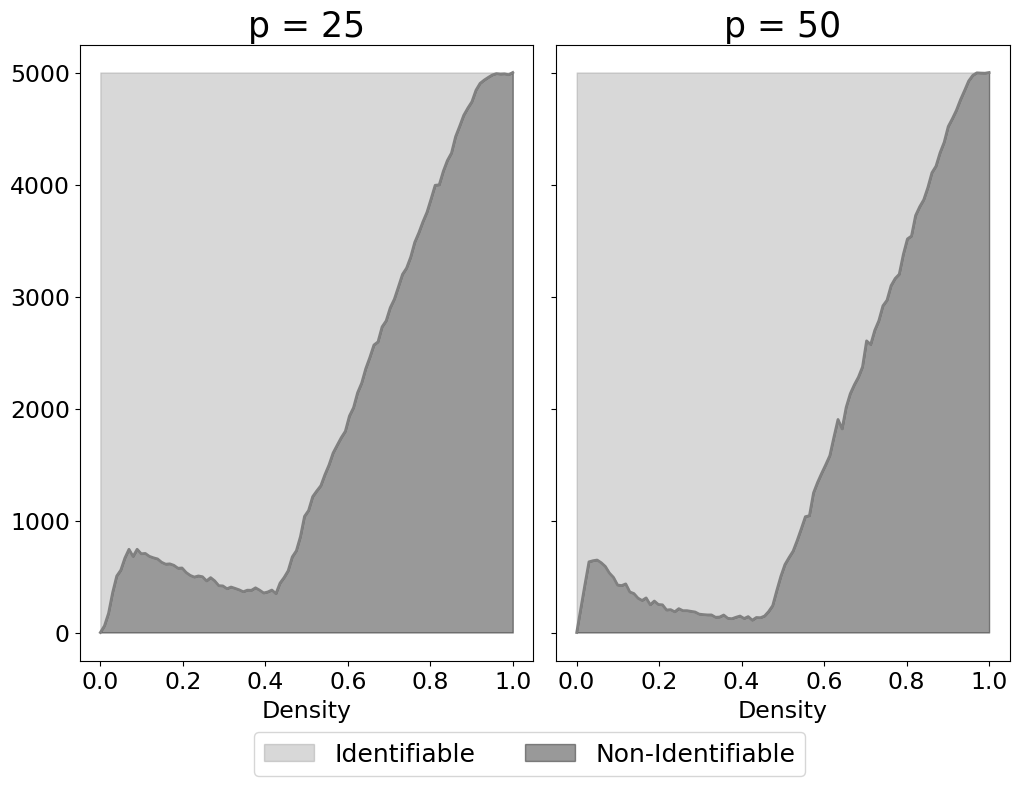}
    \caption{Proportion ADMGs for which every entry of the matrix $\Lambda$ is generically identifiable.}
    \label{fig:identifiable:admg}
\end{figure}

\subsection{Causal Effect Estimation}
\label{subsec:est}
Herein, we present an optimization problem that can be used to infer the identifiable causal effects.
\begin{lemma}
\label{lem:estimation}
Let $X=\Phi_\GG(\Lambda,\varepsilon)\in\MM(\GG)$, for a generic choice of $(\Lambda,\varepsilon)$, then $\Tilde{\Lambda}\in\R^{\GG_D}$ is a solution of \cref{eq:lambda:system} if and only if it is a solution of the optimization problem
\begin{equation}
    \label{eq:opt}
    \min_{\Tilde{\Lambda}\in\R^{\GG_D}}\left\{\sum_{u\xleftrightarrow{}v\notin\GG}\mu\left([(I-\tilde{\Lambda})\cdot X]_{u},[(I-\tilde{\Lambda})\cdot X]_{v}\right)\right\},
\end{equation}
where $\mu(\cdot,\cdot)$ is any consistent measure of dependence, i.e., any nonnegative real-valued function that takes as input two random variables and returns zero if and only if the random variables are independent.
\end{lemma}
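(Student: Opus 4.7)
The plan is to reduce the claim to \cref{lem:main} by way of the observation that the objective in \eqref{eq:opt} is nonnegative and vanishes precisely when the candidate residual vector $(I-\tilde{\Lambda})\cdot X$ satisfies the pairwise independences prescribed by $\GG_B$, namely $[(I-\tilde{\Lambda})\cdot X]_u \indep [(I-\tilde{\Lambda})\cdot X]_v$ whenever $u\xleftrightarrow{}v\notin\Eb$. Since the true parameter $\Lambda$ already achieves value $0$—the components of $\varepsilon \in \MM(\GG_B)$ are independent across non-siblings by the connected set Markov property—the minimum value of \eqref{eq:opt} equals $0$, and minimizers are characterized exactly by the vanishing of every summand.

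For the direction "system $\Rightarrow$ optimization", suppose $\tilde{\Lambda}$ solves \eqref{eq:lambda:system}. By the reverse implication of \cref{lem:main} there exists $\tilde{\varepsilon}\in\MM(\GG_B)$ with $\Phi_\GG(\tilde{\Lambda},\tilde{\varepsilon})\stackrel{d}{=}X$; equivalently, $(I-\tilde{\Lambda})\cdot X$ agrees in distribution with $\tilde{\varepsilon}$. Since $\tilde{\varepsilon}$ obeys the connected set Markov property, it satisfies in particular the required pairwise independences, so by the consistency of $\mu$ every summand of \eqref{eq:opt} vanishes and $\tilde{\Lambda}$ attains the minimum.

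For the reverse direction, suppose $\tilde{\Lambda}$ is a minimizer. By the opening observation, this forces $[(I-\tilde{\Lambda})\cdot X]_u \indep [(I-\tilde{\Lambda})\cdot X]_v$ for every $u\xleftrightarrow{}v\notin\Eb$. Setting $\tilde{\varepsilon}:=(I-\tilde{\Lambda})\cdot X$ we may write $\tilde{\varepsilon}=A\varepsilon$ for the matrix $A$ of \eqref{eq:A:matrix}. Applying \cref{ass:error} to the relevant pairs of rows of $A$ reproduces exactly the scalar vanishing conditions \eqref{eq:A:sys:1}--\eqref{eq:A:sys:2} established in the proof of \cref{lem:main}. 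Invoking \cref{lem:A:matrix} and tracing through the same algebraic manipulations then yields the linear system \eqref{eq:lambda:system} for $\tilde{\lambda}_{\pa(v),v}$ at every $v\in V$.

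The only subtle point, and the likely obstacle if one is not careful, is the following: the direct implication in the proof of \cref{lem:main} nominally starts from $\tilde{\varepsilon}\in\MM(\GG_B)$ (i.e., the full connected set Markov property), whereas the optimization criterion only hands us the pairwise independences $\tilde{\varepsilon}_v \indep \tilde{\varepsilon}_{u_0}$ for $u_0\notin\sib(v)$. One must verify that the proof of \cref{lem:main} uses precisely these pairwise statements and nothing stronger; an inspection of that proof confirms this, since \cref{ass:error} is applied only to pairs of linear combinations of $\varepsilon$. Once this is observed, both directions are immediate consequences of \cref{lem:main}.
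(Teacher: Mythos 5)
Your proof is correct and follows essentially the same route as the paper's: both reduce the claim to \cref{lem:main} after observing that the objective is nonnegative and vanishes at the true $\Lambda$, so that minimizers are exactly the matrices with zero objective value, i.e., those for which the residuals satisfy the pairwise independences indexed by the missing bidirected edges. Your explicit verification that the forward direction of \cref{lem:main} consumes only these pairwise independences (and not the full connected set Markov property) is a point the paper's own terse proof leaves implicit, and you handle it correctly.
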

\begin{proof}
See the supplemental material in \citet[\S2.4]{tramontano:2024:supp}
\end{proof}

For practical estimation, we may form an empirical version of the problem in \cref{eq:opt} by replacing the dependence measure $\mu$ with suitable consistent estimates.
One natural choice for $\mu$ is mutual information \citep[\S8.6]{cover:2006}. However, the most popular estimator for the mutual information is based on a k-nearest neighbor clustering of the sample, which would result in a non-smooth optimization problem \citep{kraskov:2003}. Several alternatives to mutual information have been proposed in the literature \citep{szekely:2007, Geenens:2022,shi:2022}. In particular, the Hilbert-Schmidt information criterion (HSIC) \citep{gretton:2007} has been extensively applied in causal inference \citep{mooij:2009, saengkyongam:2022}. For our empirical study, we used the HSIC, but other measures of independence can also be implemented.

When the underlying kernel is characteristic, the HSIC provides a measure of dependence that vanishes if and only if the variables for which it is computed are independent \citep[\S2.2 and Thm.~1]{fukumizu:2007}. Moreover, a consistent estimator for the HSIC \citep{gretton:2007} is given by  
$$
\widehat{\HSIC}_n(X, Y) := \trace(K_XHK_YH)/n^2,
$$
where $H_{i,j} = \delta_{i,j} - 1/n$ and $K_X$, and $K_Y$ denote the respective Gram matrices.

For a fixed graph $\GG$, and a given sample matrix $X\in \mathbb{R}^{n\times p}$, we estimate $\Lambda$ as a solution to the following optimization problem
\begin{equation}
\label{eq:sample:obj}
    \min_{\Tilde{\Lambda}\in\R^{\GG_D}}\left\{\sum_{u\xleftrightarrow{}v\notin\GG}\widehat{\HSIC}_n\left([(I-\tilde{\Lambda})\cdot X]_{u},[(I-\tilde{\Lambda})\cdot X]_{v}\right)\right\}.
\end{equation}
We used the L-BFGS method \citep{liu:1989} for solving the above optimization problem.
We considered two types of kernels in our experiments: radial basis function (RBF) kernels, the results of which are presented in \citet[Fig.~1]{tramontano:2024:supp}, and polynomial kernels, the results of which are depicted in \cref{fig:lap}. \citet[\S4.1.2]{tramontano:2024:supp} contains details on the data generation and additional experiments. It is noteworthy that in our experiments, the polynomial kernels of degree 2 \citep[\S2.3]{scholkopf:2018} provide a better estimate when initialized at the regression coefficient, compared to the RBF kernels for which the results rely more on the initialization.

In \cref{fig:lap}, we report the performance of our method on the IV graph (\cref{fig:IV}), the ADMG shown in \cref{fig:dc}, and the 3-cycle (\cref{fig:two:cycle}).
We used the normalized Frobenious loss between the estimated matrix $\hat{\Lambda}$ and the true matrix $\Lambda$, i.e., $||\hat{\Lambda} - \Lambda||_F/||\Lambda||_F$, as our loss function and reported the mean loss over fifty randomly sampled $\Lambda$.  
We compared our method against the Empirical Likelihood (EL) estimator proposed in \cite{wang:2017:el}. 
Note that for the IV graph, the parameters are identifiable from the covariance matrix. Therefore, the EL estimator, which is a covariance-based method, outperforms our method as the covariance matrix estimator is more sample-efficient than the HSIC estimator.
In contrast, for the ADMG in \cref{fig:dc} and the 3-cycle, the performance of the EL estimator does not improve with the sample size. This is due to the fact that the parameters of these two mixed graphs cannot be determined solely from the covariance matrix; see \citet[Prop.~2]{foygel:2012} and \citet[Lemma 9]{drton:2011}. When initialized at the regression coefficient, the performance of our proposed estimator improves with the sample size.
This indicates that the potential numerical issues arising from the non-convexity of the objective function and the estimation errors become less relevant as the sample size grows.
\begin{figure}
    \centering
    \includegraphics[scale=0.28]{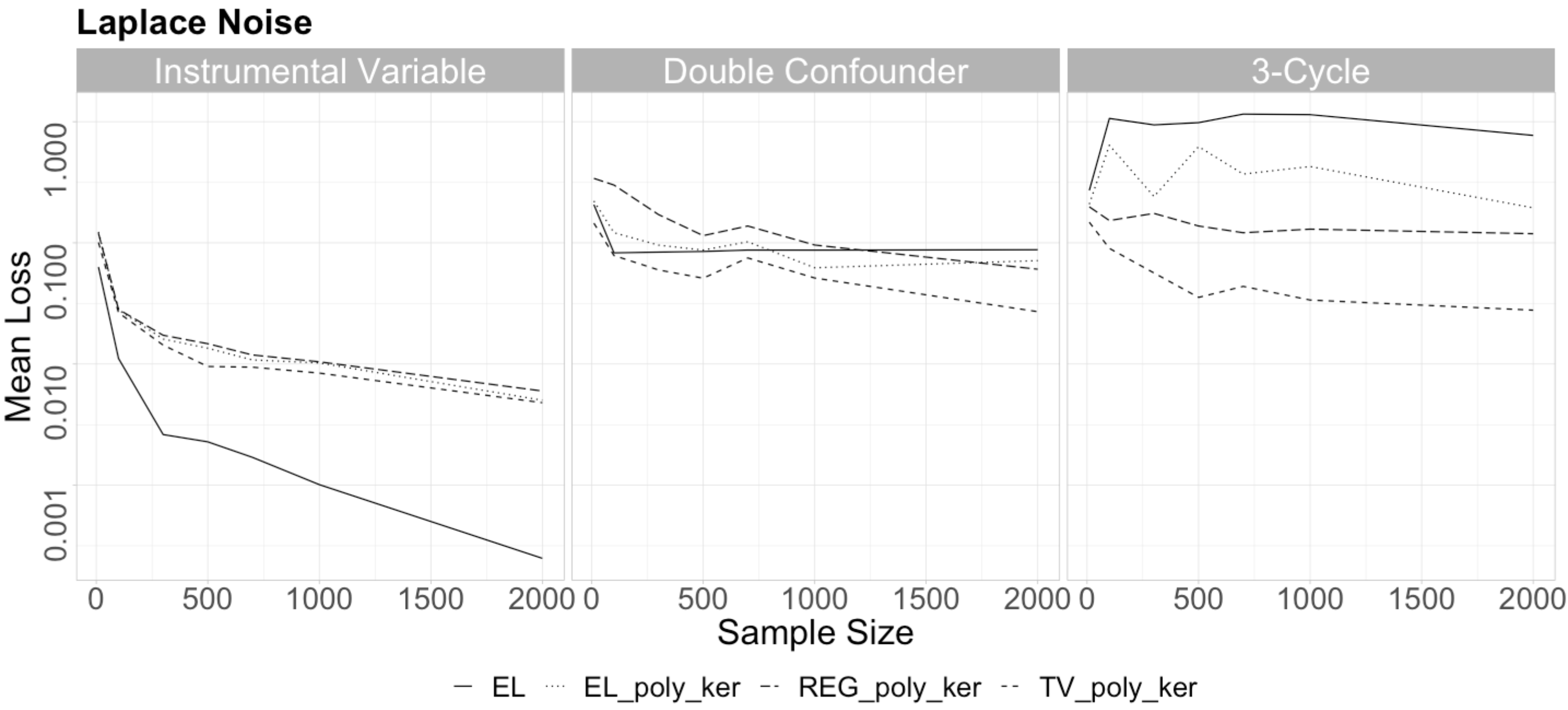}
    \caption{The performance of our proposed estimator with a polynomial kernel of degree 2 for different initializations on the IV graph, the ADMG of \cref{fig:dc}, and the 3-cycle. ``EL,''``REG,'' and ``TV'' represent the Empirical Likelihood, the regression coefficient, and the true value, respectively. Note that the $y$-axis is on a log scale.}
    \label{fig:lap}
\end{figure}

With the next theorem, we provide asymptotic consistency guarantees for the estimation method we used in the experiments.
\begin{theorem}
    \label{thm:consistency}
    Let $X = \Phi_\GG(\Lambda,\varepsilon) \in \MM(\GG)$. Consider an estimator $\Tilde{\Lambda}_n$ with
    \begin{equation}
    \label{eq:fs:optim}
        \Tilde{\Lambda}_n \in \argmin_{\tilde{\Lambda} \in \Theta} \left\{ \sum_{u \xleftrightarrow{} v \notin \GG} \widehat{\mathrm{HSIC}}_n \big([(I - \tilde{\Lambda}) \cdot X]_u, [(I - \tilde{\Lambda}) \cdot X]_v; \mathrm{k}_u, \mathrm{k}_u \big) \right\},
    \end{equation}
    where:
    \begin{itemize}
        \item $\Theta$ is a compact subset of $\R^{\GG_ D}$ containing $\Lambda$,
        \item $\mathrm{k}_u$ and $\mathrm{k}_v$ are fixed, bounded, continuously differentiable, characteristic kernels with bounded derivatives,
        \item $\varepsilon \in \mathcal{M}(\GG_B)$ has compact support.
    \end{itemize}
    Then, for a generic choice of $(\Lambda, \varepsilon) \in \R^{\GG_D} \times \mathcal{M}(\GG_B)$, we have 
    $$
    \lim_{n\to\infty}\mathbb{P}\left(\min_{\Lambda_0\in\Theta\cap\mathrm{Proj}}||\tilde{\Lambda}_n - \Lambda_0|| > \epsilon\right) = 0, \qquad \forall\epsilon > 0,
    $$    
    where $\mathrm{Proj} = \mathrm{Proj}_{\R^\GG}(\Phi_\GG^{-1}(X))$. In particular, if $\lambda_{uv}$ is generically identifiable according to \cref{thm:main:loc}, then $\tilde{\lambda}_{(n, uv)}$ converges to $\lambda_{uv}$ in probability.
\end{theorem}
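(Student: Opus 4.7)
The plan is to invoke the standard Wald-type consistency argument for M-estimators, specialized to a setting with a non-singleton set of minimizers. Define the population criterion
\begin{equation*}
M(\tilde{\Lambda}) := \sum_{u\xleftrightarrow{}v\notin\GG}\HSIC\big([(I-\tilde{\Lambda})X]_u,[(I-\tilde{\Lambda})X]_v;\mathrm{k}_u,\mathrm{k}_v\big),
\end{equation*}
and let $\hat M_n$ denote the empirical objective in \cref{eq:fs:optim}. Because each $\mathrm{k}_u$ is characteristic, $\HSIC(Y,Z)=0$ if and only if $Y\indep Z$, so \cref{lem:estimation} identifies the zero set of $M$ on $\R^{\GG_D}$ with $\mathrm{Proj}_{\R^\GG}(\Phi_\GG^{-1}(X))$. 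In particular, for generic $(\Lambda,\varepsilon)$ the restriction to $\Theta$ gives $\arg\min_\Theta M = \Theta\cap\mathrm{Proj}=:S$, and $S\ni\Lambda$.

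The next step is to establish the uniform law of large numbers $\sup_{\tilde\Lambda\in\Theta}|\hat M_n(\tilde\Lambda)-M(\tilde\Lambda)|\to 0$ in probability. Since $\varepsilon$ has compact support and $\Theta$ is compact, $X$ and every $(I-\tilde\Lambda)X$ take values in a common compact set $K\subset\R^p$. On $K$ the kernels $\mathrm{k}_u,\mathrm{k}_v$ and their derivatives are bounded, hence Lipschitz, so each entry of the Gram matrices $K_{X,u}$ and $K_{X,v}$ is a bounded Lipschitz function of $\tilde\Lambda$. It follows that $\hat M_n(\tilde\Lambda)$ is Lipschitz in $\tilde\Lambda$ with a Lipschitz constant that is deterministic (depends only on $K$ and the kernels). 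Pointwise consistency of the empirical HSIC, $\widehat{\HSIC}_n\to\HSIC$ in probability at each fixed $\tilde\Lambda$, is the standard V-statistic result of Gretton et al. (2007). A covering argument over the compact parameter space $\Theta$ combined with this uniform Lipschitz bound upgrades pointwise to uniform convergence.

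The Wald step then finishes the proof. Fix $\epsilon>0$, set $K_\epsilon:=\{\tilde\Lambda\in\Theta:d(\tilde\Lambda,S)\geq\epsilon\}$, which is compact, and let $\delta:=\inf_{K_\epsilon}M$. By continuity of $M$ and the fact that $M$ vanishes precisely on $S$, we have $\delta>0$. On the event $\{\sup_\Theta|\hat M_n-M|<\delta/3\}$, any minimizer $\tilde\Lambda_n$ of $\hat M_n$ satisfies
\begin{equation*}
M(\tilde\Lambda_n)\leq \hat M_n(\tilde\Lambda_n)+\tfrac{\delta}{3}\leq \hat M_n(\Lambda)+\tfrac{\delta}{3}\leq M(\Lambda)+\tfrac{2\delta}{3}=\tfrac{2\delta}{3}<\delta,
\end{equation*}
so $\tilde\Lambda_n\notin K_\epsilon$, i.e.\ $d(\tilde\Lambda_n,S)<\epsilon$. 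The uniform convergence from the previous step makes this event have probability tending to one, proving the first conclusion. For the final statement, generic identifiability of $\lambda_{uv}$ via \cref{thm:main:loc} forces $(\Lambda_0)_{uv}=\lambda_{uv}$ for every $\Lambda_0\in S$, so convergence of $\tilde\Lambda_n$ to $S$ in distance implies $\tilde\lambda_{(n,uv)}\to\lambda_{uv}$ in probability. The main obstacle is the uniform convergence in the middle paragraph: pointwise consistency of the HSIC V-statistic is well-known, but transferring this to uniformity over $\Theta$ is the step where compactness of the support of $\varepsilon$ and the bounded-derivative assumption on the kernels are essential, since they are exactly what provides the deterministic Lipschitz control needed for the chaining/covering argument.
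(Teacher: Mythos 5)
Your proposal is correct and follows essentially the same route as the paper's proof: identify the zero set of the population objective with $\mathrm{Proj}$ via \cref{lem:estimation}, obtain a deterministic Lipschitz bound on the empirical objective from the compact support of $\varepsilon$ and the bounded kernels/derivatives, combine this with pointwise consistency of $\widehat{\HSIC}_n$ to get uniform convergence over the compact set $\Theta$, and finish with a Wald-type argument using the positive gap $\delta$ outside an $\epsilon$-neighborhood of the minimizing set. The only difference is presentational: the paper delegates the uniform-convergence step to \citet[Cor.~2.2]{newey:1991} and cites \citet{mooij:2016} and \citet{pfister:2018} for the pointwise consistency and Lipschitz/smoothness facts, whereas you sketch the covering argument directly.
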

\begin{proof}
See the supplemental material in \citet[\S2.4]{tramontano:2024:supp}
\end{proof}
\begin{remark}
    Fig.~5 in \citet[\S4.2]{tramontano:2024:supp} presents an example where only one of the two parameters is identifiable. Consistent with our theoretical predictions, only the estimator corresponding to the identifiable parameter demonstrates convergence as the sample size increases, while the estimator of the non-identifiable parameter fails to improve.
\end{remark}
\begin{remark}
The above result focuses on the finite-sample estimator of HSIC, as it is the one used in our simulations. However, a similar proof applies to any dependence measure for which pointwise convergence and Lipschitz continuity of the estimator can be guaranteed.
\end{remark}
\section{Conclusions}
\label{sec:conc}
In this work, we studied the generic identifiability of direct causal effects in linear structural equation models with dependent errors. For acyclic models, we obtained a complete graphical characterization of the identifiable causal effects, with a graphical criterion that can be checked in polynomial time in the size of the graph. For cyclic models, we proved that the same graphical conditions are necessary for identifiability, and we provided counterexamples to show that they are not sufficient. For a smaller family of cyclic models, we provided a complete graphical characterization of the identifiable effects. However, a complete characterization of the identifiability for cyclic models involves additional mathematical subtleties and is left as a problem for future work. 

Most of the literature on identifiability in linear structural equation models leverages specific moment equations to obtain identifiability results. In this work, we follow a different approach and exploit the information contained in the whole distribution, explicitly leveraging the independence relations dictated by the missing bidirected edges in the graph. To the best of our knowledge, our work is the first to follow this route in this generality.  In an initial exploration of parameter estimation, we showed that estimates obtained by minimizing structurally absent dependences can be useful.

\medskip

To conclude, we highlight possible future directions.
\paragraph*{Beyond Observational Data} In this paper, we considered the situation when only observational data is available. Recent identification results that additionally consider information from interventional datasets have been proposed for non-parametric models \citep{lee:2020, kivva2022revisiting}. Extending our results to these setups can be seen as a natural future direction.
\paragraph*{Non-linear Models} In the graphical models literature, different non-parametric assumptions on the functional relations among the variables have been used to guarantee the identifiability of the causal structure \cite[\S7.1]{Peters:Elements:2017}. Similar assumptions have been used to prove the identifiability of the causal effect under specific causal assumptions, e.g., \citet{imbens:2009}. However, a general graphical criterion for identification in non-linear structural equation models is currently missing. We believe that the ideas we propose in this work admit suitable extensions to these more general settings.
\paragraph*{Confidence Intervals} In this paper, we have focused exclusively on point estimation of causal effects. The construction of confidence intervals for these effects is a promising direction for future research. Both test inversion methods \citep[Section~3.4]{saengkyongam:2022} and the score test approach of \cite{londschien:2024} could be potentially be adapted to obtain confidence intervals in our context. Importantly, confidence intervals would also provide a practical means to test the genericity assumptions since violations of these assumptions would typically manifest as unusually wide confidence intervals.
\paragraph*{Relaxing Independence} 
Throughout this paper, we have assumed that missing edges in the bidirected part of the graph correspond to independence constraints on the exogenous noise vector. In light of recent developments in \emph{non-independent} component analysis \citep{mesters:2022,garrotelopez:2024,ribot:2025}, it would be natural to investigate whether our graphical criteria extend to settings where weaker moment conditions are assumed in lieu of independence.
\paragraph*{Structure Identifiability} 
In this paper, we have studied a parameter identifiability problem, assuming that the mixed graph $\GG$ is given. However, there are numerous practical problems in which one needs to infer the graph from data. This model selection problem is often referred to as \emph{structure learning} or \emph{causal discovery} \citep{drton:maathuis:2017}. Therefore, understanding whether, within the model class under consideration, the graph itself is identifiable can be seen as a natural future direction.

% \section*{APPENDICES MUST BE MOVED TO A SUPPLEMENT FILE}
%%%%%%%%%%%%%%%%%%%%%%%%%%%%%%%%%%%%%%%%%%%%%%
%% Support information, if any,             %%
%% should be provided in the                %%
%% Acknowledgements section.                %%
%%%%%%%%%%%%%%%%%%%%%%%%%%%%%%%%%%%%%%%%%%%%%%
% \begin{acks}[Acknowledgments]
% The authors would like to thank the anonymous referees, an Associate
% Editor and the Editor for their constructive comments that improved the
% quality of this paper.
% \end{acks}

%%%%%%%%%%%%%%%%%%%%%%%%%%%%%%%%%%%%%%%%%%%%%%
%% Funding information, if any,             %%
%% should be provided in the                %%
%% funding section.                         %%
%%%%%%%%%%%%%%%%%%%%%%%%%%%%%%%%%%%%%%%%%%%%%%
\begin{funding}
This project has received funding from the European Research Council (ERC) under the European Union’s Horizon 2020 research and innovation programme (grant agreement No 883818). DT's PhD scholarship is funded by the IGSSE/TUM-GS via a Technical University of Munich--Imperial College London Joint Academy of Doctoral Studies.
\end{funding}

%%%%%%%%%%%%%%%%%%%%%%%%%%%%%%%%%%%%%%%%%%%%%%
%% Supplementary Material, including data   %%
%% sets and code, should be provided in     %%
%% {supplement} environment with title      %%
%% and short description. It cannot be      %%
%% available exclusively as external link.  %%
%% All Supplementary Material must be       %%
%% available to the reader on Project       %%
%% Euclid with the published article.       %%
%%%%%%%%%%%%%%%%%%%%%%%%%%%%%%%%%%%%%%%%%%%%%%
\begin{supplement}
\stitle{Supplement to ``Parameter identification in linear
non-Gaussian causal models under general confounding''.}
\sdescription{The supplement contains further preliminary material, details of the proofs, and additional numerical experiments.}
\end{supplement}
\begin{supplement}
\stitle{Code for ``Parameter identification in linear
non-Gaussian causal models under general confounding''.}
\sdescription{The folder contains all the code necessary to implement the algorithms introduced in the paper and to replicate all the experiments presented.}
\end{supplement}

%\bibliographystyle{imsart-nameyear} 
%\bibliography{ref}

\clearpage

\begin{center}
    \bf\large \sc Supplement to ``Parameter identification in linear
non-Gaussian causal models under general confounding''
\end{center}

\section*{}
In \cref{app:notions} of this supplementary article, we provide additional definitions and known results that we will need for the proofs. \cref{app:sec:proofs} contains the proofs for the results of the main paper. \cref{app:oica} provides a detailed comparison with identifiability results obtained through Overcomplete ICA. \cref{app:simulations} contains the details of the experimental setting, as well as additional experiments.
\section{Notions of Non-Linear Algebra}
\label{app:notions}
In this section, we give the basic definitions of \emph{non-linear} algebra we will need for the proofs; we defer the interested reader to \cite{cox:2015,michalek:2021} for more details.
\begin{definition}
\label{def:ag:basics}
    For every natural number $n$, we denote the ring of polynomials in $n$ variables $x_1,\dots,x_n$ by $\R[x_1,\dots,x_n]$. Let $S$ be a possibly infinite subset of $\R[x_1,\dots,x_n]$. The affine variety associated to $S$ is defined as  $\mathcal{V}(S)=\{x\in\R^n\::\: f(x)=0,\,\forall f\in S\}$. The vanishing ideal associated to a variety $\mathcal{V}\subseteq\R^n$ is $\mathcal{I}(\mathcal{V})=\{f\in\R[x_1,\dots,x_n]\::\: f(x)=0.\,\forall x\in\mathcal{V}\}$, and the coordinate ring of $\mathcal{V}$ is defined as $\R[\mathcal{V}]=\R[x_1,\dots,x_n]/\mathcal{I}(\mathcal{V})$.
\end{definition}

\begin{lemma}{\citet[Prop.~3.1]{sullivant:2010}}
\label{lem:B:matrix}
Let $\GG_D$ be any directed graph. For every $\Lambda\in\R^{\GG_D}_\mathrm{reg}$ we have
        \begin{equation*}
        \label{eq:B:matrix}
            (B_{\Lambda})_{uv}=(I-\Lambda)^{-T}_{uv}=\sum_{P\in\mathcal{P}(v,u)}\lambda^P,
        \end{equation*}
        in particular $(B_{\Lambda})_{uv}=0$
        if $u\notin\de(v)$.
\end{lemma}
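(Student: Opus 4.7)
\emph{Plan.} The statement identifies $(I-\Lambda)^{-T}$ with the path matrix of $\GG_D$. Since $(B_\Lambda)_{uv}=(I-\Lambda)^{-T}_{uv}=(I-\Lambda)^{-1}_{vu}$, it suffices to establish $(I-\Lambda)^{-1}_{vu}=\sum_{P\in\PP(v,u)}\lambda^P$.

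First I would dispose of the acyclic case, which is the setting in which the lemma is invoked throughout the main analysis (e.g., in the proof of Lemma~\ref{lem:A:matrix}). Fixing a causal order of $V$ makes $\Lambda^T$ strictly triangular, so $\Lambda$ is nilpotent with $\Lambda^p=0$ for $p=|V|$, and the Neumann series collapses to the finite sum $(I-\Lambda)^{-1}=\sum_{k=0}^{p-1}\Lambda^k$. Expanding $(\Lambda^k)_{vu}$ directly gives a sum over all sequences $(v=w_0,w_1,\ldots,w_k=u)$ of the monomial $\prod_{i=0}^{k-1}\lambda_{w_iw_{i+1}}$. Because $\lambda_{wz}=0$ unless $w\to z\in\Ed$, only sequences tracing walks in $\GG_D$ from $v$ to $u$ contribute, and acyclicity promotes every such walk to a simple directed path. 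Summing over $k$ and grouping by path length yields $(I-\Lambda)^{-1}_{vu}=\sum_{P\in\PP(v,u)}\lambda^P$, and the ``in particular'' claim follows immediately: if $u\notin\de(v)$ then $\PP(v,u)=\emptyset$ and the sum is empty.

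The main obstacle is extending the argument to cyclic graphs (as is required once Section~\ref{sec:cycles} is reached), where $\Lambda$ need not be nilpotent, the Neumann series need not converge, and walks no longer coincide with paths. For that setting I would follow \citet[Prop.~3.1]{sullivant:2010} and apply Cramer's rule, writing
\[
(I-\Lambda)^{-1}_{vu}=\frac{(-1)^{v+u}\det\!\bigl((I-\Lambda)_{V\setminus\{u\},V\setminus\{v\}}\bigr)}{\det(I-\Lambda)}.
\]
Using the Leibniz expansion in numerator and denominator, each surviving term corresponds to a permutation decomposed into fixed points, a single path from $v$ to $u$, and a collection of disjoint cycles; the edge-support constraints on $\Lambda$ (no self-loops, zeros off $\Ed$) together with a sign analysis show that the cycle contributions in the numerator factor out and cancel against the denominator, leaving exactly $\sum_{P\in\PP(v,u)}\lambda^P$. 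This combinatorial cancellation is the delicate step, but it is routine once the Leibniz bookkeeping is organized by the cycle-decomposition of the underlying permutation.
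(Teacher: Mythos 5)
The paper does not actually prove this lemma---it is imported verbatim as a citation of \citet[Prop.~3.1]{sullivant:2010}---so there is no internal proof to compare against; I can only assess your argument on its own terms. Your acyclic case is correct and complete: after fixing a causal order $\Lambda$ is nilpotent, the Neumann series terminates, $(\Lambda^k)_{vu}$ enumerates length-$k$ walks from $v$ to $u$ supported on $\Ed$, and acyclicity makes every walk a simple path; the ``in particular'' clause follows since $\PP(v,u)=\emptyset$ when $u\notin\de(v)$.

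The cyclic extension, however, contains a genuine error: the claimed cancellation in the Cramer/Leibniz expansion does not happen, and the identity you would obtain from it is false. Take $V=\{1,2\}$ with $1\to2$ and $2\to1$. Then
\[
(I-\Lambda)^{-1}_{12}=\frac{\lambda_{12}}{1-\lambda_{12}\lambda_{21}},
\]
whereas the sum over \emph{simple} paths from $1$ to $2$ is just $\lambda_{12}$; the numerator of the cofactor expansion is $\lambda_{12}$ and the denominator $1-\lambda_{12}\lambda_{21}$ share no common factor, so nothing cancels. What the Leibniz bookkeeping actually yields is Mason's/Coates' formula, $(I-\Lambda)^{-1}_{vu}=\sum_{P}\lambda^P\det\bigl((I-\Lambda)_{W,W}\bigr)/\det(I-\Lambda)$ with $W=V\setminus V(P)$, and the complementary determinants equal $\det(I-\Lambda)$ only in the acyclic case. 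The resolution is that the paper's definition of a directed path does \emph{not} require distinct vertices, so $\PP(v,u)$ is really the set of all directed \emph{walks}; for cyclic graphs the right-hand side is then an infinite series, and the correct route is again the Neumann expansion $\sum_{k\ge0}\Lambda^k$ --- valid as a formal power series identity, or analytically when the spectral radius of $\Lambda$ is below $1$, but not for arbitrary $\Lambda\in\R^{\GG_D}_{\mathrm{reg}}$ (in the $2$-cycle example the walk sum is the geometric series $\lambda_{12}\sum_{k\ge0}(\lambda_{12}\lambda_{21})^k$, which diverges when $|\lambda_{12}\lambda_{21}|\ge1$ even though $I-\Lambda$ is invertible). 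So you should either prove the cyclic case as a formal/convergent walk-sum identity, or note that the lemma is only invoked in the acyclic setting where your first argument already suffices; the Cramer-rule cancellation as you describe it cannot be repaired.
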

    
\section{Proofs for the Main Paper}
\label{app:sec:proofs}
\subsection{Proofs for Section 4 from the Main Paper}
\begin{proof}[Proof for Lemma 4.1 from the main paper]
    First, notice that for all practical purposes, we can consider the edge capacity to be $|\pa(v)|$ instead of $\infty$; this implies that we can exploit additional properties of maximum flow problems with integer values.
    
    We are going to show that to every flow $f$ in $G^v_Q$ of size $k$ with integer values, we can associate $(I_f,P_f)\in2^{R_v}\times2^Q$ and a system of directed paths $\Pi_f\in\tilde{\PP}(I_f,P_f)$ such that $|I_f|=k$, and vice versa. That is, for every pair $(I,P)\in2^{R_v}\times2^Q$ and a system of directed paths $\Pi=(\pi_1,\dots,\pi_k)\in\tilde{\PP}(I,P)$ such that $|I|=k$, we can associate an integer flow $f_{I,P}$ of size $k$.
    
    Let us first consider a flow $f$ with integer value in $G^v_Q$. Since the capacity of each node, that is not a sink or a source, is $1$, we can restrict the image of $f$ to be $\{0,1\}$.
    Define $I_f=\{u\in R_v\::\: f(s_v,u)=1\}$.  Since the size of the flow is $k$, we have $|I_f|=k$. For every $u\in I_f$ consider the directed path $\pi_u: u=:u_0,u_1,\dots,u_{k_u}$ such that $f(u_i,u_{i+1})=1$ and $u_{k_u}\in Q$.  This is well defined since for every $u_i\in V_v\setminus\{s_v,t_v\}$ there is at most one other $u_{i+1}\in V_v\setminus\{s_v,t_v\}$ such that $f(u_i,u_{i+1})=1$, and if one assumes that there is an $u_{i-1}$ such that $f(u_{i-1},u_{i})=1$ then the existence of $u_{i+1}$ is guaranteed from the first equality in Eq.~4.1 from the main paper.
    Let $P_f=\{u_{k_u}\::\: u\in I_f\}$.  
    To obtain a contradiction, we prove that $\Pi=(\pi_u\::\: u\in I_v)\in\Tilde{\PP}(I_f,P_f)$. 
    Suppose two directed paths in $\Pi$ intersect at a node $u$. This 
 implies that there are $u_0\neq u_1\in V_v$ such that $f(u_0, u)=f(u_1, u)=1$, hence $\sum_{w\in V_v}f(w,u)\geq 2>c_V(u)$. We obtain a violation of Eq.~4.1 in the main paper.
    
    For the other implication, consider $(I,P)\in2^{R_v}\times2^Q$ and a system of directed paths $\Pi=(\pi_1,\dots,\pi_k)\in\tilde{\PP}(I,P)$ such that $|I|=k$. We define $f_{I,P}$ as follows:
    \begin{equation}
    \label{eq:ass:flow}
        f_{I,P}(u,w)=\begin{cases}
        1, \qquad&\emph{if } u=s_v \emph{ and } w\in I \emph{ or } u\in P \emph{ and } v=t_v, \\
        1, \qquad&\exists \,j\in[k]\::\: u\to v\in\pi_j,\\
        0, \qquad&\emph{otherwise}.
        \end{cases}
    \end{equation}
    We need to show that $f_{I, P}$ satisfies Eq.~4.1 from the main paper. Since the capacity of each edge is infinity, we only need to check the first inequality; this holds because $\Pi$ is a \emph{non-intersecting} system of directed paths, and so each node has at most one incoming, outgoing, for which the flow is different from 0. 
     By directly plugging in \cref{eq:ass:flow} into Eq.~4.2 from the main paper, one can see that $|f_{I, P}|=k$.
    To conclude the proof, we need to show that there is a solution to $G^v_Q$ with integer values. This is ensured by applying \citet[Thm.~26.10]{cormen:2009} and the fact that all the capacities in $G^v_Q$ are integers. 
\end{proof}

\begin{proof}[Proof for Theorem 4.2 from the main paper]
    \cite{chen:2022} proved that the complexity of any maximum flow problem $(G,s,t,c_V,c_D)$ is \emph{almost linear} in the number of edges in the graph $G$. For every node $v\in V$ and $Q\subseteq\pa(v)$ to certify the identifiability of $\lambda_{Q,v}$, one needs to solve the maximum flow problems $G^v_Q$ and $G^v_{\pa(v)}$ and then check whether the difference of the sizes of the corresponding maximum flows is $|Q|$. Since both $G^v_Q$ and $G^v_{\pa(v)}$ have at most $2(|V|+1)$ nodes, the overall complexity is $\mathcal{O}(|V|^{2+\mathrm{Obj}(1)})$.
\end{proof}

\begin{proof}[Proof for Theorem 4.3 from the main paper]
    To certify the identifiability of all the directed edges, i.e., the whole matrix, one needs to solve the maximum flow problem $G^v_{\pa(v)}$ for every $v$ in $V$ and check whether the maximum flow has the size $|\pa(v)|$. This adds a multiplicative factor $|V|$ to the result of Theorem 4.2 from the main paper, which leads to  $\mathcal{O}(|V|^{3+\mathrm{Obj}(1)})$.
\end{proof}
\subsection{Proofs for Section 5.2 from the main paper}
\label{app:subsec:proof:gen:ass}

In the sequel, we will consider $\mathcal{M}^{\leq k}(\GG_B)$ as a variety in the space given by the Cartesian product of the symmetric tensor spaces $(\Sym_l(\mathbb{R}^p))_{2\le l\le k}$, which is isomorphic to $\R^{\sum_{s=2}^k \binom{p+s-1}{s}}$. We denote the corresponding coordinate ring as $\R[\mathcal{M}^{\leq k}(\GG_B)]$. For every $k$-tuple $(i_1,\dots,i_k)$, we denote by $\mathcal{M}^{(k)}_{\setminus (i,\dots,i)}(\GG_B)$ the projection of $\mathcal{M}^{(k)}(\GG_B)$ on the coordinates not corresponding to the entry $(i_1,\dots,i_k)$.

\begin{proof}[Proof for Lemma 5.5 from the Main Paper]
The fact that $\phi^k$ is well defined is a consequence of \citet[Prop.~3.1]{comon:jutten:handbook}. 

There is a one-to-one linear transformation between cumulants and moments; see, e.g., \citet[\S2.3]{mccullagh:1987}; hence, it is enough to prove the result for the corresponding set of moments. It is known that the set of symmetric tensors that can be generated as a moment of a distribution is a full dimensional convex cone in the space of symmetric tensors, see, e.g., \citet[Lem.~3.3]{didio:2022}. Hence, the same result holds for the set of cumulants, $\phi^{\leq k}(\mathcal{M}_\infty(\GG_B))$ is the projection of this convex cone along the coordinate axes corresponding to connected subsets of $\GG_B$, so is itself a full dimensional convex cone in $\mathcal{M}^{\leq k}(\GG_B)$.
\end{proof}

\begin{lemma}[\cite{comon:jutten:handbook}, \S5, Eq.~5.8]
\label{lem:multilin:cum}
    Let $\varepsilon=(\varepsilon_1,\dots,\varepsilon_p)$ be any $p$-variate random vector, and $A\in\mathbb{R}^{s\times p}$ for any $s\in\mathbb{N}$, then \begin{equation*}
        \begin{aligned}
            \mathcal{C}^{(k)}(A\cdot\varepsilon)_{i_1,\dots,i_k} 
             = \sum_{{1\leq}j_1,\dots,j_k{\leq p}}\mathcal{C}^{(k)}(\varepsilon)_{j_1,\dots,j_k}a_{j_1i_i}\cdots a_{j_ki_k}.
        \end{aligned}
    \end{equation*}
\end{lemma}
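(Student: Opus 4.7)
The plan is to derive the identity from the cumulant generating function, which is the cleanest route. Set $Y := A\varepsilon$, and note that the claim is only meaningful when $\varepsilon$ has finite joint moments of order $k$, a property that transfers automatically to $Y$ since each $Y_i$ is a finite linear combination of the $\varepsilon_j$. First I would recall the defining property: the $k$-th joint cumulant of any random vector $Z$ is the $k$-th mixed partial derivative at the origin of its cumulant generating function $K_Z(t)=\log \mathbb{E}[\exp(t^T Z)]$ (or, when $K_Z$ fails to exist in a neighborhood of the origin, of the logarithm of the characteristic function expanded to order $k$ via Taylor's theorem).

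The key observation is then the chain of identities
\begin{equation*}
K_Y(t) \;=\; \log \mathbb{E}\bigl[\exp(t^T A\varepsilon)\bigr] \;=\; \log \mathbb{E}\bigl[\exp((A^T t)^T \varepsilon)\bigr] \;=\; K_\varepsilon(A^T t).
\end{equation*}
Differentiating $k$ times with respect to $t_{i_1},\dots,t_{i_k}$ and evaluating at $t=0$, the multivariate chain rule applied to the linear inner map $t\mapsto A^T t$ yields
\begin{equation*}
\frac{\partial^k K_Y(t)}{\partial t_{i_1}\cdots\partial t_{i_k}}\bigg|_{t=0} \;=\; \sum_{j_1,\dots,j_k=1}^p \left.\frac{\partial^k K_\varepsilon(s)}{\partial s_{j_1}\cdots\partial s_{j_k}}\right|_{s=0} a_{j_1 i_1}\cdots a_{j_k i_k},
\end{equation*}
which, after identifying each partial derivative with the corresponding cumulant tensor entry, is exactly the claimed identity.

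The main obstacle is making this generating-function manipulation rigorous under only the finite-moment assumption, since $K_\varepsilon$ need not be defined, let alone analytic, in a neighborhood of the origin. The standard remedy is to carry out the derivation with the logarithm of the characteristic function, $t\mapsto \log \mathbb{E}[\exp(i t^T \varepsilon)]$, and to rely on Taylor's theorem up to order $k$; the relevant coefficients are finite precisely because $\mathbb{E}[|\varepsilon_{j_1}\cdots \varepsilon_{j_k}|]<\infty$ for every multi-index of length at most $k$. A completely algebraic alternative, which I expect is closer in spirit to the rest of the appendix, is to substitute $Y_{i_l} = \sum_{j_l} a_{j_l i_l}\varepsilon_{j_l}$ directly into the set-partition formula defining $\mathcal{C}^{(k)}(Y)_{i_1,\dots,i_k}$, expand each block expectation $\mathbb{E}[\prod_{r\in A_l} Y_{i_r}]$ by multilinearity of expectation, and regroup the resulting $p^k$-fold sum block-by-block; the partition structure is preserved by this regrouping, so the summand at fixed $(j_1,\dots,j_k)$ reassembles into $\mathcal{C}^{(k)}(\varepsilon)_{j_1,\dots,j_k}\cdot a_{j_1 i_1}\cdots a_{j_k i_k}$, giving the identity without any appeal to generating functions.
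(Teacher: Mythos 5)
Your argument is correct, but note that the paper does not prove this lemma at all: it is imported verbatim from the ICA literature (cited as Eq.~5.8 in \S 5 of the Comon--Jutten handbook) and used as a black box in the proof of Lemma \ref{lem:transf:cumulant}. So any proof you supply is necessarily ``a different route.'' Of your two routes, the second (purely algebraic) one is the better fit for this paper: the paper \emph{defines} $\mathcal{C}^{(k)}$ by the set-partition (M\"obius) formula, so substituting $Y_{i_l}=\sum_{j_l}a_{i_l j_l}\varepsilon_{j_l}$ into each block expectation, expanding by multilinearity of expectation, and regrouping the $p^k$ terms partition-by-partition gives the identity directly from the definition, with no analytic hypotheses beyond finiteness of the $k$-th order moments. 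The generating-function route is also fine, but it silently invokes the equivalence between the partition definition and the derivatives-of-$\log\varphi$ definition of joint cumulants, which the paper never states; you correctly flag that one must work with the log-characteristic function rather than the cumulant generating function when only $k$ moments are assumed. Two small bookkeeping points: (i) the partition in the paper's definition should be read as a partition of the \emph{positions} $[k]$ (your regrouping implicitly does this, which is the right reading when indices repeat); (ii) with $A\in\mathbb{R}^{s\times p}$ and $Y_i=\sum_j a_{ij}\varepsilon_j$, the coefficient produced by either derivation is $a_{i_1 j_1}\cdots a_{i_k j_k}$, so the index order $a_{j_1 i_1}$ in the displayed statement (and in your write-up, which mirrors it) is a transposition typo in the source; it is corrected, modulo further typos, where the lemma is applied in Lemma \ref{lem:transf:cumulant}.
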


\begin{lemma}
\label{lem:transf:cumulant}
Let $\varepsilon\in\mathcal{M}_{\infty}(\GG_B)$, and $A\in\mathbb{R}^{2\times p}$, then 
\begin{equation}
    \label{eq:poly:par:cum}
    \mathcal{C}^{(k)}(A\cdot\varepsilon)_{i_1,\dots,i_k}=\sum_{\substack{\{j_1,\dots,j_k\} \emph{ is }\\\emph{connected in } \GG_B} }\mathcal{C}^{(k)}(\varepsilon)_{j_1,\dots,j_k}a_{i_ij_1}\cdots a_{i_kj_j}{.}
\end{equation}
{In particular, $\mathcal{C}^{(k)}(A\cdot\varepsilon)_{i_1,\dots,i_k}$ is a \emph{non-zero} polynomial in $\R[\mathcal{M}^{(k)}(\GG_B), a_{i,j} \::\: i,j\in[p]]$.}
\end{lemma}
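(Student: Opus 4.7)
\medskip

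\noindent\textbf{Proof plan.} My plan is to obtain the displayed identity by combining two ingredients already available in the paper: the multilinearity of cumulants under affine transformations, and the structural vanishing of cumulants of $\varepsilon\in\mathcal{M}_\infty(\GG_B)$ on subsets that are disconnected in $\GG_B$.

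First I would apply \cref{lem:multilin:cum} to the transformation $A\cdot\varepsilon$, obtaining the full unrestricted sum
\[
\mathcal{C}^{(k)}(A\cdot\varepsilon)_{i_1,\dots,i_k}=\sum_{1\le j_1,\dots,j_k\le p}\mathcal{C}^{(k)}(\varepsilon)_{j_1,\dots,j_k}\,a_{i_1 j_1}\cdots a_{i_k j_k}.
\]
To reduce this to the claimed sum, the key step is to justify that every summand with $\{j_1,\dots,j_k\}$ not connected in $\GG_B$ vanishes. This is exactly the content of \cref{lem:zeros:cum}(i), which guarantees $\phi^{k}(\mathcal{M}_\infty(\GG_B))\subseteq\mathcal{M}^{(k)}(\GG_B)$; by the definition of $\mathcal{M}^{(k)}(\GG_B)$ in \cref{def:finite:mom:models}, this means precisely that $\mathcal{C}^{(k)}(\varepsilon)_{j_1,\dots,j_k}=0$ whenever the multiset $\{j_1,\dots,j_k\}$ is not connected in $\GG_B$. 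I would therefore cite \cref{lem:zeros:cum}(i) and conclude the first part of the lemma. I do not expect this step to be the main obstacle, since both building blocks are established earlier.

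For the non-vanishing statement, my plan is to exhibit one monomial in the polynomial that cannot be cancelled by any other. A convenient choice is the diagonal term corresponding to $j_1=\dots=j_k=r$ for some fixed $r\in[p]$: the singleton $\{r\}$ is trivially connected in $\GG_B$, so the monomial
\[
\mathcal{C}^{(k)}(\varepsilon)_{r,\dots,r}\,a_{i_1 r}\cdots a_{i_k r}
\]
survives the restriction. Viewing $\mathcal{C}^{(k)}(\varepsilon)_{r,\dots,r}$ and the $a_{i_\ell,j}$'s as independent indeterminates in $\R[\mathcal{M}^{(k)}(\GG_B),\, a_{i,j}:i,j\in[p]]$, this monomial is distinct from every other contribution: any other $(j_1,\dots,j_k)$ with some $j_\ell\ne r$ yields a different monomial in the $a$-variables, and the monomial above involves the specific indeterminate $\mathcal{C}^{(k)}(\varepsilon)_{r,\dots,r}$ which does not appear in any other $(j_1,\dots,j_k)$-summand. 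Hence the overall polynomial is not identically zero.

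The only potentially delicate bookkeeping is confirming that different index tuples $(j_1,\dots,j_k)$ produce algebraically independent monomials in the joint ring (after accounting for the symmetry $\mathcal{C}^{(k)}_{j_1,\dots,j_k}=\mathcal{C}^{(k)}_{\sigma(j_1),\dots,\sigma(j_k)}$). This is straightforward because the cumulant indeterminates are indexed by multisets, not tuples, and the chosen diagonal monomial uses the multiset $\{r,\dots,r\}$, which is distinct from every other multiset $\{j_1,\dots,j_k\}$ appearing in the sum. I therefore expect no substantive obstacle and view the entire proof as a direct composition of \cref{lem:multilin:cum} and \cref{lem:zeros:cum}(i), together with a one-line non-vanishing argument.
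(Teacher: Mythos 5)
Your proposal is correct and follows essentially the same route as the paper: apply the multilinearity of cumulants (\cref{lem:multilin:cum}) and then discard the terms indexed by non-connected multisets via \cref{lem:zeros:cum}(i), with the non-vanishing following because distinct multisets index distinct cumulant indeterminates so no cancellation can occur. Your explicit choice of the diagonal monomial $\mathcal{C}^{(k)}(\varepsilon)_{r,\dots,r}\,a_{i_1 r}\cdots a_{i_k r}$ is a slightly more careful rendering of the paper's one-line cancellation argument, but it is the same idea.
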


\begin{proof}
A direct consequence of \cref{lem:multilin:cum} above and Lemma 5.5 from the main paper. The polynomial is \emph{non-zero} because each monomial of the form $\mathcal{C}^{(k)}(\varepsilon)_{j_1,\dots,j_k}$ only appears once in the summation, hence it cannot cancel out with any other monomial.
\end{proof}

\begin{proof}[Proof for Theorem 5.6 from the main paper]
From Lemma 5.5 in the main paper, we know that $\dim(\phi^{\leq k}(\mathcal{M}_\infty(\GG_B)))=\dim(\mathcal{M}^{\leq k}(\GG_B))$. Hence, it is enough to show that $\phi^{\leq k}(\mathcal{S}(\GG_B))$ lies in a subvariety of $\mathcal{M}^{\leq k}(\GG_B)$ of strictly smaller dimension, see e.g., \citet[Lemma]{okamoto:1973}.

Notice that we can write $$\mathcal{S}(\GG_B)=\left(\bigcup_{i\in[p]}\mathcal{S}_{i}(\GG_B)\right)\cup\left(\bigcup_{u_i\leftrightarrow{}u_j\in\GG_B}\mathcal{S}_{i\leftrightarrow{}j}(\GG_B)\right),$$
where

\begin{equation*}
    \begin{aligned}
            \kappa_{i}(\varepsilon)=\{A\in\kappa(\varepsilon)\::\: a_{1i}\cdot a_{2i}\neq0\},\qquad \mathcal{S}_{i}(\GG_B)=\{\varepsilon\in\mathcal{S}(\GG_B)\::\:\kappa_{i}(\varepsilon)\neq\emptyset\},
    \end{aligned}
\end{equation*}
while $\kappa_{i\leftrightarrow{}j}(\varepsilon), \mathcal{S}_{i\leftrightarrow{}j}(\GG_B)$ are defined in a similar way. Hence, it is enough to prove that both $\phi^{\leq k}(\mathcal{S}_{i}(\GG_B))$ and $\phi^{\leq k}(\mathcal{S}_{i\leftrightarrow{}j}(\GG_B))$ are Lebesgue measure 0 subsets of $\mathcal{M}^{\leq k}(\GG_B)$ for $k$ high enough. 

We start with by bounding the dimension of $\mathcal{S}_{i}(\GG_B)$. For every $\varepsilon\in\mathcal{S}_{i}(\GG_B)$, every $A\in\kappa_i(\varepsilon)$, and every $0\neq s, t\in\mathbb{N}$ we can use \cref{lem:transf:cumulant} to write 
$$
0=\mathcal{C}^{(s+t)}(A\cdot\varepsilon)_{\underbrace{1,\dots,1}_{s},\underbrace{2\dots,2}_{t}}=\mathcal{C}^{(s+t)}(\varepsilon)_{i,\dots,i}a_{1i}^s\cdot a^t_{2i}+r^{s+t}_{\setminus i}(\varepsilon, A)
$$ 
where $r^{s+t}_{\setminus i}(\varepsilon, A)$ is a \emph{non-zero} polynomial in $\R[\mathcal{M}_{_{\setminus (i,\dots,i)}}^{(s+t)}(\GG_B), a_{i,j} \::\: i,j\in[p]]$, notice that for the first equality we used that $(A\varepsilon)_1$ and $(A\varepsilon)_2$ are independent{, while for the second equality we isolated the element corresponding to $\{i,\dots,i\}$ in \cref{eq:poly:par:cum}}. This implies that we can write 
\begin{equation}
\label{eq:image:incl}
    \mathcal{C}^{(s+t)}(\varepsilon)_{i,\dots,i}=\phi^{k}(\varepsilon)_{(i,\dots,i)}=-{r^{s+t}_{\setminus i}\left(\phi^{k}(\varepsilon)_{\setminus (i,\dots,i)}, A\right)}/{a_{i1}^s\cdot a^t_{i2}}.
\end{equation}
 We can define a rational map $\psi^{s,t}_{i}: \mathcal{M}_{_{\setminus (i,\dots,i)}}^{(s+t)}(\GG_B)\times \R^{2p}\to\mathcal{M}^{(s+t)}(\GG_B)$ in the following way 
$$
    \psi^{s,t}_i( \mathcal{C}^{s+t},{A}    
    )_{i_1,\dots,i_k}:=\begin{cases}
    -r^{s+t}_{\setminus i}(\mathcal{C}^{s+t}_{\setminus (i,\dots,i)}, A)/a_{1i}^s\cdot a^t_{2i}, &\emph{ if } (i_1,\dots,i_k)=(i,\dots,i),\\
    \mathcal{C}^{s+t}_{i_1,\dots,i_k}, &\emph{otherwise},
    \end{cases}
$$
see, e.g., \citet[\S5]{cox:2015} for the definition of rational map.

What \cref{eq:image:incl} shows is that $$\phi^{(s+t)}(\mathcal{S}_i(\GG_B))\subseteq \psi^{(s,t)}(\mathcal{M}_{_{\setminus (i,\dots,i)}}^{(s+t)}(\GG_B)\times \R^{2p})\subseteq\mathcal{M}^{(s+t)}(\GG_B).$$ Let us consider $\psi^{\leq k}_{i}:\mathcal{M}_{_{\setminus (i,\dots,i)}}^{\leq k}(\GG_B)\times \R^{2p}\to\mathcal{M}^{\leq k}(\GG_B)$ such that $\psi^{\leq k}_{i}\circ \pi_{k_0}=\psi^{k_0-1,1}_i$, where $\pi_{k_0}$ is the projection of $\mathcal{M}^{\leq k}$ onto $\mathcal{M}^{(k_0)}$ for every ${2\leq}k_0\leq k$. Again, we have $\phi^{\leq k}(\mathcal{S}_i(\GG_B))\subseteq \psi^{\leq k}(\mathcal{M}_{_{\setminus (i,\dots,i)}}^{\leq k}(\GG_B)\times \R^{2p})\subseteq\mathcal{M}^{\leq k}(\GG_B)$, that concludes the proof by noticing that 
\begin{equation*}
    \begin{aligned}
        &\dim(\phi^{\leq k}(\mathcal{S}_i(\GG_B)))\leq\dim\bigg(\psi^{\leq k}(\mathcal{M}_{_{\setminus (i,\dots,i)}}^{\leq k}(\GG_B)\times \R^{2p})\bigg) \\
        &\leq  \dim(\R^{2p})+\dim(\mathcal{M}_{_{\setminus (i,\dots,i)}}^{\leq k}(\GG_B))\leq 2p+\dim(\mathcal{M}^{\leq k}(\GG_B))-(k-1),
    \end{aligned}
\end{equation*}
that is strictly smaller than $\dim(\mathcal{M}^{\leq k}(\GG_B))$ if $k\geq 2(p+1)$. 

In order to prove the result for $\mathcal{S}_{i\leftrightarrow{} j}(\GG_B)$, we first notice that we can always write $$\mathcal{S}_{i\leftrightarrow{} j}(\GG_B)=\left(\mathcal{S}_{i\leftrightarrow{} j}(\GG_B)\cap\bigcup_{i\in[p]}\mathcal{S}_i(\GG_B)\right)\dot{\cup}\left(\mathcal{S}_{i\leftrightarrow{} j}(\GG_B)\setminus\bigcup_{i\in[p]}\mathcal{S}_i(\GG_B)\right).$$  Since we have already bounded the dimension of $\mathcal{S}_{i\leftrightarrow{} j}(\GG_B)\cap\bigcup_{i\in[p]}\mathcal{S}_i(\GG_B)$; to conclude the proof we only need to bound the dimension of  $$\tilde{\mathcal{S}}_{i\leftrightarrow{} j}(\GG_B):= \mathcal{S}_{i\leftrightarrow{} j}(\GG_B)\setminus\bigcup_{i\in[p]}\mathcal{S}_i(\GG_B).$$ 
For every $\varepsilon\in\tilde{\mathcal{S}}_{\leftrightarrow{} j}(\GG_B)$, and any $A\in\kappa_{\leftrightarrow{} j}(\varepsilon)$, and every $2\leq k\in\mathbb{N}$ we can use \cref{lem:transf:cumulant} to write 
$$
0=\mathcal{C}^{(k)}(A\cdot\varepsilon)_{1,\dots,1,2}=\mathcal{C}^{(s+t)}(\varepsilon)_{i,\dots,i,j}a_{1i}^{k-1}\cdot a_{2j}+r^{k}_{\setminus i\leftrightarrow{} j}(\varepsilon, A),
$$ 
where we used that $a_{1i}\cdot a_{2i}=0$ for every $i$, that is a consequence of $\varepsilon\notin\cup_{i\in[p]}\mathcal{S}_i(\GG_B)$ to simplify the formula given in \cref{lem:transf:cumulant}. This allows us to write 
\begin{equation*}
        \mathcal{C}^{(k)}(\varepsilon)_{i,\dots,i,j}=\phi^{k}(\varepsilon)_{(i,\dots,i,j)}=-{r^{k}_{\setminus i\leftrightarrow{} j}\left(\phi^{k}(\varepsilon)_{\setminus (1,\dots,1,2)}, A\right)}/{a_{1i}^{k-1}\cdot a_{2j}}.
    \end{equation*}
The rest of the proof follows verbatim the case of $\mathcal{S}_i(\GG_B)$.
\end{proof}

\begin{example}
\begin{figure}[h]
    \centering
    \begin{tikzpicture}
            \node[draw, circle, inner sep=3pt] (A) at (0,0) {$v_1$};
            \node[draw, circle, inner sep=3pt] (B) at (2,0) {$v_2$};
            \node[draw, circle, inner sep=3pt] (C) at (4,0) {$v_3$};
        
            \draw[dashed][<->] (A) to[bend left=30] (B);
        \end{tikzpicture}
        
    \caption{A bidirected graph with three nodes.}
    \label{fig:3:nodes:ex}
\end{figure}
In this example, we compute the map $\psi^{1, 1}_1$ introduced in the proof above for the graph depicted in \cref{fig:3:nodes:ex}. For any $A\in\R^{2\times 3}$, and $\varepsilon\in\MM_{\infty}(\GG_B)$ we have 
$$
\mathcal{C}^{(2)}(A\cdot\varepsilon)_{1, 2}= \mathcal{C}^{(2)}(\varepsilon)_{1, 1}a_{1, 1}a_{1, 2} + \mathcal{C}^{(2)}(\varepsilon)_{1, 2}a_{1, 1}a_{2, 2} + \mathcal{C}^{(2)}(\varepsilon)_{2, 2}a_{2, 1}a_{2, 2} + \mathcal{C}^{(2)}(\varepsilon)_{3, 3}a_{3, 1}a_{3, 2}.
$$
If $(A\varepsilon)_1\indep (A\varepsilon)_2$, then $\mathcal{C}^{(2)}(A\cdot\varepsilon)_{1, 2} = 0$, and thus
$$
\mathcal{C}^{(2)}(\varepsilon)_{1, 1} = -\frac{\mathcal{C}^{(2)}(\varepsilon)_{1, 2}a_{1, 1}a_{2, 2} + \mathcal{C}^{(2)}(\varepsilon)_{2, 2}a_{2, 1}a_{2, 2} + \mathcal{C}^{(2)}(\varepsilon)_{3, 3}a_{3, 1}a_{3, 2}}{a_{1, 1}a_{1, 2}} = -\frac{r_{\setminus 1}^{1, 1}(\mathcal{C}^{(2)}_{\setminus (1, 1), A})}{a_{1, 1}a_{1, 2}}.
$$
Notice that the above equation is well defined only if $a_{1, 1}a_{1, 2}\neq0$, which is true for the distributions in $\varepsilon\in\mathcal{S}_1(\GG_B)$.
The map $\psi_1^{1,1}$ is defined as follows:
\begin{equation*}
    \begin{aligned}
        \psi_1^{1,1}&\left(\mathcal{C}^{(2)}(\varepsilon)_{1, 2}, \mathcal{C}^{(2)}(\varepsilon)_{1, 3}, \mathcal{C}^{(2)}(\varepsilon)_{2, 2}, \mathcal{C}^{(2)}(\varepsilon)_{2, 3}, \mathcal{C}^{(2)}(\varepsilon)_{3,3}, a_{1, 1}\dots,a_{2, 3}\right) =\\
        &\left(-\frac{r_{\setminus 1}^{1, 1}(\mathcal{C}^{(2)}_{\setminus (1, 1), A})}{a_{1, 1}a_{1, 2}}, \mathcal{C}^{(2)}(\varepsilon)_{1, 3}, \mathcal{C}^{(2)}(\varepsilon)_{2, 2}, \mathcal{C}^{(2)}(\varepsilon)_{2, 3}, \mathcal{C}^{(2)}(\varepsilon)_{3,3}
        \right).
    \end{aligned}
\end{equation*}
\end{example}

\subsection{Proofs for Section 6 from the Main Paper}
\label{app:subsec:proof:cycle}
\begin{lemma}
\label{lem:diag:cycle}
    Let $\GG = (V, \Ed, \Ed)$ be a mixed graph.  Assume the vertex set can be partitioned as $V = C_1\Dot{\cup}\cdots \Dot{\cup}C_n$, with $C_i$ being a $k_i$-cycle, and $\pa(C_i)\subseteq\bigcup_{j = 0}^i C_j$, where $\Dot{\cup}$ denotes the union of disjoint sets. Then, $\GG$ is generically identifiable if and only if $\lambda_{C_i, v}$ is identifiable for every $i\in[n]$ and $v\in C_i$, and the graphical criterion in Theorem 6.3 from the main paper is satisfied.
\end{lemma}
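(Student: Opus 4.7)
The plan is to prove this lemma as a reduction, decomposing the identifiability of $\GG$ along the cycle structure. The forward direction is mostly immediate, while the reverse direction requires an inductive argument that identifies parameters one cycle at a time, processing $C_1, C_2, \ldots, C_n$ in order.

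For the forward direction, if $\GG$ is generically identifiable then every individual entry of $\Lambda$ is identifiable, so in particular $\lambda_{C_i, v}$ is identifiable for every $i \in [n]$ and every $v \in C_i$. Establishing necessity of the 2-cycle criterion is the more substantive part: for any 2-cycle $C = \{v_1, v_2\}$ violating the shared-parents condition, I would construct an alternative parametrization $\tilde{\Lambda}$ using the reciprocal swap $\tilde{\lambda}_{v_1 v_2} = 1/\lambda_{v_2 v_1}$ and $\tilde{\lambda}_{v_2 v_1} = 1/\lambda_{v_1 v_2}$, in the spirit of \cref{ex:non:suff}. The asymmetry in the external parents provides additional degrees of freedom in the cross-cycle coefficients that can be retuned to absorb the swap while preserving the required independence structure of $\tilde{\varepsilon}$, thereby producing a second element of the fiber and contradicting identifiability.

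For the reverse direction, I would proceed by induction on $i$. In the base case, $C_1$ has no cross-cycle parents because $\pa(C_1) \subseteq C_1$, so the within-cycle identifiability assumption for $C_1$ directly identifies all parameters indexed by $C_1$. In the inductive step, assume that all parameters indexed by nodes in $\bigcup_{j < i} C_j$ have been identified; equivalently, the columns of the matrix $A = (I - \tilde{\Lambda})^T B_\Lambda$ indexed by those earlier cycles reduce to those of the identity. I would then analyze the constraint system from \cref{lem:main:cycles} restricted to $v \in C_i$ by splitting $\pa(v)$ into its within-cycle part $\pa(v) \cap C_i$ and its cross-cycle part $\pa(v) \cap \bigcup_{j<i} C_j$. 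Independence of the $\tilde{\varepsilon}$ components (forced by $\GG_B = \emptyset$) yields a generically invertible linear system on the cross-cycle coefficients, identifying them; the within-cycle identifiability assumption for $C_i$ then identifies the remaining within-cycle coefficients.

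The main obstacle will be the 2-cycle inductive step. Since \cref{lem:k:cycle} tells us that a standalone 2-cycle is not identifiable, the additional information coming from shared cross-cycle parents must break the residual reciprocal-swap symmetry exhibited in \cref{ex:non:suff}. Making this precise requires carefully tracking the polynomial system describing $A$ when restricted to $C_i$ together with its common external parents, and showing that the shared-parents condition generically forces the candidate ``swap'' off-diagonal entries of $A$ to vanish. Here the two conditions on the right-hand side of the lemma interact most tightly, and the genericity of $(\Lambda, \varepsilon)$ plays a crucial role in excluding spurious algebraic solutions.
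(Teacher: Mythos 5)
Your proposal goes astray at the very first step: you have misidentified which graphical criterion the lemma invokes. ``The graphical criterion in Theorem 6.3'' is the non-intersecting directed path condition of \cref{thm:main:glob:cycle} (for every $v$ there exists $I_v\subseteq R_v$ with $|I_v|=|\pa(v)|$ and a non-intersecting system of directed paths from $I_v$ to $\pa(v)$), not the shared-parents condition for $2$-cycles, which belongs to \cref{thm:cycle} and is \emph{derived from} this lemma rather than assumed in it. This misreading has two consequences. First, your ``substantive'' necessity argument is aimed at the wrong target: necessity of the actual criterion is immediate, since \cref{thm:main:glob:cycle} states verbatim that identifiability implies it, and identifiability of each $\lambda_{C_i,v}$ follows from identifiability of $\Lambda$ by definition. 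The reciprocal-swap construction you sketch is (part of) the proof of \cref{thm:cycle}, and even there the claim that external-parent asymmetry ``can be retuned to absorb the swap'' requires a rank computation on submatrices of $B_\Lambda$, not a degrees-of-freedom count. Second, and more seriously, in the sufficiency direction you never use the path-system criterion at all, yet it is exactly what is needed: your assertion that independence ``yields a generically invertible linear system on the cross-cycle coefficients'' is unjustified, because the coefficient matrix $[(B_\Lambda)_{\pa(v),R_v}]^T$ of \cref{eq:lambda:system} has full column rank generically if and only if the required non-intersecting path systems exist (\cref{lem:gvl}); without that hypothesis the system can have a positive-dimensional solution set and the cross-cycle coefficients are not identified. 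Relatedly, your ``main obstacle'' (breaking the $2$-cycle swap symmetry via shared parents) is not an obstacle for this lemma, because within-cycle identifiability is a \emph{hypothesis} here, not something to be established.

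For comparison, the intended argument is shorter and needs no induction over cycles. The hypothesis that each $\lambda_{C_i,v}$ is identifiable forces $\tilde\lambda_{C_i\cap\pa(v),v}=\lambda_{C_i\cap\pa(v),v}$ for every $\tilde\Lambda$ in the fiber projection; substituting these values into $A=(I-\tilde\Lambda)^TB_\Lambda$ makes $A$ block lower-triangular with identity diagonal blocks $I_{k_i}$, so $a_{vv}=1$ for all $v$. This is precisely the property that fails for general cyclic graphs (cf.\ \cref{ex:non:suff}) and whose restoration lets the entire proof of \cref{lem:main} run unchanged: membership in the fiber projection becomes equivalent to solving \cref{eq:lambda:system}, and the graphical criterion of \cref{thm:main:glob:cycle} then guarantees, via \cref{lem:gvl}, that this linear system has the unique solution $\tilde\Lambda=\Lambda$. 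Note also that this reduction handles an arbitrary bidirected part $\Eb$, whereas your argument leans on $\GG_B=\emptyset$.
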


\begin{proof}
    If the matrix $\Lambda$ is identifiable, then by definition, all of its columns are also identifiable, and from Theorem 6.3 from the main paper, we know that the graphical condition is satisfied. We now prove that the reverse implication is also true.

    By plugging in $\lambda_{C_i, v}$ instead of $\tilde{\lambda}_{C_i, v}$ in Eq.~3.1 from the main paper, one can see that the matrix $A$ has the following shape
    
    \begin{equation*}
    \setlength{\arraycolsep}{6pt} % Adjust horizontal spacing
        \begin{bmatrix}
                I_{k_1} & 0 & \cdots & 0 \\[4pt]
                A_{C_2,C_1} & I_{k_2} & \cdots & 0 \\[4pt]
                \vdots & \vdots & \ddots & \vdots\\[4pt] 
                A_{C_n, C_1} & A_{C_n, C_2} & \cdots & I_{k_n} \\[4pt]
        \end{bmatrix}.
    \end{equation*}
    In particular, we have $a_{v,v} = 1$ for every $v\in V$. The same proof as in Lemma 3.3 from the main paper applies.
\end{proof}

\begin{proof}[Proof for Theorem 6.5 from the main paper]
    We know from Lemma 6.4 in the main paper that if $k_i\neq 2$ then $\lambda_{C_i, v}$ is identifiable. If the set $S = \{i\in [n]\::\: k_i = 2\}$ is empty then we know from \cref{lem:diag:cycle} and the fact that $\Eb = \emptyset$ that $\Lambda$ is identifiable. Otherwise, let $m = \min S$ and $C_m = \{v_1, v_2\}$.

    We know from Example 6.2 in the main paper that we can choose $(\tilde{\lambda}_{v_1v_2}, \tilde{\lambda}_{v_2v_1}) = (b_{v_2v_2}/b_{v_1v_2}, b_{v_1v_1}/b_{v_2v_1})$. If $m = 1$, letting $\tilde{\lambda}_{u, v} = \lambda_{u, v}$ for $v\notin C_1$, the matrix $A$ of Eq.~5.4 from the main paper will have the following shape
    \begin{equation*} 
        \setlength{\arraycolsep}{6pt}
        \begin{bmatrix}
                0 & \det((B_{\Lambda})_{\{v_1, v_2\}, \{v_1, v_2\}})/b_{v_2v_1} & 0 & \cdots & 0\\[4pt]
                \det((B_{\Lambda})_{\{v_1,v_2\}, \{v_1, v_2\}})/b_{v_1v_2} & 0 & 0 & \cdots & 0\\[4pt]                                
                0 & 0 & I_{k_2} & \cdots & 0 \\[4pt]
                \vdots & \vdots & \vdots & \ddots & \vdots\\[4pt]
                0 & 0 & 0 & \cdots & I_{k_n} \\[4pt]
        \end{bmatrix},
    \end{equation*}
    that satisfies all the constraints imposed by Assumption 1 from the main paper. Proving that $\Lambda$ is not identifiable in this case.

    If $m > 1$, we know that $\lambda_{C_i, v}$ is identifiable for every $i < m$, hence the matrix $A$ will be as follows
    \begin{equation*}
        \setlength{\arraycolsep}{6pt}
        \begin{bmatrix}
                I_{k_1} & 0 & \cdots & \cdots &\cdots & \cdots &  0\\[4pt]
                0 & I_{k_2} & \cdots & \cdots &\cdots & \cdots & 0\\[4pt]
                \vdots & \vdots & \ddots & \ddots & \ddots & \vdots & \vdots\\[4pt]
                A_{v_1, C_1} & A_{v_1, C_2} & \cdots & 0 & \det(B_{\Lambda})_{\{v_1, v_2\}, \{v_1, v_2\}}/b_{v_2v_1}  &\cdots & 0\\[4pt]
                A_{v_2, C_1} & A_{v_2, C_2} & \cdots & \det(B_{\Lambda})_{\{v_1,v_2\}, \{v_1, v_2\}}/b_{v_1v_2}  & 0 &\cdots & 0\\[4pt]
                \vdots & \vdots & \vdots & \ddots & \ddots & \ddots& \vdots\\[4pt]
                A_{C_n, C_1} & A_{C_n, C_2} & \cdots & \cdots &\cdots & \cdots & A_{C_n, C_n}\\[4pt]
        \end{bmatrix}.
    \end{equation*}
    This implies that in order for the matrix $A$ to satisfy the conditions in Assumption 1 from the main paper for every pair of nodes, we must have $A_{C_m, \an(C_m)\setminus C_m} = 0$. This might happen if and only if $A_{C_m, \pa^*(C_m)} = 0$, where $\pa^*(C_m) = \pa(C_m)\setminus C_m$. Writing $A_{v_1, \pa^*(C_m)} = 0$ explicitly we get to the following linear system
    \begin{equation}
    \label{eq:cycle:system}
        (B_\Lambda)_{\pa(v_1)\setminus\{v_2\}, \pa^*(C_m)}^T\cdot\Tilde{\lambda}_{\pa(v_1)\setminus\{v_2\}, v_1} = (B_\Lambda)_{v_1, \pa^*(C_m)}^T - \frac{1}{\lambda_{v_2v_1}}(B_\Lambda)_{v_2, \pa^*(C_m)}^T.
    \end{equation}    
    We know that the system 
    $$
    (B_\Lambda)_{\pa(v_1)\setminus\{v_2\}, \pa^*(C_m)}^T\cdot\Tilde{\lambda}_{\pa(v_1)\setminus\{v_2\}, v_1} = (B_\Lambda)_{v_1, \pa^*(C_m)}^T
    $$
    has always a solution given by $\lambda_{\pa(v_1)\setminus\{v_2\}}$. Hence, the system in \cref{eq:cycle:system} has a solution if and only if the system 
    \begin{equation}
        \label{eq:cycle:system:2}
        (B_\Lambda)_{\pa(v_1)\setminus\{v_2\}, \pa^*(C_m)}^T\cdot\Tilde{\lambda}_{\pa(v_1)\setminus\{v_2\}, v_1} = (B_\Lambda)_{v_2, \pa^*(C_m)}^T
    \end{equation}    
    has one.
    Using $B_\Lambda = (I-\Lambda)^{-T}$ and $\lambda_{v_1, \pa^*(C_m) = 0}$, we can write
    $$(B_\Lambda)_{v_2, \pa^*(C_m)}^T = (B_\Lambda)_{\pa(v_2)\setminus\{v_1\}, \pa^*(C_m)}^T\cdot \lambda_{\pa(v_2)\setminus\{v_1\}, v_2}.$$
    This implies that that the system in \cref{eq:cycle:system:2} has solutions for a generic choice of $\lambda_{\pa(v_2)\setminus\{v_1\}, v_2}$ if and only if the row space of $(B_\Lambda)_{\pa(v_1)\setminus\{v_2\}, \pa^*(C_m)}$ contains the row space of $(B_\Lambda)_{\pa(v_2)\setminus\{v_1\}, \pa^*(C_m)}$. That is, if $$\rank((B_\Lambda)_{\pa(v_1)\setminus\{v_2\}, \pa^*(C_m)}) = \rank((B_\Lambda)_{\pa^*(C_m), \pa^*(C_m)}).$$ From Lemma 3.5 in the main paper, one can see that this is possible if and only if the graphical condition of the theorem is satisfied.
\end{proof}

\subsection{Proofs for Section 7.2 from the Main Paper}
\label{app:subsec:estimation}
In the sequel, we denote the value of the objective function in the optimization problem of Eq.~7.1 in the main paper for a matrix \(\tilde{\Lambda} \in \mathbb{R}^{\GG_D}\) by \(\mathrm{Obj}(\tilde{\Lambda})\), and its finite-sample counterpart by \(\widehat{\mathrm{Obj}}_n(\tilde{\Lambda})\).

\begin{proof}[Proof for Lemma 7.1 from the main paper]
By definition of the map $\Phi_\GG$ we have $(I-\Lambda)^T\cdot X=\varepsilon$, that implies $\mathrm{Obj}(\Lambda)=0$. Hence, $\tilde{\Lambda}$ minimizes Eq.~7.1 from the main paper if and only if $\mathrm{Obj}(\Tilde{\Lambda})=0$, that is if and only if $$\tilde{\varepsilon}=(I-\tilde{\Lambda})\cdot X=(I-\tilde{\Lambda})B_{\Lambda}\cdot\varepsilon=A\cdot\varepsilon\in\mathcal{M}(\GG_B).$$
We know from Lemma 3.3 from the main paper that this is the case if and only if $\tilde{\Lambda}$ satisfies Eq.~3.3 from the main paper.
\end{proof}

\begin{proof}[Proof for theorem 7.2 from the main paper]
Let
\[
\mathrm{H}(R^{\Lambda}_u, R^{\Lambda}_v) := \mathrm{HSIC} \big([(I - \Lambda) \cdot X]_u, [(I - \Lambda) \cdot X]_v; \mathrm{k}_u, \mathrm{k}_v \big),
\]
and let $\mathrm{\hat{H}}_n(R^{\Lambda}_u, R^{\Lambda}_v)$ be its finite-sample counterpart.
From \citet[Cor.~15]{mooij:2016}, we know that for every $\Lambda \in \Theta$, the estimator $\hat{\mathrm{H}}_n(R^{\Lambda}_u, R^{\Lambda}_v)$ converges in probability to $\mathrm{H}(R^{\Lambda}_u, R^{\Lambda}_v)$. Moreover, for every $\Tilde{\Lambda}^1, \Tilde{\Lambda}^2 \in \R^{\GG_D}$ and $u, v \in \GG$, we have
\begin{equation}
\begin{aligned}
\label{eq:hsic:ineq}
&\hat{\mathrm{H}}_n(R^{\tilde{\Lambda}^1}_u, R^{\tilde{\Lambda}^1}_v) - \hat{\mathrm{H}}_n(R^{\tilde{\Lambda}^2}_u, R^{\tilde{\Lambda}^2}_v) \\
= & \hat{\mathrm{H}}_n(R^{\tilde{\Lambda}^1}_u, R^{\tilde{\Lambda}^1}_v) - 
\hat{\mathrm{H}}_n(R^{\tilde{\Lambda}^2}_u, R^{\tilde{\Lambda}^1}_v) + 
\hat{\mathrm{H}}_n(R^{\tilde{\Lambda}^2}_u, R^{\tilde{\Lambda}^1}_v)-
\hat{\mathrm{H}}_n(R^{\tilde{\Lambda}^2}_u, R^{\tilde{\Lambda}^2}_v) \\
\leq & \frac{32\lambda C}{\sqrt{n}}\left(||R^{\tilde{\Lambda}^1}_u - R^{\tilde{\Lambda}^2}_u|| + ||R^{\tilde{\Lambda}^1}_v - R^{\tilde{\Lambda}^2}_v||\right) \\
\leq & \frac{32\lambda C}{\sqrt{n}}\left(||X_{\pa(u)}||||\tilde{\Lambda}^1_{u, \pa(u)} - \tilde{\Lambda}^2_{u, \pa(u)}|| + ||X_{\pa(v)}||||\tilde{\Lambda}^1_{v, \pa(v)} - \tilde{\Lambda}^2_{v, \pa(v)}||\right)\\
\leq &
64\lambda C M||\tilde{\Lambda}^1-\tilde{\Lambda}^2||,
\end{aligned}
\end{equation}
where $\lambda$ and $C$ are the Lipschitz constant and upper bound for $k_u, k_v$, respectively, and $M$ is an upper bound for $||X||$, which exists since we assumed $\varepsilon$ to have compact support. In \eqref{eq:hsic:ineq}, we used \citet[Lemma 16]{mooij:2016}, the Cauchy–Schwarz inequality, and the fact that the support of $\varepsilon$ is compact.

From \citet[Prop.~2]{pfister:2018}, it follows that
\begin{equation}
    \begin{aligned}
        \mathrm{H}_{u,v}(\tilde{\Lambda}) :=& \mathrm{H}(R^{{\tilde{\Lambda}}}_u, R^{\tilde{\Lambda}}_v) \\
        =& \mathbb{E}\left[\mathrm{k}_u\left(R^{\tilde{\Lambda}}_u[1], R^{\tilde{\Lambda}}_u[2]\right)\mathrm{k}_v\left(R^{\tilde{\Lambda}}_v[1], R^{\tilde{\Lambda}}_v[2]\right)\right] \\
        +& \mathbb{E}\left[\mathrm{k}_u\left(R^{\tilde{\Lambda}}_u[1], R^{\tilde{\Lambda}}_u[2]\right)\right]\mathbb{E}\left[\mathrm{k}_v\left(R^{\tilde{\Lambda}}_v[1], R^{\tilde{\Lambda}}_v[2]\right)\right] \\
         -&2\mathbb{E}\left[\mathrm{k}_u\left(R^{\tilde{\Lambda}}_u[1], R^{\tilde{\Lambda}}_u[2]\right)\mathrm{k}_v\left(R^{\tilde{\Lambda}}_v[1], R^{\tilde{\Lambda}}_v[3]\right)\right],
    \end{aligned}
\end{equation}
where $R^{\tilde{\Lambda}}_u[1], R^{\tilde{\Lambda}}_u[2]$ are i.i.d. copies of $R^{\tilde{\Lambda}}_u$, and $R^{\tilde{\Lambda}}_v[1], R^{\tilde{\Lambda}}_v[2], R^{\tilde{\Lambda}}_v[3]$ are i.i.d. copies of $R^{\tilde{\Lambda}}_v$. Since the mappings $\Lambda \mapsto R^{\tilde{\Lambda}}_{u/v}$ and the kernels $\mathrm{k}_{u/v}$ are continuously differentiable, and all kernels have bounded derivatives, the dominated convergence theorem implies that $\mathrm{H}_{u,v}(\Lambda)$ is continuously differentiable, and hence Lipschitz on $\Theta$.

These results allow us to apply \citet[Cor.~2.2]{newey:1991}, which implies that for all $\epsilon > 0$,
\[
\lim_{n\to\infty}\mathbb{P} \left(\max_{\tilde{\Lambda} \in \Theta} \left| \hat{\mathrm{H}}_{n,u,v}(\tilde{\Lambda}) - \mathrm{H}_{u,v}(\tilde{\Lambda}) \right| > \epsilon \right) = 0.
\]
Since \(\widehat{\mathrm{Obj}}_n(\tilde{\Lambda})\) can be expressed as a finite sum of terms \(\hat{\mathrm{H}}_{n,u,v}\), uniform convergence holds:
\begin{equation}
\label{eq:conv:sum}
    \lim_{n \to \infty} \mathbb{P} \left( \max_{\tilde{\Lambda} \in \Theta} \left| \widehat{\mathrm{Obj}}_n(\tilde{\Lambda}) - \mathrm{Obj}(\tilde{\Lambda}) \right| > \epsilon \right) = 0.
\end{equation}

Define, for any \(\epsilon > 0\),
\[
\mathbb{B}_\epsilon(\mathrm{Proj}) := \{\Lambda_1 \in \mathbb{R}^\GG : \exists \Lambda_0 \in \mathrm{Proj} \text{ such that } \|\Lambda_1 - \Lambda_0\| < \epsilon \}.
\]
Set 
\[
\eta_\epsilon := \min_{\Lambda_1 \in \Theta \setminus \mathbb{B}_\epsilon(\mathrm{Proj})} \mathrm{Obj}(\Lambda_1).
\]
By compactness of \(\Theta\) and the fact that the kernels used are characteristic, \(\eta_\epsilon\) is strictly positive, well defined, and decreases to zero as \(\epsilon \to 0\).

We then have
\[
\begin{aligned}
\mathbb{P}&\Big(\min_{\Lambda_0 \in \Theta \cap \mathrm{Proj}} \|\tilde{\Lambda}_n - \Lambda_0\| > \epsilon \Big)
\leq \mathbb{P}\Big(\max_{\Lambda_1 \in \Theta \setminus \mathbb{B}_\epsilon(\mathrm{Proj})} (\widehat{\mathrm{Obj}}_n(\Lambda) - \widehat{\mathrm{Obj}}_n(\Lambda_1)) \geq 0\Big) \\
\leq \mathbb{P}&\Big(\max_{\Lambda_1 \in \Theta \setminus \mathbb{B}_\epsilon(\mathrm{Proj})} |\widehat{\mathrm{Obj}}_n(\Lambda) - \widehat{\mathrm{Obj}}_n(\Lambda_1) + \mathrm{Obj}(\Lambda_1) - \mathrm{Obj}(\Lambda)| \geq \eta_\epsilon \Big) \\
\leq \mathbb{P}&\Big(\max_{\Lambda_1 \in \Theta \setminus \mathbb{B}_\epsilon(\mathrm{Proj})} |\widehat{\mathrm{Obj}}_n(\Lambda_1) - \mathrm{Obj}(\Lambda_1)| + |\widehat{\mathrm{Obj}}_n(\Lambda) - \mathrm{Obj}(\Lambda)| \geq \eta_\epsilon \Big) \\
\leq \mathbb{P}&\Big(\max_{\Lambda_1 \in \Theta \setminus \mathbb{B}_\epsilon(\mathrm{Proj})} |\widehat{\mathrm{Obj}}_n(\Lambda_1) - \mathrm{Obj}(\Lambda_1)| \geq \tfrac{\eta_\epsilon}{2} \Big) + \mathbb{P}\Big(|\widehat{\mathrm{Obj}}_n(\Lambda) - \mathrm{Obj}(\Lambda)| \geq \tfrac{\eta_\epsilon}{2}\Big),
\end{aligned}
\]
where the first summand converges to zero due to uniform convergence \eqref{eq:conv:sum}, and the second converges to zero by consistency of HSIC.

Note that the inequalities use:
\begin{enumerate}
\item[(i)] The first inequality uses the definition of \(\tilde{\Lambda}_n\) as the empirical minimizer. Indeed, if the distance to \(\Theta \cap \mathrm{Proj}\) is larger than \(\epsilon\), it must mean that there is a point outside \(\mathbb{B}_\epsilon(\mathrm{Proj})\) for which the value of the objective function is smaller than \(\widehat{\mathrm{Obj}}_n(\Lambda)\),
\item[(ii)] the definition of \(\eta_\epsilon\) as the minimal population risk away from the \(\epsilon\)-neighborhood of \(\mathrm{Proj}\),
\item[(iii)] and the union bound to separate probabilities.
\end{enumerate}
This completes the argument.

\end{proof}
\section{Comparison to Identification by OICA}
\label{app:oica}
\citet{tramontano:2024:icml} study lvLiNGAM models and provide a graphical criterion for causal effect identification that is related to the one we propose in Theorem 3.5 of the main paper. Here, we provide a complete comparison of the two approaches. We start by recalling the definition of lvLiNGAMs.

Consider a DAG $\GG_\LL = (V_\LL = V\dot\cup L, E)$, where $V$ is the set of observed variables, $L$ is the set of latent variables, and all nodes in $L$ are source nodes. The lvLiNGAM model associated to $\GG_\LL$ is the set of distributions of random vectors 
$X_V = B\cdot\varepsilon_{V_\LL}$,
where $B = (B_{\Lambda_{\LL}})_{V, V_\LL}$ for a matrix $\Lambda_\LL\in\R^{\GG_\LL}$ and $\varepsilon_{V_\LL}\in\MM(|V_\LL|)$. We denote this set of distributions $\MM_{lv}(\GG_\LL)$.

Given a latent variable graph $\GG_\LL$, we can always consider its latent projection graph $\GG = (V, D, B)$, which will be an ADMG with vertex set $V$, where $D$ is the restriction of $E$ to the observed nodes, and $B=\{(u,v)\:: \an_\LL(u)\cap\an_\LL(v)\cap L\neq\emptyset \}$. By construction, we have $\MM_{lv}(\GG_\LL)\subset\MM(\GG)$; hence, one would expect that assuming the observed distribution belongs to $\MM_{lv}(\GG_\LL)$ allows one to identify more edges. The next proposition proves that this intuition is indeed correct. For the convenience of the reader, we report here a version of \citet[Thm.~3.9]{tramontano:2024:icml} translated into our notation.

\begin{theorem}{\cite[Thm.~3.9]{tramontano:2024:icml}}
\label{thm:dce:known:graph}
{
    Let $\GG_\LL = (V_\LL = V\dot\cup L, E)$ be a latent graph DAG. Consider any two observed variables $u$ and $v$. The direct causal effect of $u$ on $v$ is \emph{generically} identifiable in $\MM_{lv}(\GG_\LL)$ if and only if there are no pairs $(w,l)\in V\times L$ satisfying all of the following conditions:
        \begin{align}
            \label{eq:dce:cond}
            &\de(w)\cap V=\de(l)\cap V,\\
            \label{eq:dce:cond:1}
            &v\in\ch(l),\\
            \label{eq:dce:cond:2}
            &w\in\ch(u)\cup\{u\},\\                       
            \label{eq:dce:cond:3}
            &\ch(l)\setminus\ch(w_1)=\emptyset,\qquad\forall w_1\in\pa(w)\cup\{w\}.
        \end{align}}
\end{theorem}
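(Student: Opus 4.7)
The plan is to reduce the identification problem in $\MM_{lv}(\GG_\LL)$ to overcomplete ICA. Since $X_V = B\,\varepsilon_{V_\LL}$ with $B = (B_{\Lambda_\LL})_{V, V_\LL}$ and $\varepsilon_{V_\LL}$ a vector of independent non-Gaussian sources, the OICA identifiability theorem of \citet{eriksson:2004} implies that for generic $\Lambda_\LL$ the mixing matrix $B$ is determined by the distribution of $X_V$ up to column permutation and rescaling; the pairwise non-proportionality of the columns of $B$ required for OICA holds outside a Lebesgue null set of parameters, as it amounts to the non-vanishing of certain polynomials in the entries of $\Lambda_\LL$. The identifiability of $\lambda_{uv}$ is then equivalent to the uniqueness, modulo admissible rescaling, of the column of $B$ corresponding to $\varepsilon_u$, so the task reduces to classifying which column permutations remain consistent with the structural constraints of $\GG_\LL$.

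For the sufficiency direction, I would normalize each column indexed by an observed $w \in V$ so that its $w$-th entry equals $1$; this matches the graph-consistent convention $(B_{\Lambda_\LL})_{w,w} = 1$ from \cref{lem:B:matrix}, and fixes the support of the column indexed by any source $s \in V_\LL$ as $\de(s) \cap V$. Hence, the only admissible column swaps are between sources sharing the same observed descendants. Assuming no pair $(w,l)$ satisfies \eqref{eq:dce:cond}--\eqref{eq:dce:cond:3}, a case analysis would show that every graph-consistent swap either leaves $\lambda_{uv}$ unchanged or necessitates introducing new directed edges into $\GG_\LL$ to absorb the induced rescalings, yielding a contradiction with $\tilde\Lambda_\LL \in \R^{\GG_\LL}$.

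For necessity, suppose some pair $(w, l)$ satisfies \eqref{eq:dce:cond}--\eqref{eq:dce:cond:3}. I would construct an explicit alternative $(\tilde\Lambda_\LL, \tilde\varepsilon_{V_\LL}) \in \R^{\GG_\LL} \times \MM(|V_\LL|)$ yielding the same distribution of $X_V$ but with $\tilde\lambda_{uv} \neq \lambda_{uv}$. Condition \eqref{eq:dce:cond} makes the columns of $B$ for $\varepsilon_w$ and $\eta_l$ share identical support, enabling a swap; condition \eqref{eq:dce:cond:3} permits the residual rescalings from the swap to be absorbed into already-existing edges of $\GG_\LL$, introducing no new directed edges; and conditions \eqref{eq:dce:cond:1}--\eqref{eq:dce:cond:2} guarantee that the swap perturbs the $(v, u)$-entry of $(I - \tilde\Lambda_{VV})^{-T}$, thereby altering the recovered direct effect.

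The principal obstacle will be the necessity direction: edge-by-edge bookkeeping of how the column swap propagates rescalings through $\GG_\LL$, verifying that condition \eqref{eq:dce:cond:3} is sharply what is needed so that the rescaled coefficients remain in the sparsity pattern of $\GG_\LL$. A secondary issue is making precise the Lebesgue-null exceptional set in the OICA step, which reduces to verifying that pairwise proportionality of columns of $B$ is a non-trivial polynomial condition on the entries of $\Lambda_\LL$.
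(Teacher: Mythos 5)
You should first be aware that the paper does not prove this statement at all: it is imported verbatim (``For the convenience of the reader, we report here a version of \citet[Thm.~3.9]{tramontano:2024:icml} translated into our notation''), and the only in-paper argument involving it is the subsequent lemma comparing $\MM_{lv}(\GG_\LL)$ with $\MM(\GG)$. So there is no in-paper proof to compare against; your OICA-based route is, at least in outline, the route of the cited source, which the main text explicitly describes as ``a graphical criterion based on OICA.''

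Judged on its own terms, your plan has the right shape but contains concrete gaps. First, the claim that identifiability of $\lambda_{uv}$ is ``equivalent to the uniqueness, modulo admissible rescaling, of the column of $B$ corresponding to $\varepsilon_u$'' is not correct: $\Lambda_{VV}$ is recovered by inverting the entire observed block $B_{V,V}$, so $\lambda_{uv}$ depends on all observed columns; condition \eqref{eq:dce:cond:2} ($w\in\ch(u)\cup\{u\}$) exists precisely because misassigning the column of a \emph{child} of $u$ can corrupt $\lambda_{uv}$. Your later case analysis implicitly acknowledges this, but the framing should be fixed. Second, the appeal to generic pairwise non-proportionality of the columns of $B$ is unsound: for some graphs the non-proportionality polynomial vanishes identically (e.g.\ a latent node $l$ with a single observed child $c$ forces the $l$-column to be a scalar multiple of the $c$-column for \emph{every} parameter value), and such degenerate configurations are exactly among the cases the criterion must adjudicate, so they cannot be discarded as a null set. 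Third, the two substantive directions are both deferred (``a case analysis would show\dots'', ``I would construct\dots''); in particular, for sufficiency you consider only transpositions, and while one can argue that support constraints plus acyclicity force any admissible permutation to act within classes of equal observed-descendant sets (and that two distinct observed columns can never be confused, since $w_1\in\de(w_2)$ and $w_2\in\de(w_1)$ forces $w_1=w_2$ in a DAG), longer cycles mixing several latent columns with an observed one, and the interaction of the induced rescalings with the sparsity pattern of $\GG_\LL$, are exactly where the work lies and where condition \eqref{eq:dce:cond:3} must be shown to be sharp. As it stands the proposal is a credible plan consistent with the cited source's method, not a proof.
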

\begin{lemma}
{Let $\GG_\LL = (V_\LL = V\dot\cup L, E)$ be a latent graph DAG, and let $\GG$ be its latent projection graph. If the causal effect from $u$ to $v$ is generically identifiable in $\MM(\GG)$, then it is generically identifiable in $\MM_{lv}(\GG_\LL)$. Moreover, there are latent variable DAGs $\GG_{\LL}$ for which some causal effect is identifiable in $\MM_{lv}(\GG_\LL)$ while the corresponding causal effect is not identifiable in $\MM(\GG)$.
}
\end{lemma}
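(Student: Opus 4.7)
The plan for the first implication is to exploit the inclusion $\MM_{lv}(\GG_\LL) \subseteq \MM(\GG)$. Writing any $X \in \MM_{lv}(\GG_\LL)$ in structural form, I would split $\Lambda_\LL$ into a submatrix $\Lambda \in \R^{\GG_D}$ indexing the directed edges among observed variables and a matrix $H$ indexing the latent-to-observed coefficients, so that $X = (I-\Lambda)^{-T}\varepsilon$ with $\varepsilon = H^T\eta_L + \eta_V$. This $\varepsilon$ is a sparse factor model as in \cref{subsubsec:lin:lat} with respect to a latent factor graph for $\GG_B$ derived from $\GG_\LL$, so $\varepsilon \in \MM(\GG_B)$. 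For a generic choice of lvLiNGAM parameters, $\Lambda$ avoids any prescribed Lebesgue null set of $\R^{\GG_D}$ because it is simply a subset of the free coordinates of $\Lambda_\LL$, and the induced $\varepsilon$ satisfies \cref{ass:error}---either directly via \cref{thm:latent:proj} when its clique condition holds, or more generally via the cumulant argument of \cref{thm:gen:cond}. The hypothesis of generic identifiability in $\MM(\GG)$ then applies to this $X$ and yields identifiability in $\MM_{lv}(\GG_\LL)$.

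The main obstacle in this step is the sparse-latent regime, where the clique condition of \cref{thm:latent:proj} fails; as Example 5.4(i) illustrates, \cref{ass:error} may then fail for an entire positive-dimensional family of lvLiNGAM parameters. In such cases one cannot simply transfer genericity across the model inclusion, and a bespoke Darmois--Skitovich analysis tailored to the sparsity pattern of $\GG_\LL$ would be required. I expect this to be the technical heart of the argument, and it is precisely why \cref{sec:gen:ass} develops two different justifications for \cref{ass:error}.

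For the second claim I would exhibit the bow graph: $\GG_\LL$ with $V=\{v_1,v_2\}$, $L=\{l_1\}$, and edges $l_1 \to v_1$, $l_1 \to v_2$, $v_1 \to v_2$. Its latent projection $\GG$ has $\Ed=\{v_1\to v_2\}$ and $\Eb=\{v_1\leftrightarrow v_2\}$. Since $\Sib(v_2) = \{v_1,v_2\}$, one has $R_{v_2} = \emptyset$ and hence $r^{v_2}_{\pa(v_2)} = 0 < 1 = |\pa(v_2)|$, so by \cref{thm:main:loc} the coefficient $\lambda_{v_1 v_2}$ is not generically identifiable in $\MM(\GG)$. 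Checking \cref{thm:dce:known:graph} for the effect $v_1 \to v_2$ in $\GG_\LL$, the only candidate pairs are $(v_1,l_1)$ and $(v_2,l_1)$: condition~\eqref{eq:dce:cond:3} fails at $(v_1,l_1)$ because $\ch(l_1)\setminus\ch(v_1) = \{v_1,v_2\}\setminus\{v_2\} = \{v_1\}\neq\emptyset$, and condition~\eqref{eq:dce:cond} fails at $(v_2,l_1)$ because $\de(v_2)\cap V = \{v_2\}\neq\{v_1,v_2\}=\de(l_1)\cap V$. Hence $\lambda_{v_1 v_2}$ is identifiable in $\MM_{lv}(\GG_\LL)$ even though it is not identifiable in $\MM(\GG)$.
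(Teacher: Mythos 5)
Your handling of the second claim is correct, and your separating example is actually simpler than the paper's: the bow graph $l_1\to v_1$, $l_1\to v_2$, $v_1\to v_2$ does the job, your computation $R_{v_2}=\emptyset$ (hence non-identifiability in $\MM(\GG)$ by \cref{thm:main:loc}) is right, and your check that no pair $(w,l)$ satisfies all of \eqref{eq:dce:cond}--\eqref{eq:dce:cond:3} (hence identifiability in $\MM_{lv}(\GG_\LL)$ by \cref{thm:dce:known:graph}) is accurate. The paper uses a four-node latent graph instead, but your example is equally valid.

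The first implication, however, has a genuine gap, which you locate but do not close. Your route via the inclusion $\MM_{lv}(\GG_\LL)\subseteq\MM(\GG)$ requires that a generic lvLiNGAM parameter choice yield a pair $(\Lambda,\varepsilon)$ that is generic \emph{in the sense of the main paper}, in particular with $\varepsilon$ satisfying \cref{ass:error}. This can fail on the entire submodel: \cref{ex:faith:viol}(i) exhibits a latent factor graph for which \cref{ass:error} is violated for \emph{every} mixing matrix $H$, and precisely by matrices of the structured form of \cref{lem:A:matrix}. Appealing to \cref{thm:gen:cond} does not repair this, because that theorem makes the exceptional set Lebesgue-null only in the ambient truncated-cumulant space $\mathcal{M}^{\leq k}(\GG_B)$, whereas the cumulants realized by a sparse factor model form a finitely parametrized, generally lower-dimensional subset that may sit entirely inside that null set (\cref{ex:faith:viol:2} makes the same point for linear mixes of independent sources). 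So the genericity transfer is simply unavailable, and no ``bespoke Darmois--Skitovich analysis'' is what the paper does either. Instead, the paper proves the contrapositive by comparing the two \emph{graphical} criteria directly: if the effect is not identifiable in $\MM_{lv}(\GG_\LL)$, \cref{thm:dce:known:graph} provides a pair $(w,l)$ satisfying \eqref{eq:dce:cond}--\eqref{eq:dce:cond:3}; conditions \eqref{eq:dce:cond:1} and \eqref{eq:dce:cond:3} force $\pa(w)\subseteq\pa(v)$ and $w\notin R_v$ in the latent projection, and any non-intersecting path system from $R_v$ into $\pa(v)$ that uses $u$ can then be rerouted into one of the same size avoiding $u$ (dropping the terminal edge $z\to w$, or appending $u\to w$), so that $r^v_{\pa(v)\setminus\{u\}}=r^v_{\pa(v)}$ and \cref{thm:main:loc} gives non-identifiability in $\MM(\GG)$. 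You would need to replace your genericity-transfer step by such a combinatorial comparison of the two criteria.
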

\begin{proof}
We first prove that if the causal effect from $u$ to $v$ is not identifiable in $\MM_{lv}(\GG_\LL)$ then it is not identifiable in $\MM(\GG)$. According to Theorem 3.5 from the main paper, we need to show that if the conditions in \cref{eq:dce:cond,eq:dce:cond:1,eq:dce:cond:2,eq:dce:cond:3} are satisfied for a pair $(w,l)\in V\times L$, then  $r^v_{\pa(v)\setminus\{u\}} = r^v_{\pa(v)}$. In particular, we are going to show that to any $\Pi\in\tilde{\mathcal{P}}(I, P)$ with $I\subseteq R_v$, and $u\in P\subseteq \pa(v)$ we can associate a $\hat{\Pi}\in\tilde{\mathcal{P}}(I, P^u),$ where $P^u\subseteq \pa(v)\setminus\{u\}$.
Using \cref{eq:dce:cond:1,eq:dce:cond:3} we have that $\pa(w)\subseteq\pa(v)$ and $w\notin R_v$. This implies that if $w\in P$, then  there is $z\in\pa(w)$ such that $z\to w\in\Pi$, and $\hat{\Pi} = \Pi\setminus\{z\to w\}\in\tilde{\PP}(I, P\setminus\{w\}\cup\{z\})$. This concludes the proof in the case $u = w$. If instead $w\in\ch(u)$, one can construct $\Pi' = \hat{\Pi}\cup\{u\to w\}\in\tilde{\PP}(I, P\setminus\{u\}\cup\{w, z\})$, which concludes the first part of the proof.
\begin{figure}[ht]
        \centering
        \begin{tikzpicture}[scale = 0.8]
        \begin{scope}
        \node at (-4,0) {${\GG_\LL}:$};
 
         \node[draw, circle, inner sep=4pt] (I) at (-3,0) {$v_1$};
        
        \node[draw, circle, inner sep=4pt] (T1) at (1,1.5) {$v_2$};
        \node[draw, circle, inner sep=4pt] (T2) at (1,-1.5) {$v_3$};

        \node[draw, circle, inner sep=4pt] (Y) at (5,0) {$v_4$};
        
        \node[draw, circle, fill=gray!40, inner sep=4pt] (H0) at (-1.5,2) {$l_0$};
        \node[draw, circle, fill=gray!40, inner sep=4pt] (H1) at (3.5,2) {$l_1$};
        \node[draw, circle, fill=gray!40, inner sep=4pt] (H2) at (3.5,-2) {$l_2$};

        \draw[->] (I) to (T1);
        \draw[->] (I) to (T2);
        
        \draw[->] (T1) to (Y);
        \draw[->] (T2) to (Y);

        \draw[black, ->] (H0) to (T1);
        \draw[black, ->] (H0) to (I);
        
        \draw[black, ->] (H1) to (T1);
        \draw[black, ->] (H1) to (Y);

        \draw[black, ->] (H2) to (T2);
        \draw[black, ->] (H2) to (Y);
        \end{scope}
        
        \begin{scope}[xshift = 10cm]
        \node at (-2,0) {${\GG}:$};
         \node[draw, circle, inner sep=4pt] (I) at (-1,0) {$v_1$};
        
        \node[draw, circle, inner sep=4pt] (T1) at (1,1) {$v_2$};
        \node[draw, circle, inner sep=4pt] (T2) at (1,-1) {$v_3$};

        \node[draw, circle, inner sep=4pt] (Y) at (4,0) {$v_4$};
        
        \draw[->] (I) to (T1);
        \draw[->] (I) to (T2);
        
        \draw[->] (T1) to (Y);
        \draw[->] (T2) to (Y);
        
        \draw[dashed][<->] (T1) to[bend right=30] (I);
        \draw[dashed][<->] (T1) to[bend left=30] (Y);
        \draw[dashed][<->] (T2) to[bend right=30] (Y);
        \end{scope}
  
        \end{tikzpicture}
        \caption{An example of a latent variable graph for which all the parameters are identifiable, while some of them are not in the corresponding latent projection graph.}
        \label{fig:UIV}
    \end{figure}
For the second part of the proof, we construct an explicit example of a latent variable graph $\GG_\LL$ in which the causal effect is generically identifiable in $\MM(\GG_\LL)$, but not in its corresponding latent projection. To this end, consider the latent variable graph shown in \cref{fig:UIV}, and focus on the nodes $v_2$ and $v_4$. The only latent parent of $v_4$ is the node $l_1$, and the only observed node $w$ for which \cref{eq:dce:cond} holds is $v_2$. Hence, we must have $(w, l) = (v_2, l_1)$. However, since $v_1 \in \pa(v_2)$ and $v_4 \in \ch(l_1) \setminus \ch(v_1)$, condition \cref{eq:dce:cond:3} is not satisfied. This shows that the causal effect from $v_2$ to $v_4$ is generically identifiable in $\MM(\GG_\LL)$. The proof that the corresponding causal effect is not identifiable in the latent projection graph is given in Example 4.1 of the main paper.
\end{proof}

\newpage
\section{Details for the Experiments in the Main Paper}
\label{app:simulations}
\subsection{Data Generation}
\label{app:data:gen}
\subsubsection{Identification}
For fixed sample size $p$ and edge density $d$, the ADMGs for the experiments in Section 7.1 in the main paper are generated as follows:
\begin{enumerate}
    \item{ compute $e:= \lfloor dp(p-1) \rfloor$,}
    \item sample a random integer $e_d$ in $\{1,\dots,e\}$,
    \item let $\GG_D$ be a randomly generated DAG with $p$ nodes and $e_d$ edges,
    \item let $\GG_B$ be a randomly generated undirected graph with $p$ nodes and $e-e_d$ edges,
    \item define $\GG$ as $([p], \GG_D, \GG_B)$.
\end{enumerate}

\subsubsection{Estimation}
The data for the experiments in Section 7.2 in the main paper are generated as follows: 
\begin{enumerate}
    \item for every $v\in V$ we sample $\eta_v$ from a Laplace distribution with mean zero and standard deviation $s_v\sim \text{U}(0.2, 3)$,
    \item for every $u\xleftrightarrow{}v\in\GG_B$ we sample two independent random vectors $\eta^1_{u,v}, \eta^2_{u,v}$, again with standard deviations $s^1_{u,v}, s^2_{u,v}\sim \text{U}(0.2, 3)$,
    \item for every $v\in V$, we have $\varepsilon_v = \eta_v + \sum_{u\xleftrightarrow{}v\in\GG_B}(w^{v,1}_{uv}\eta^1_{uv} + w^{v,2}_{uv}\eta^2_{uv})$, where $w^{v,1}_{uv}, w^{v,2}_{uv}\sim \text{U}(-5,5)$,
    \item for every $u\to v\in\GG$, $\lambda_{uv}\sim \text{U}(-5,5)$, and $X_v = \sum_{u\in\pa(v)}\lambda_{uv} X_u + \varepsilon_v$.
\end{enumerate}

\subsection{Additional Experiments}
\label{app:add:exp}
\cref{fig:laplace_rbf} shows the performance of our methods when using the RBF kernel, with bandwidth computed using the median heuristic. We see that, compared to the polynomial kernel, this choice seems to suffer more from the non-convexity of the objective function. In contrast, it provides a better estimate when initialized at the true parameter value.

\begin{figure}[ht]
    \centering
    \includegraphics[width=\linewidth]{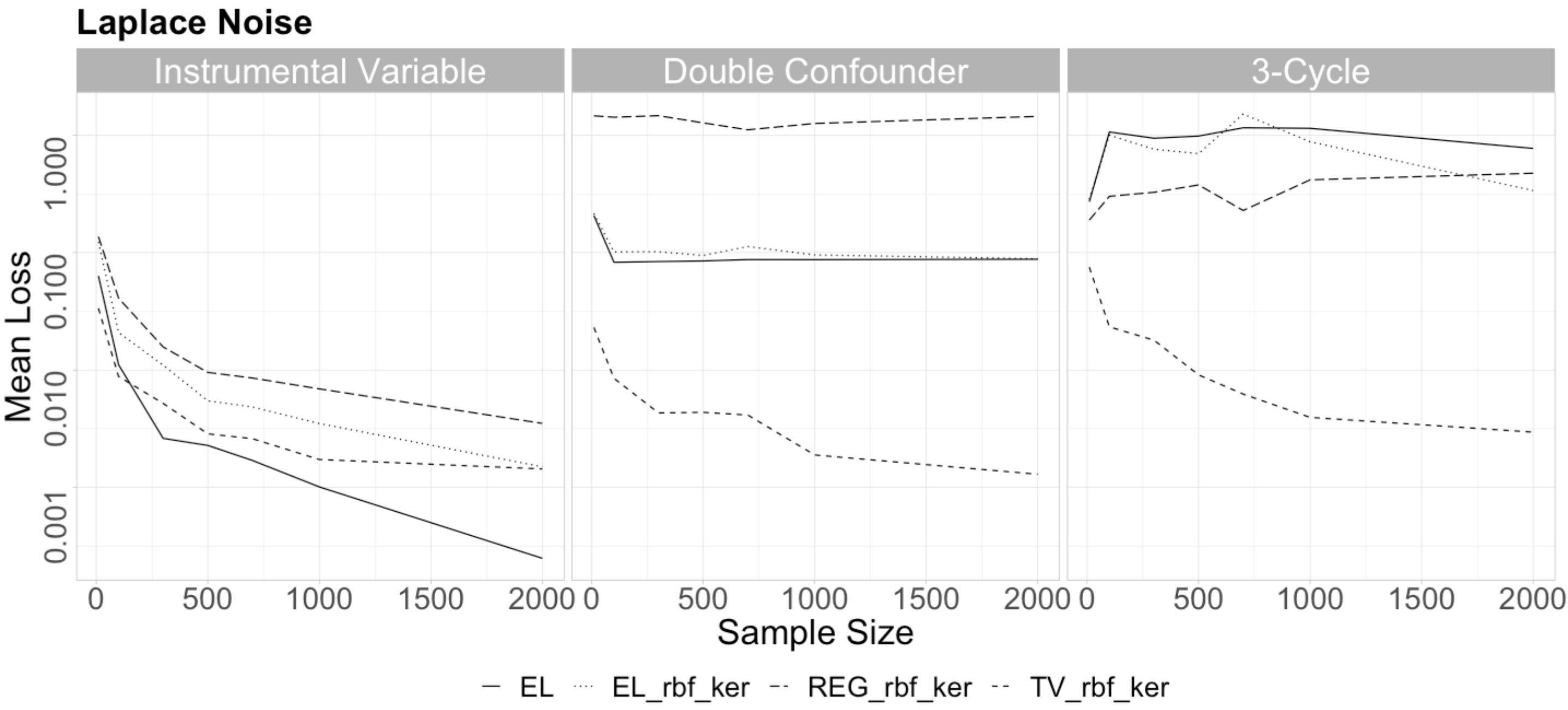}
    \caption{The performance of our proposed estimator with a Gaussian kernel, for different initial values, ``EL" stands for Empirical Likelihood, ``REG" for regression coefficient, and ``TV" for true value. We use the normalized Frobenious loss between the estimated matrix $\hat{\Lambda}$, and the true matrix $\Lambda$, i.e., $||\hat{\Lambda} - \Lambda||_F/||\Lambda||_F$, as loss function. We report the mean loss over fifty randomly sampled $\Lambda$. Notice that the $y$-axis is on a log-scale.}
    \label{fig:laplace_rbf}
\end{figure}
\cref{fig:varying_degree} shows the performance of our method as a function of the degree of the polynomial kernel. For each degree, we display a boxplot of the resulting loss values, with the sample size fixed at 500. We observe that while the algorithm remains relatively stable across degrees, higher-degree polynomials exhibit greater sensitivity to the choice of initialization. Similar to the behavior observed with the RBF kernel, more complex models (i.e., those with higher polynomial degrees) suffer more from the non-convexity of the objective function. This is reflected in the tendency of these models to achieve better performance when initialized at the true parameter value, while showing degraded performance when initialized at the regression coefficient. From a computational standpoint, lower-degree polynomials are also advantageous, as they reduce kernel evaluation complexity without significantly impacting the overall performance.
 
\begin{figure}[ht]
    \centering
    \includegraphics[scale = 0.35]{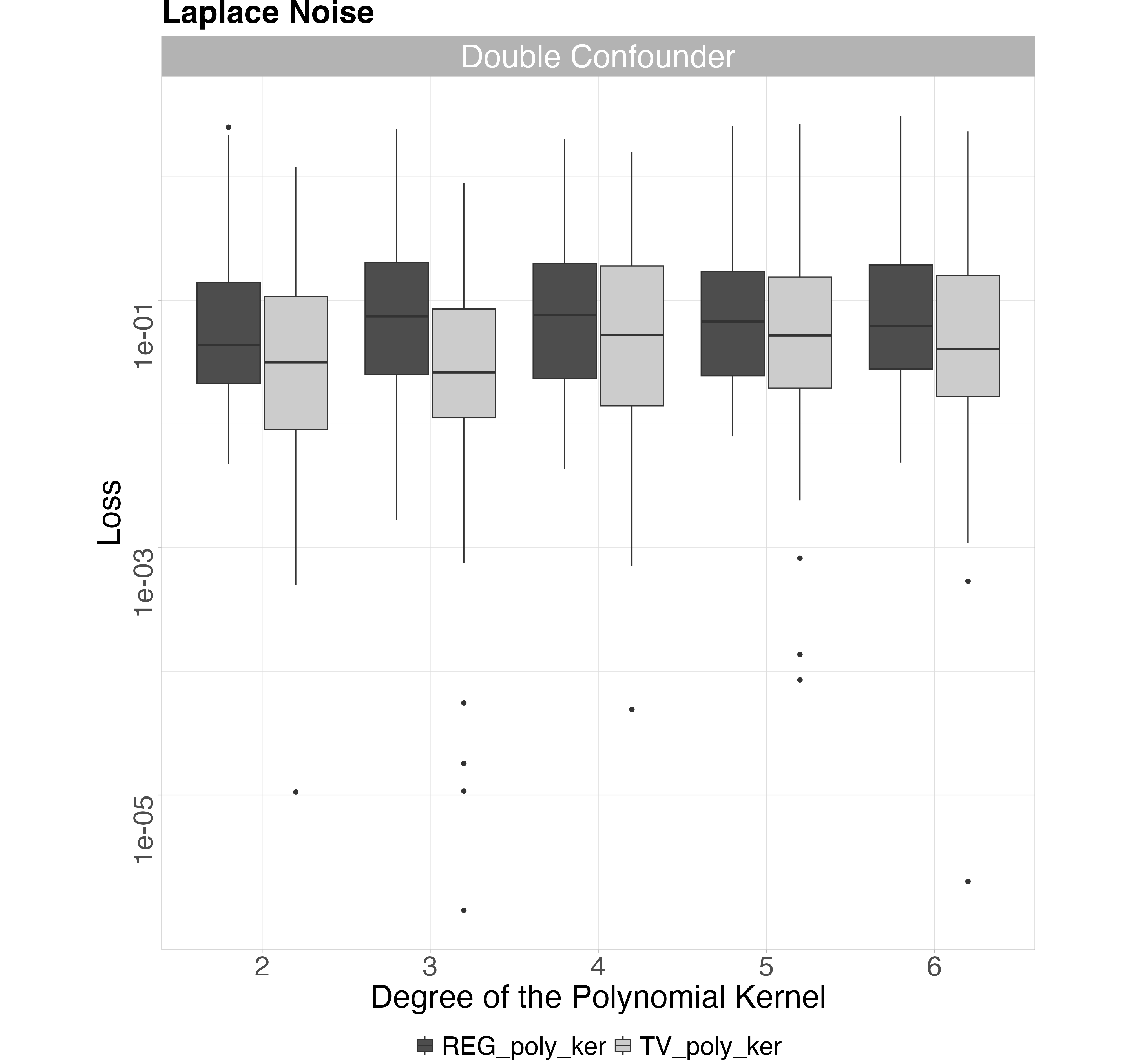}
    \caption{The performance of our proposed estimator with a varying polynomial kernel, for different initial values, ``REG" for regression coefficient, and ``TV" for true value. We use the normalized Frobenious loss between the estimated matrix $\hat{\Lambda}$, and the true matrix $\Lambda$, i.e., $||\hat{\Lambda} - \Lambda||_F/||\Lambda||_F$, as loss function. We report the boxplot of the loss over fifty randomly sampled $\Lambda$. Notice that the $y$-axis is on a log-scale. The sample size is fixed at 500.}
    \label{fig:varying_degree}
\end{figure}
\Cref{fig:one_edge_identifiable} illustrates the performance of our estimator in a case where only one of the two edges is identifiable. As expected, the estimator consistently recovers only the identifiable parameter.
\begin{figure}[ht]
    \centering
    \includegraphics[scale = 0.15]{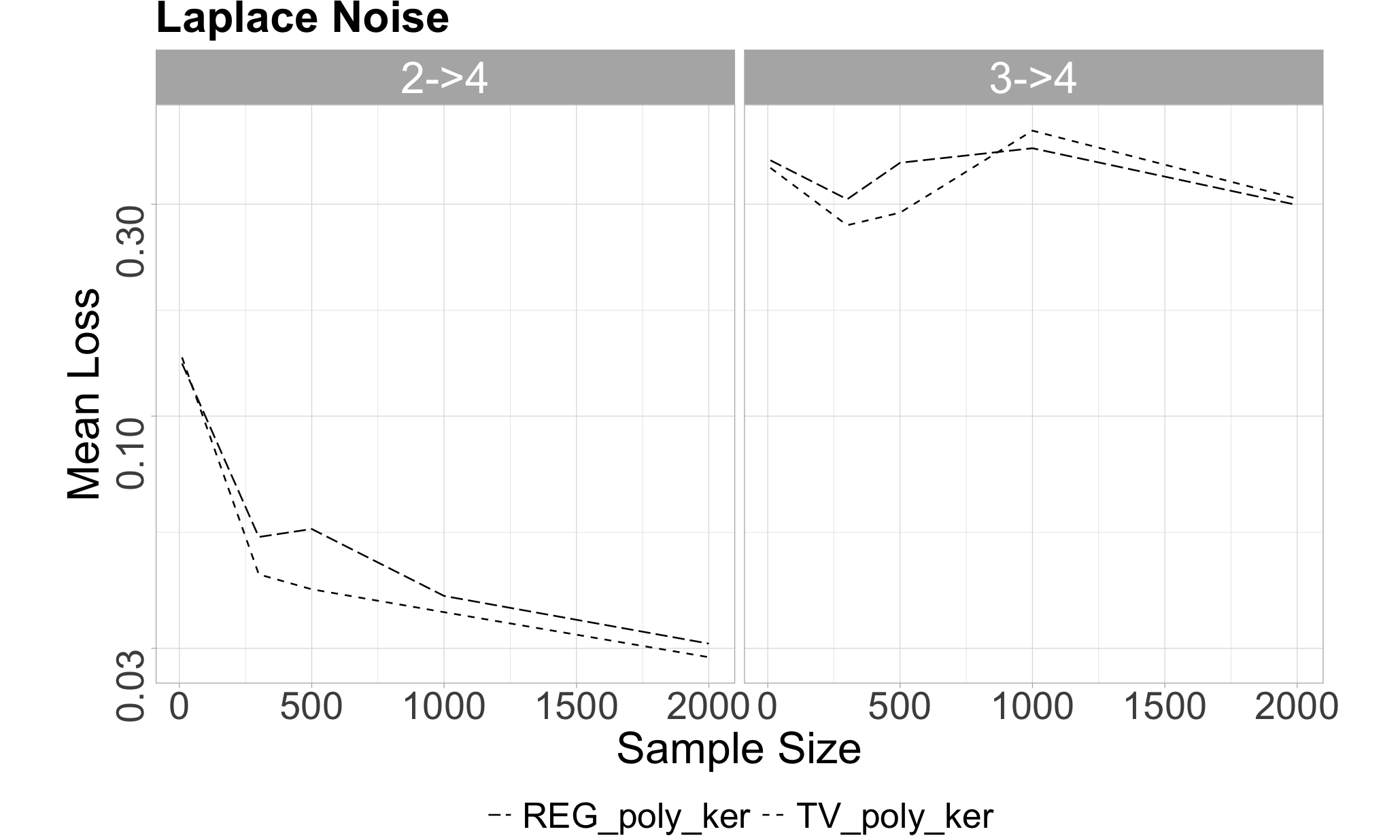}
    \includegraphics[scale = 0.15]{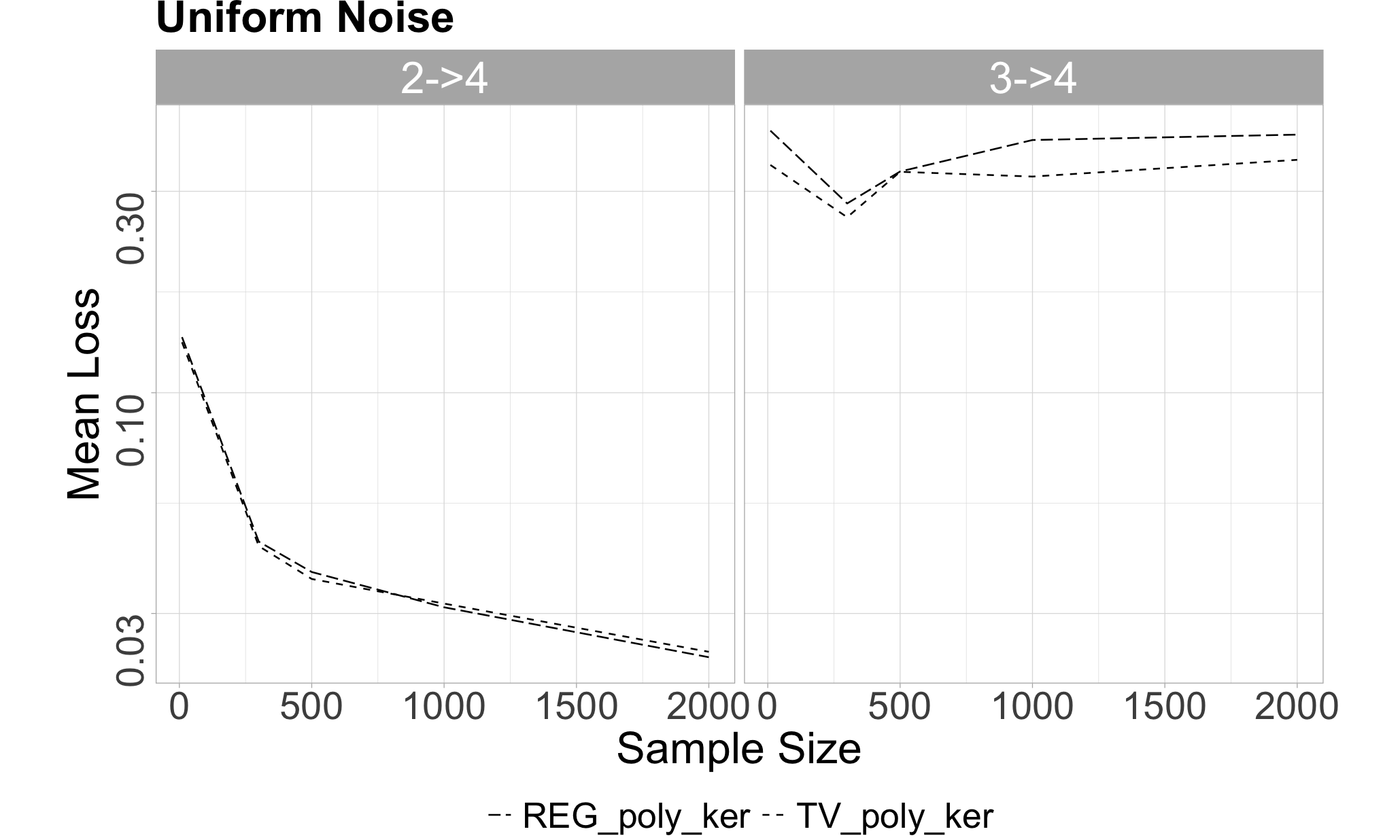}
    \caption{The performance of the proposed estimator on the graph in Fig.~4 from the main paper. We report the mean loss over fifty randomly sampled $\Lambda$. Notice that the $y$-axis is on a log-scale.}
    \label{fig:one_edge_identifiable}
\end{figure}
    
    The results for the same data-generating process, but using a uniform distribution for the error terms, are shown in \cref{fig:unif}.
    \begin{figure}[b]
    \centering
    \includegraphics[width=\linewidth]{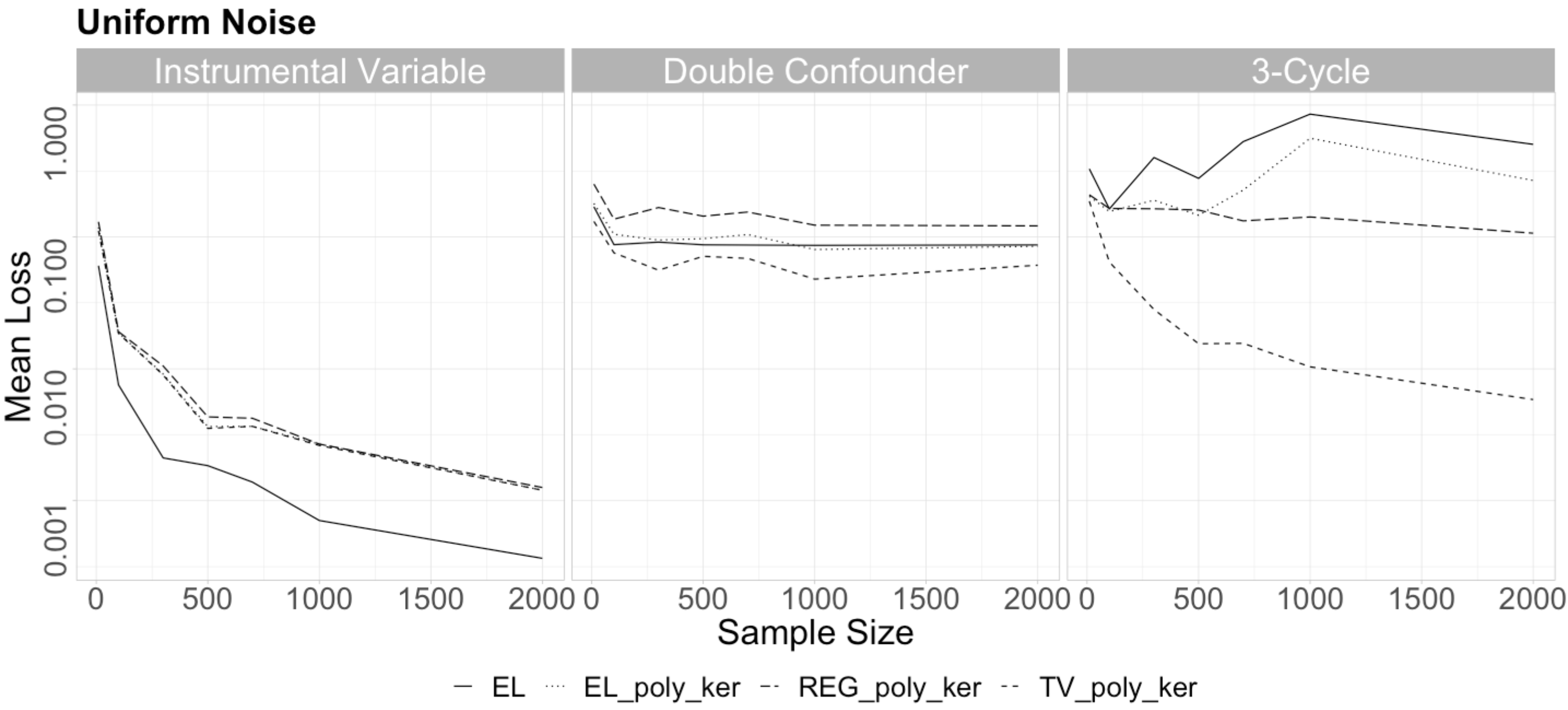}
    \includegraphics[width=\linewidth]{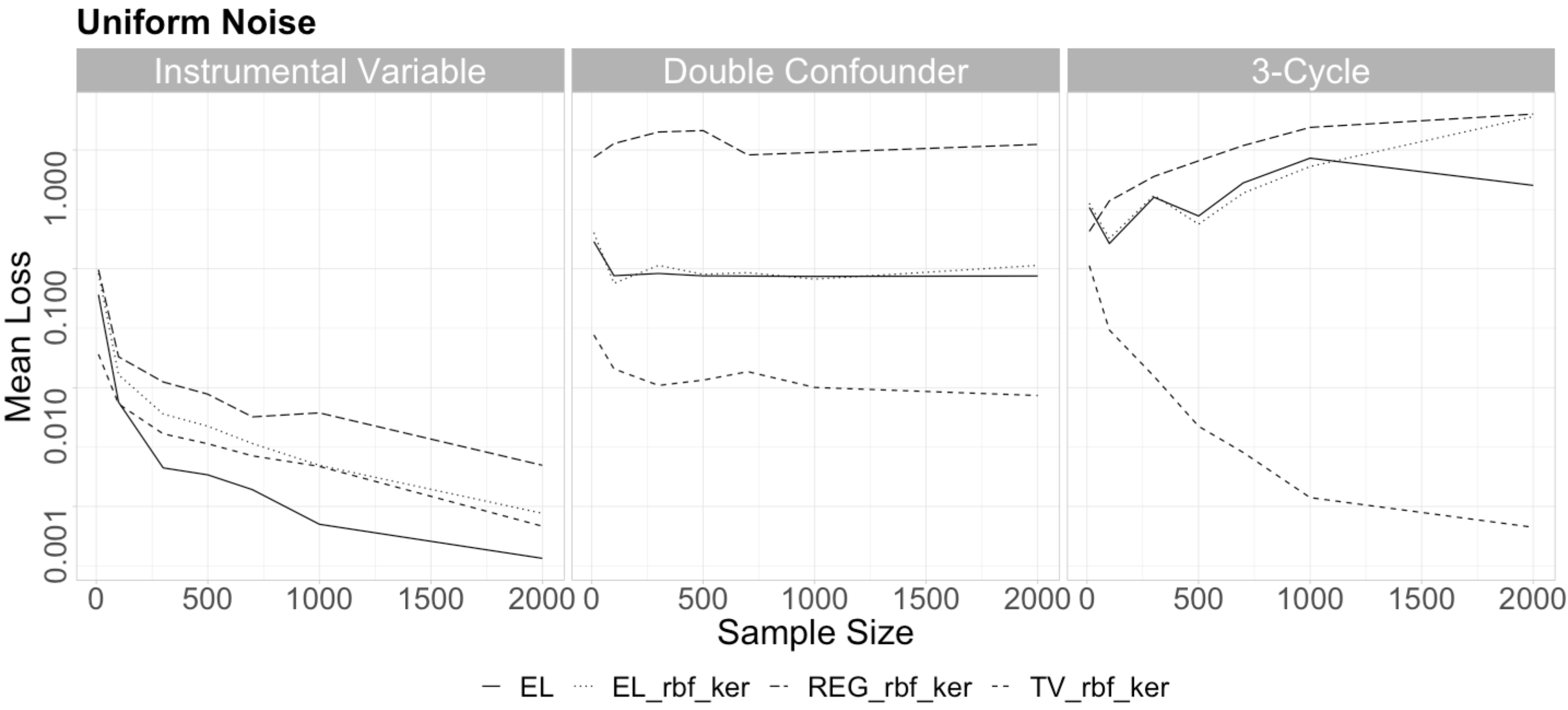}
    \caption{The performance of our proposed estimator with a polynomial kernel (top) and RBF kernel (bottom) for different initial values. ``EL" stands for Empirical Likelihood, ``REG" for regression coefficient, and ``TV" for true value. We use the normalized Frobenius loss between the estimated matrix $\hat{\Lambda}$ and the true matrix $\Lambda$, i.e., $||\hat{\Lambda} - \Lambda||_F/||\Lambda||_F$, as the loss function. We report the mean loss over fifty randomly sampled $\Lambda$. Notice that the $y$-axis is on a log-scale. The noise vector is sampled from a uniform distribution.}
    \label{fig:unif}
\end{figure}
%% if your bibliography is in bibtex format, uncomment commands:
\clearpage
\bibliographystyle{imsart-nameyear} % Style BST file
\bibliography{ref}       % Bibliography file (usually '*.bib')

\end{document}